\documentclass[12pt]{iopart}
\usepackage{iopams}
\usepackage{amsopn}
\usepackage{setstack}
\usepackage{bbm}
\usepackage[german,english]{babel}  
\usepackage[T1]{fontenc}
\usepackage{url} 
\usepackage{pst-uml}
\usepackage{dsfont}
\usepackage{amsthm}
\usepackage[latin9]{inputenc}
\newcommand{\ud}{\mathrm{d}}
\newcommand{\ui}{\mathrm{i}}
\newcommand{\ue}{\mathrm{e}}
\newcommand{\vl}{\boldsymbol{l}}
\newcommand{\gz}{{\mathbb Z}}
\newcommand{\rz}{{\mathbb R}}
\newcommand{\nz}{{\mathbb N}}
\newcommand{\kz}{{\mathbb C}}
\newcommand{\eins}{\mathds{1}}
\newcommand{\lf}{\mathfrak{l}}

\newcommand{\dk}{\frac{\ud\phantom{k}}{\ud k}}
\newcommand{\cL}{\mathcal{L}}
\newcommand{\cH}{\mathcal{H}}
\newcommand{\scp}{\left<\cdot,\cdot\right>_{\cH}}

\newcommand{\Th}{{\Theta}_{\Delta^{\frac{1}{2}}}}
\newcommand{\Zh}{{\zeta}_{\Delta^{\frac{1}{2}}}}
\newcommand{\tZh}{{\widetilde{\zeta}}_{\Delta^{\frac{1}{2}}}}

\newcommand{\T}{{\Theta}_{\Delta}}

\newcommand{\Zl}{{\zeta}_{\Delta}}

\newcommand{\Df}{\mathfrak{D}}

\newcommand{\dz}{d}

\newcommand{\be}{\begin{equation}}
\newcommand{\ee}{\end{equation}}
\newcommand{\eq}{\eqalign}
\newcommand{\Tho}{\Th^\mathrm{po}}

\newcommand{\lfp}{\lf_{p,n}}
\DeclareMathOperator{\kei}{kei}
\DeclareMathOperator{\im}{Im}
\DeclareMathOperator{\re}{Re}

\DeclareMathOperator{\Orr}{O}
\DeclareMathOperator{\orr}{o}
\DeclareMathOperator{\Lz}{L}
\DeclareMathOperator{\In}{I}
\DeclareMathOperator{\Ee}{E_1}
\DeclareMathOperator{\Hz}{H}
\DeclareMathOperator{\Dz}{D}
\DeclareMathOperator{\Pz}{P}

\DeclareMathOperator{\Pp}{PP}
\DeclareMathOperator{\Fp}{FP}
\DeclareMathOperator{\res}{Res}
\DeclareMathOperator{\Es}{E^{\ast}}

\DeclareMathOperator{\po}{\mathrm{po}}
\DeclareMathOperator{\lead}{lead}
\newtheorem{theorem}{Theorem}[section]

\newtheorem{prop}[theorem]{Proposition}
\newtheorem{cor}[theorem]{Corollary}

\newtheorem{rem}[theorem]{Remark}

%
\begin{document}
\title[An infinite quantum graph]{An exact trace formula and zeta functions for an infinite quantum graph with a non-standard Weyl asymptotics}
\author{Sebastian Egger né Endres}
\address{Institut f{\"u}r Theoretische Physik, Universit{\"a}t Ulm\newline Albert-Einstein-Allee 11, 89081 Ulm, Germany}
\ead{sebastian.endres@uni-ulm.de}
\author{Frank Steiner}
\address{Institut f{\"u}r Theoretische Physik, Universit{\"a}t Ulm\newline Albert-Einstein-Allee 11, 89081 Ulm, Germany \\ \vspace{2mm} and  \\ \vspace{2mm}
Centre de Recherche Astrophysique de Lyon, Universit\'{e} Lyon 1, CNRS UMR 5574, Observatoire de Lyon, 9 avenue Charles Andr\'{e}, 69230 Saint-Genis-Laval, France}
\ead{frank.steiner@uni-ulm.de}
\begin{abstract}
We study a quantum Hamiltonian that is given by the (negative) Laplacian and an infinite chain of $\delta$-like potentials with strength $\kappa>0$ on the half line $\rz_{\geq0}$ and which is equivalent to a one-parameter family of Laplacians on an infinite metric graph. This graph consists of an infinite chain of edges with the metric structure defined by assigning an interval $I_n=[0,l_n]$, $n\in\nz$, to each edge with length $l_n=\frac{\pi}{n}$. We show that the one-parameter family of quantum graphs possesses a purely discrete and strictly positive spectrum for each $\kappa>0$ and prove that the Dirichlet Laplacian is the limit of the one-parameter family in the strong resolvent sense. The spectrum of the resulting Dirichlet quantum graph is also purely discrete. The eigenvalues are given by $\lambda_n=n^2$, $n\in\nz$, with multiplicities $d(n)$, where $d(n)$ denotes the divisor function. We thus can relate the spectral problem of this infinite quantum graph to Dirichlet's famous divisor problem and infer the non-standard Weyl asymptotics $\mathcal{N}(\lambda)=\frac{\sqrt{\lambda}}{2}\ln\lambda +\Or\left(\sqrt{\lambda}\right)$ for the eigenvalue counting function. Based on an exact trace formula, the Vorono\"i summation formula, we derive explicit formulae for the trace of the wave group, the heat kernel, the resolvent and for various spectral zeta functions. These results enable us to establish a well-defined (renormalized) secular equation and a Selberg-like zeta function defined in terms of the classical periodic orbits of the graph, for which we derive an exact functional equation and prove that the analogue of the Riemann hypothesis is true.
\end{abstract}
\pacs{02.30.Fn, 02.30.Ik, 02.30.Lt, 02.30.Mv, 03.65.Ca, 03.65.Db}
\submitto{\JPA}
\maketitle
\section{Introduction}
\label{s1}
Quantum graphs are the subject of steadily increasing  interest since Roth \cite{Roth:1984} derived a trace formula for the heat kernel of the Laplacian with Kirchhoff boundary conditions on compact graphs. Kottos and Smilansky \cite{KottosSmilansky:1998} introduced compact quantum graphs as a model for quantum chaos and developed a more general trace formula for the Laplacian with vertex boundary conditions which depend at each vertex $v_l$ of the graph on a real parameter $\kappa_l$ (these boundary conditions are also called $\delta$-type boundary conditions \cite{Kuchment:2004,Albeverio:1988}). The choice $\kappa_l=0$ at each vertex corresponds to Kirchhoff boundary conditions (Kottos and Smilansky called them Neumann boundary conditions [see also \cite{Gnutzman:2006}]), and the choice $\kappa_l=\infty$ can be identified with Dirichlet boundary conditions. In the case where only two edges meet at a vertex $v_l$, these boundary conditions can be interpreted as being due to a ``$\delta$-like potential'' with strength $\kappa_l$ at the vertex $v_l$. We shall be faced with this situation in  section \ref{1} where we identify our Dirichlet quantum graph as such a limit of an infinite quantum graph with $\delta$-type boundary conditions at the vertices. Furthermore, in \cite{KottosSmilansky:1998} the spectral form factor and the level spacing distribution were investigated analytically and numerically and compared with the corresponding quantities of random matrix theory searching for ``characteristic signals'' of quantum chaos. Their analysis of the spectrum was continued in \cite{Berkolaiko:1999,Berkolaiko:2003,B:2003} by a sophisticated application and evaluation of the trace formula. Kostrykin and Schrader \cite{KostrykinSchrader:1999} presented a classification of all self adjoint characterizations of the Laplacian on compact graphs. The trace formula of Kottos and Smilansky was rigorously proved in \cite{Kurasov:2005} for $k$-independent boundary conditions and in \cite{BE:2008} for all self adjoint boundary conditions for the Laplacian on compact graphs by two different methods. For an excellent review on quantum graphs, in particular concerning the analysis of spectral properties like the level spacing distribution, $n$-point correlation functions or the form factor, see \cite{Gnutzman:2006}. 

However, much less is known for noncompact quantum graphs, in particular for quantum graphs with an infinite number of edges. One of the first researchers who studied Schrödinger operators on an infinite quantum graph (infinite number of edges) were Naimark and Solomyak \cite{Solomyak:2000} who proved a Weyl-type asymptotics for the Laplacian on a tree with a Dirichlet boundary condition at the root vertex and Kirchhoff boundary conditions at the remaining vertices. Solomyak \cite{Solomyak:2002} generalized this result to graphs with finite total length. Carlson \cite{Carslon:2000} was the first to prove that Laplacians with Kirchhoff boundary conditions at the vertices of graphs possessing finite total length (i.e. possessing the finite volume property) have compact resolvents and therefore possess purely discrete spectra.

In this paper we study a one-parameter family of infinite quantum graphs which do not satisfy the finite volume property.  We shall show that our quantum graph family possesses for each positive parameter $\kappa>0$ a purely discrete spectrum. Furthermore, we investigate the  Dirichlet quantum graph in detail which is the limit in the strong resolvent sense for letting the parameter $\kappa$ go to infinity. 

Kuchment \cite{Kuchment:2004} found a characterization of self adjoint realizations of the Laplacian by local boundary conditions in terms of the corresponding sesquilinear form of the Laplacian. However, he made the assumption that the lengths of the edges are uniformly bounded from below, which doesn't hold for our quantum graph. One of the most recent publications which deals with Schrödinger operators on infinite graphs is \cite{Solomyak:2009}. Therein, the number of negative eigenvalues of the Laplacian on a metric tree with a Dirichlet or Neumann boundary condition at the root and perturbed by a nonnegative potential was estimated.

Our infinite Dirichlet quantum graph system possesses as classical counterpart an integrable system, in contrast to the above mentioned systems. The quantal energy spectrum of our quantum graph is given by $\lambda_n=n^2$, $n\in\nz$, with multiplicity $d(n)$, where $d(n)$ denotes the divisor function. Thus, this quantum graph is closely related to the Dirichlet divisor problem \cite{Dirichlet:1849}, an old and intensively studied problem in number theory. Our special quantum graph system allows us to derive explicit expressions for many interesting quantities such as the trace of the wave group, the heat kernel, the resolvent and for various spectral zeta functions. 

It is remarkable that one can define for this infinite quantum graph a Selberg-like zeta function $Z(s)$ in terms of the length spectrum of the classical periodic orbits which possesses a functional equation and satisfies the analogue of the Riemann hypothesis. Furthermore, it turns out that this quantum system possesses a non-standard Weyl asymptotics for the spectral counting function $\mathcal{N}(\lambda)$ (see \eref{11c}) whose leading term agrees after an obvious rescaling of the edge lengths by a factor $\frac{1}{2\pi}$ with the leading term of the counting function for the nontrivial Riemann zeros. This is interesting for the search of the highly desired Hilbert-Polya operator, i.e. an operator which yields as eigenvalues or as wavenumbers the nontrivial Riemann zeros, and suggests that one should consider infinite quantum graphs for this purpose (see \cite{ES:2009} in this context for a discussion of finite quantum graphs equipped with the Berry-Keating operator and for references on the Hilbert-Polya operator).  

The Vorono\"i summation formula \eref{13}, well-known in number theory, plays for our physical system the role of a trace formula. The l.h.s. of the Vorono\"i summation formula \eref{13} corresponds to the quantum mechanical energy spectrum of the graph, while the r.h.s. can be split up in a ``Weyl term'' which corresponds to the three leading asymptotic terms of $N(x)$ in \eref{11}, and a ``periodic orbit term'' which invokes the geometric properties in terms of the classical periodic orbits of the graph.

Our Dirichlet quantum graph is the counterpart of the flat torus model $\left(-\Delta,\mathbb T^2\right)$ which consists of the Laplacian $-\Delta$ acting on a flat torus $\mathbb T^2:=\rz^2/\{L\gz\times L\gz\}$ characterized by a length scale $L>0$ equipped with periodic boundary conditions (see e.g. \cite{Steiner:2009}). For this quantum system the multiplicity of the $n^{th}$ eigenvalue is exactly given by the sum of squares function 
\be
\label{z109a}
r(n):= \#\left\{(m_1,m_2)\in\gz\times\gz,\quad n=m_1^2+m_2^2\right\},\quad n\in\nz.
\ee
Thus, the corresponding eigenvalue counting function is related to Gauss's circle problem, and it turns out that the formulae of Hardy \cite{Hardy:1915a,Hardy:1924,Hardy:1924a}, Landau \cite{Landau:1927a} and Vorono\"i \cite{Voronoi:1904a} play the role of the corresponding trace formula (see \cite{Steiner:2009}). However, its ``periodic orbit term'' involves a $J_0$ Bessel function in contrast to the $K_0$ and $Y_0$ Bessel functions in the trace formula \eref{13} for our Dirichlet quantum graph. 

Our paper is organized as follows. In section \ref{1} we define the quantum mechanical setting of the system and explain the connection to a one-parameter family of infinite quantum graphs, in particular we realize our Dirichlet quantum graph as a specific ``extremal'' point in this quantum graph family. Moreover, we prove that for each parameter $\kappa>0$ the spectrum of the quantum graph is purely discrete and present a necessary condition for the existence of an eigenfunction of the eigenvalue problem. In section \ref{z15} we introduce the Dirichlet quantum graph and show that the spectral multiplicities of its eigenvalues are given by the divisor function. 

The connection to the Dirichlet divisor problem and the Vorono\"i summation formula \eref{13} is made in section \ref{9}. We then proceed by examining the trace of the wave group $\Th(t)$ for this system, in particular we derive its small-$t$ asymptotics. In section \ref{30a} we calculate the generalized spectral zeta function $\Zh(s,k)$, similarly as in \cite{Steiner:1987}, which is the Mellin-Laplace transform of the trace of the wave group and investigate the poles of its meromorphic extension. Based on the analysis in section \ref{30a}, in particular due to the existence of a pole of $\Zh(s,k)$ at $s=1$, which corresponds to the trace of the resolvent, we define in section \ref{200} the ''regularized trace'' of the resolvent $\tau(k)$ and determine its asymptotics. We refine the analysis of section \ref{200} in section \ref{600}  by investigating the ``regularized spectral zeta function'' $\tZh(s,k)$. It turns out that the finite part of this function at $s=1$ exhibits a similar form as the finite part of the Hurwitz zeta function. This leads to the definition of the generalized gamma function $\widetilde{\Gamma}(k)$ whose poles are determined by the quantum mechanical wavenumbers of our graph. Furthermore, we derive a Stirling-like formula in section \ref{a1} for the generalized gamma function which is reminiscent of the Stirling formula for the ordinary gamma function. We discuss the trace of the ``generalized heat kernel'' ${\T}(t,k)$ and determine its asymptotics for $t\rightarrow0$ in section \ref{32aaaaa}. Additionally, we calculate the principal parts of the corresponding spectral zeta function ${\Zl}(s,k)$ which is now the Mellin-Laplace transform of the trace of the generalized heat kernel, and ascertain the domain of definition of the meromorphic extension of it. Section \ref{55} is devoted to the discussion of the problem of defining a secular equation for the wavenumbers in terms of the vertex scattering matrix according to the construction in \cite{KostrykinSchrader:2006b}. We show that this problem can be solved by truncating the quantum graph at an arbitrary vertex number, building the secular equation for the truncated graph according to \cite{KostrykinSchrader:2006b} and then performing a certain limit process for this secular equation by increasing the number at which the graph is truncated. This process leads to a Weierstraß product $D\left(k^2\right)$ in terms of the wavenumbers.

Furthermore, in section \ref{t1} we introduce the functional determinant  
\be
\label{e1}
\det\left(-\Delta+k^2\right):=\exp\left(\left.-\frac{\partial{\Zl}(s,k)}{\partial s}\right|_{s=0}\right)
\ee
of the unbounded operator $-\Delta+k^2$. This definition for a determinant of an operator was first introduced in \cite{Singer:1973} and applied to quantum field theory in \cite{Hawking:1977}. For a finite dimensional matrix this definition coincides with the standard determinant of the matrix. We show that the functional determinant $\det\left(-\Delta+k^2\right)$ agrees with the Weierstraß product $D\left(k^2\right)$. For a recent thorough discussion of zeta functions for finite quantum graphs with general self adjoint boundary conditions and their relation to the functional determinant see \cite{Harrison:2009} and \cite{Kirsten:2010}.
In section \ref{32aaa} we calculate the trace  $T(k)$ of the resolvent of $-\Delta$ and determine its asymptotics for $k\rightarrow0$ and $k\rightarrow\infty$ by an application of the trace formula. We proceed by constructing in section \ref{ar1} the ''periodic orbit zeta function'' $Z(s)$ in accordance with \cite{Steiner:1987} and \cite{Steiner:1991a} which resembles in many respects the famous Selberg zeta function for the hyperbolic Laplacian acting on compact Riemann surfaces. Furthermore, we define the function 
\be
\label{e3}
Z_\Delta(k):=\ue^{\ui\pi N^W(k)}Z(\ui k),
\ee
whose zeros are all symmetrically located on the ``critical line'' $\ui\rz$ and which takes real values on the critical line. In addition it is shown that the ``Weyl term'' of its zero counting function (which coincides with the ``Weyl term'' of the spectral counting function) is given by the phase of the first factor on the r.h.s. in \eref{e3}. These properties are shared by the well-known Hardy $Z$-function (see e.g. \cite[p.\,85]{Karatsuba:1992}),
%
%
%
which is related to the Riemann zeta function and is discussed in section \ref{ar1}. Additionally, we prove that the spectral counting function $N(x)$ of our Dirichlet quantum graph can be obtained by an adiabatic limit process in terms of the quantity $\frac{1}{\pi}\im\ln Z(s)$. Finally, we determine the asymptotics of $\ln\det\left(-\Delta+k^2\right)$ for $k\rightarrow 0$ and $k\rightarrow\infty$ in section \ref{500aaa} by applying results of section \ref{ar1}. 
\section{The quantum mechanical setting and a one-parameter family of Laplacians}
\label{1}
We first define a one-parameter family of quantum graphs and then identify our Dirichlet quantum graph, which is the subject of investigation in the following sections, as a certain limit of this family. 
We consider a single spinless particle of mass $m=\frac{1}{2}$ on the half line $\rz_{\geq0}$ subjected to a ``hard wall reflection'' at the  point $x_0=0$ and to a $\delta$-like potential with strength $\kappa\geq0$ at each point $x_n:=\pi\sum\limits_{m=1}^{n}\frac{1}{m}$, $n\in\nz$. Formally, the Schrödinger operator for this single particle system can be written as ($\hbar=2m=1$)
\be
\label{z1}
H=-\Delta+\kappa\sum\limits_{n=1}^{\infty}\delta\left(x-x_n\right),\quad\kappa\geq0,\quad x\geq0,
\ee
with $\Delta:=\frac{\ud^2\phantom{x}}{\ud x^2}$. In order to give a mathematically rigorous formulation of this system, we translate the hard wall reflection and the $\delta$-like potentials into local boundary conditions at the vertices of a quantum graph according to \cite{KottosSmilansky:1998,Kuchment:2004,Gnutzman:2006,Albeverio:1988}. The quantum graph $\mathfrak{G}$ consists of an infinite chain of consecutive edges $\{e_n\}_{n=1}^{\infty}$ with lengths $l_n=\frac{\pi}{n}$, $n\in\nz$, where the adjacent edges $e_{n}$ and $e_{n+1}$ are linked by a vertex $v_n$. Therefore, the total length $\cL:=\sum\limits_{n=1}^{\infty}\frac{\pi}{n}$ (also called the volume) of the quantum graph is infinite and $\mathfrak{G}$ is a noncompact quantum graph. The quantum mechanical Hilbert space $\left(\cH,\scp\right)$ for this system consists of the vector space
\begin{equation}
\label{2}
\cH:=\left\{\psi\in\bigoplus\limits_{n=1}^{\infty}\Lz^2\left[0,\frac{\pi}{n}\right];\quad \|\psi\|_{\cH}:=\left<\psi,\psi\right>_{\cH}^{\frac{1}{2}}<\infty\right\}
\end{equation}
equipped with the scalar product
\begin{equation}
\label{3}
\eqalign{
\left<\psi,\varphi\right>_{\cH} & :=\sum\limits_{n=1}^{\infty}\left<\psi_n,\varphi_n\right>_{\Lz^2\left[0,\frac{\pi}{n}\right]},\\ 
 & \psi=\bigoplus\limits_{n=1}^{\infty}\psi_n, \quad \varphi=\bigoplus\limits_{n=1}^{\infty}\varphi_n,\quad \psi_n,\varphi_n\in\Lz^2\left[0,\frac{\pi}{n}\right].
}
\end{equation}
In order to find the corresponding self adjoint Laplace operator $(\Delta,\Dz_\kappa(\Delta))$, in particular the domain of definition $\Dz_\kappa(\Delta)$ for the Hamilton operator \eref{z1}, we use the approach by sesquilinear forms of \cite{Kuchment:2004} for infinite quantum graphs. So far this approach is the only rigorous one which guarantees that the resulting operator is self adjoint. However, our edges $e_n$, $n\in\nz$, are not uniformly bounded from below by a strictly positive number which was assumed in \cite{Kuchment:2004,Albeverio:1988}. Therefore, we have in the following to generalize the setting by an additional requirement on the domain of the sesquilinear form.  

The boundary condition on the function $\psi\in\cH$ at the vertex $v_0$ corresponding to the hard wall reflection at the origin is given by a Dirichlet boundary condition \cite{KottosSmilansky:1998,Gnutzman:2006} $\psi_1(0)=0$. Due to the Hamiltonian \eref{z1}, the boundary condition at the $n^{th}$ vertex $v_n$  corresponds to (using the definition $\psi_{v_n}:=\psi_{n+1}(0)$)
\be
\label{z2}
\psi_{n}\left(\frac{\pi}{n}\right)=\psi_{n+1}(0)=\psi_{v_n}, \quad \psi'_{n+1}(0)-\psi'_{n}\left(\frac{\pi}{n}\right)=\kappa\psi_{v_n},\quad n\in\nz.
\ee
%
%
%
%
%
%
%
%
%
%
%
%
%
We have to incorporate these boundary conditions in the approach of \cite{Kuchment:2004} and then have to show that these boundary conditions together with some additional requirements define a self adjoint Laplacian on $\mathfrak{G}$. Due to the failure of a strictly positive lower bound for the edges we adapt the ``graph Sobolev function space'' in \cite{Kuchment:2004} to our infinite quantum graph $\mathfrak{G}$ by defining the modified Sobolev space
\be
\label{z5}
\fl
\eq{
&\Hz^1(\mathfrak{G}):=\\
&\left\{\psi\in\mathcal{H}; \ \psi_n\in \Hz^1\left(0,\frac{\pi}{n}\right), n\in\nz, \ \left\|\psi'\right\|_{\cH}<\infty, \ \sum\limits_{n=1}^{\infty}\left[\left|\psi_{n}(0)\right|^2+\left|\psi_{n}\left(\frac{\pi}{n}\right)\right|^2\right]<\infty\right\}.
}
\ee
In (\ref{z5}) we denote by $\psi_n$ the corresponding vector in the orthogonal decomposition of $\psi$ as in (\ref{3}), and $\Hz^1\left(0,\frac{\pi}{n}\right)$ is the Sobolev space of all functions in $\Lz^2\left[0,\frac{\pi}{n}\right]$ for which the first weak derivative is also an element of $\Lz^2\left[0,\frac{\pi}{n}\right]$ (see e.g.\ \cite{BE:2008,ES:2009}). Obviously, it holds $\Hz^1(\mathfrak{G})\subset C(\mathfrak{G})$, where  $C(\mathfrak{G})$ is the set of all continuous functions on $\mathfrak{G}$. We now define the sesquilinear form $h_\kappa$  conforming to our initial Hamiltonian \eref{z1} in accordance with \cite{Kuchment:2004,Gnutzman:2006} as
%
%
%
%
%
%
%
%
%
%
%
%
%
%
\be
\label{z6}
h_\kappa\left[\psi,\varphi\right]:=\left<\psi',\varphi'\right>_{\cH}+\kappa\sum\limits_{n=1}^{\infty}\overline{\psi_{v_n}}\varphi_{v_n},\quad \psi,\varphi\in D(h), \quad \kappa\geq0, 
\ee
where the domain of definition $D(h)$ consists of all functions $\psi\in \Hz^1(\mathfrak{G})$ fulfilling $\psi_1(0)=0,$ $\psi_{n}\left(\frac{\pi}{n}\right)=\psi_{n+1}(0)=\psi_{v_n}$ for all $n\in\nz$. 
%
%
The corresponding quadratic form $h_\kappa[\psi]:=h_\kappa[\psi,\psi]$ is positive and one can show applying the definition \eref{z5} of $\Hz^1(\mathfrak{G})$  that $h_\kappa$ is closed. Therefore, following the proofs in \cite{Kuchment:2004} there corresponds to each  $h_\kappa$, $\kappa\geq0$, a unique self adjoint Laplacian $(\Delta,\Dz_\kappa(\Delta))$ on $\mathfrak{G}$
\begin{equation}
\label{4}
\Delta\psi:=\bigoplus\limits_{n=1}^{\infty} \Delta_n\psi_n, \quad \psi\in \Dz_\kappa(\Delta),\quad  
\end{equation}
with domain of definition 
\be
\label{z7}
\hspace{-2cm}\eq{
\Dz_\kappa(\Delta)&:=\left\{\psi\in \Hz^1(\mathfrak{G}); \ \ \psi_n\in\Hz^2\left(0,\frac{\pi}{n}\right), \ \ \left\|\Delta\psi\right\|_{\mathcal{H}}<\infty, \ \ \psi_1(0)=0,\right. \\
& \hspace{1.0cm}\left.\psi_{n}\left(\frac{\pi}{n}\right)=\psi_{n+1}(0)=\psi_{v_n}, \ \  \psi'_{n+1}(0)-\psi'_{n}\left(\frac{\pi}{n}\right)=\kappa\psi_{v_n},\ \ n\in\nz \right\}.
}
\ee
In particular the required boundary conditions \eref{z2} are incorporated. In such a way one obtains a one-parameter family of self adjoint Laplacians $(\Delta,\Dz_\kappa(\Delta))$ acting on $\mathfrak{G}$. The case $\kappa=0$ corresponds to a Dirichlet boundary condition on the vertex $v_0$ and Kirchhoff boundary conditions (called Neumann boundary conditions in \cite{KottosSmilansky:1998}) on the remaining vertices. Hence, the Laplacian $(\Delta,\Dz_0(\Delta))$ is equivalent to the Laplacian $(\Delta,D_{\rz_{\geq0}})$ acting in $\Lz^2\left[0,\infty\right)$, where the domain of definition is given by $D_{\rz_{\geq0}}=\left\{\psi\in\Hz^2(0,\infty); \quad \psi(0)=0\right\}$. Therefore, $(\Delta,\Dz_0(\Delta))$ possesses a purely continuous spectrum $\sigma{(\Delta,\Dz_0(\Delta))}=[0,\infty)$. 

Since the quadratic form $h_\kappa$, $\kappa\geq0$, is positive, we infer that $(\Delta,\Dz_\kappa(\Delta))$ possesses no negative eigenvalues. Furthermore, since the general solution of the eigenvalue problem 
\be
\label{z9a}
-\Delta\varphi=\lambda\varphi, \quad \varphi\in\Dz_\kappa(\Delta), \quad \kappa>0,
\ee
is on the $n^{th}$ edge for $\sqrt{\lambda}=k^2=0$ of the form $\varphi_n(x)=a_nx+b_n$, we infer from the boundary conditions \eref{z7} by a straightforward calculation that zero is not an eigenvalue of $(\Delta,\Dz_\kappa(\Delta))$ for every $\kappa\geq0$. Moreover, since the general solution of the eigenvalue problem \eref{z9a} is for $\sqrt{\lambda}=k>0$ on the $n^{th}$ edge of the form $\varphi_n(x)=a_n\sin(kx+b_n)$, we infer again from the boundary conditions \eref{z7} that the solution of the eigenvalue problem \eref{z9a} must have the property that every component $\varphi_n$ in the decomposition \eref{3} is for every $k\in\rz$ unequal to the zero function.

Since the quantum graph $\mathfrak{G}$ consists of infinitely many edges, we cannot directly apply the results of \cite{KostrykinSchrader:1999,Kuchment:2004,Gnutzman:2006,KostrykinSchrader:2006,KostrykinSchrader:2006b,BE:2008,ES:2009} for a spectral analysis, in particular for an eigenvalue investigation . However, by a rearrangement of the rows and columns of the matrices therein some results still hold, but unfortunately not all. We shall discuss this in more detail in section \ref{55} for the Dirichlet quantum graph which corresponds to $\kappa=\infty$.

Now, we shall prove that $(\Delta,\Dz_\kappa(\Delta))$ possesses a purely discrete spectrum for every $\kappa>0$. We use a criterion of \cite[p.\,28,106]{Arendt:2006} for $(\Delta,\Dz_\kappa(\Delta)$ to possess a compact resolvent and therefore a purely discrete spectrum  \cite[p.\,6]{Arendt:2006}. In order to use this criterion, we need the following theorem.
\begin{theorem}
\label{z17}
The embedding
\be
\label{z10a}
(D(h),\left\|\cdot\right\|_{D(h)})\hookrightarrow\left(\cH,\scp\right)
\ee
with
\be
\label{z16}
\left\|\psi\right\|_{D(h)}:=\left[\left\|\psi\right\|^2_{\cH}+h_{\kappa}(\psi)\right]^{\frac{1}{2}},\quad \psi\in D(h),
\ee
is compact.
\end{theorem}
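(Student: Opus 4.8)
The plan is to exhibit the embedding \eref{z10a} as a limit, in operator norm, of embeddings that are manifestly compact, using a truncation of the quantum graph at a finite vertex number together with the classical Rellich--Kondrachov compactness on each finite interval. Concretely, for $N\in\nz$ let $\mathfrak{G}_N$ denote the truncated graph consisting of the first $N$ edges $e_1,\dots,e_N$, and split every $\psi\in D(h)$ as $\psi=\psi^{\leq N}\oplus\psi^{>N}$ according to the orthogonal decomposition \eref{3}, where $\psi^{\leq N}=\bigoplus_{n=1}^N\psi_n$ lives on $\mathfrak{G}_N$ and $\psi^{>N}=\bigoplus_{n>N}\psi_n$ on the remaining edges. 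The point of the finite total mass carried by the \emph{form} will be that the ``tail'' $\psi^{>N}$ is uniformly small in $\cH$ relative to the $D(h)$-norm, while the ``head'' $\psi^{\leq N}$ ranges in a finite direct sum of Sobolev spaces $\Hz^1(0,\frac{\pi}{n})$, each of which embeds compactly into $\Lz^2[0,\frac{\pi}{n}]$.

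The key steps, in order, are the following. First, I would record the elementary one-dimensional interpolation (Ehrling/Gagliardo--Nirenberg type) inequality on an interval of length $l$: for every $\varepsilon>0$ there is $C(\varepsilon,l)$ with $\|u\|_{\Lz^2[0,l]}^2\le \varepsilon\|u'\|_{\Lz^2[0,l]}^2+C(\varepsilon,l)\big(|u(0)|^2+|u(l)|^2\big)$, or more simply $\|u\|_{\Lz^2[0,l]}^2\le l^2\|u'\|_{\Lz^2[0,l]}^2 + 2l\,|u(0)|^2$ after writing $u(x)=u(0)+\int_0^x u'$. Summing the latter over $n>N$, using $l_n=\frac{\pi}{n}\le\frac{\pi}{N}$, gives
\be
\label{tailbound}
\big\|\psi^{>N}\big\|_{\cH}^2 \le \frac{\pi^2}{N^2}\,\|\psi'\|_{\cH}^2 + \frac{2\pi}{N}\sum_{n>N}\Big(|\psi_n(0)|^2+\big|\psi_n(\tfrac{\pi}{n})\big|^2\Big),
\ee
and since $h_\kappa(\psi)\ge\|\psi'\|_{\cH}^2$ and the vertex-value sum appearing in \eref{z5} is finite and controlled (by $\|\psi\|_{D(h)}$, via the same interval inequality applied on each edge to bound $|\psi_n(0)|^2+|\psi_n(\frac{\pi}{n})|^2$ by a constant times $\|\psi_n'\|^2+\|\psi_n\|^2$), the right-hand side of \eref{tailbound} is at most $\omega(N)\|\psi\|_{D(h)}^2$ with $\omega(N)\to0$ as $N\to\infty$, uniformly in $\psi$. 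This is the heart of the matter and exactly where the decay $l_n=\frac{\pi}{n}\to0$ is used; it replaces the uniform-lower-bound hypothesis of \cite{Kuchment:2004}.

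Second, I would handle the head: the map $\psi\mapsto\psi^{\leq N}$ from $(D(h),\|\cdot\|_{D(h)})$ into $\bigoplus_{n=1}^N\Hz^1(0,\frac{\pi}{n})$ (with its natural norm) is bounded, the inclusion $\bigoplus_{n=1}^N\Hz^1(0,\frac{\pi}{n})\hookrightarrow\bigoplus_{n=1}^N\Lz^2[0,\frac{\pi}{n}]$ is compact by Rellich on each of the finitely many intervals, and composing with the (bounded) inclusion $\bigoplus_{n=1}^N\Lz^2[0,\frac{\pi}{n}]\hookrightarrow\cH$ shows that the ``head embedding'' $\iota_N\colon\psi\mapsto\psi^{\leq N}\oplus 0\in\cH$ is compact. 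Finally, I would combine the two pieces: letting $\iota\colon D(h)\hookrightarrow\cH$ be the embedding in question, \eref{tailbound} and the estimate above give $\|\iota-\iota_N\|_{\mathcal{B}(D(h),\cH)}^2\le\omega(N)\to0$, so $\iota$ is a norm-limit of compact operators and hence itself compact. One should take care to verify the closedness of $D(h)$ and that $\|\cdot\|_{D(h)}$ is a genuine norm making $(D(h),\|\cdot\|_{D(h)})$ a Banach (Hilbert) space, which has essentially already been asserted in the text via closedness of $h_\kappa$; and to note that for $\kappa=0$ the extra $\kappa$-term drops out but the argument is unchanged since only $h_\kappa(\psi)\ge\|\psi'\|_{\cH}^2$ and the membership $\psi\in\Hz^1(\mathfrak{G})$ (which already forces the vertex-value sum to be finite) were used. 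The main obstacle is purely bookkeeping: making the constant in front of the vertex-value sum in \eref{tailbound} uniform in $\psi$, i.e. bounding $\sum_{n\in\nz}\big(|\psi_n(0)|^2+|\psi_n(\frac{\pi}{n})|^2\big)$ by $C\|\psi\|_{D(h)}^2$ with $C$ independent of $\psi$ — this follows from the interval inequality edge by edge but must be summed carefully against $\|\psi'\|_{\cH}^2+\|\psi\|_{\cH}^2$, again exploiting $l_n\le\pi$.
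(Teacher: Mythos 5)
Your overall architecture is sound and genuinely different from the paper's: you split $\psi$ into a head on the first $N$ edges and a tail, use Rellich on the finitely many intervals for the head, show the tail is small in $\cH$ uniformly with respect to $\left\|\psi\right\|_{D(h)}$, and conclude that the embedding is an operator-norm limit of compact operators, whereas the paper argues sequentially (Morrey's embedding on each edge, a Cantor diagonal subsequence, and then a direct tail estimate). However, the step on which everything hinges is justified incorrectly. You claim that $\sum_n\left(|\psi_n(0)|^2+|\psi_n(\frac{\pi}{n})|^2\right)\leq C\left\|\psi\right\|^2_{D(h)}$ follows from a trace inequality applied edge by edge, bounding $|\psi_n(0)|^2+|\psi_n(\frac{\pi}{n})|^2$ by a constant times $\|\psi_n'\|^2+\|\psi_n\|^2$. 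No $n$-independent constant of this kind exists: testing with a constant function on the $n^{th}$ edge shows that the coefficient of $\|\psi_n\|^2_{\Lz^2[0,\frac{\pi}{n}]}$ must grow at least like $\frac{2n}{\pi}$, so summing produces $\sum_n n\|\psi_n\|^2$, which is not dominated by $\|\psi\|^2_{\cH}+\|\psi'\|^2_{\cH}$. In fact no bound of the vertex-value sum by $\|\psi\|^2_{\cH}+\|\psi'\|^2_{\cH}$ alone can hold: take $\psi^{(N)}$ piecewise linear with vertex values equal to $1$ for $N\leq n\leq 2N$, ramping linearly in the vertex index from $0$ between $n\approx N/2$ and $N$ and back to $0$ between $2N$ and $\approx 4N$; then $\|\psi^{(N)}\|^2_{\cH}+\|{\psi^{(N)}}'\|^2_{\cH}$ stays bounded as $N\rightarrow\infty$, while $\sum_n|\psi^{(N)}_{v_n}|^2\geq N$. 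The same family refutes your parenthetical claim that the argument is unchanged for $\kappa=0$: choosing the $N$'s so that the supports are disjoint gives a sequence bounded in the $\kappa=0$ form norm, lying in $D(h)$, with $\cH$-norms bounded away from zero and mutual distances bounded below, hence without any $\cH$-convergent subsequence. This is consistent with the fact that $(\Delta,\Dz_0(\Delta))$ has purely continuous spectrum, which is exactly why theorem \ref{z25} is restricted to $\kappa>0$; membership in $\Hz^1(\mathfrak{G})$ makes the vertex sum finite for each fixed $\psi$, but gives no uniform control in terms of the norm, which is what your operator-norm estimate needs.

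The repair is immediate and is precisely the point where the paper uses $\kappa>0$ (see \eref{z19}): since $\psi_1(0)=0$ and adjacent edges share their vertex values, one has $\sum_n\left(|\psi_n(0)|^2+|\psi_n(\frac{\pi}{n})|^2\right)\leq 2\sum_n|\psi_{v_n}|^2\leq\frac{2}{\kappa}h_\kappa(\psi)\leq\frac{2}{\kappa}\left\|\psi\right\|^2_{D(h)}$. Inserting this into your tail estimate yields $\|\psi^{>N}\|^2_{\cH}\leq\left(\frac{\pi^2}{N^2}+\frac{4\pi}{\kappa N}\right)\left\|\psi\right\|^2_{D(h)}$, uniformly in $\psi$, and the remainder of your argument (compactness of the head via Rellich on finitely many intervals, then $\iota_N\rightarrow\iota$ in operator norm) closes the proof for every $\kappa>0$. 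With that single substitution your proof is correct and constitutes a clean alternative to the paper's sequential-compactness argument; without it, the crucial uniform tail bound is unsupported and the claimed extension to $\kappa=0$ is false.
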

\begin{proof}
Let a sequence $(\psi^m)_{m\in\nz}$ with $\left\|\psi^m\right\|_{D(h)}\leq M$, $\psi^m\in D(h)$ for all $m\in\nz$, $M>0$ be given. We have to prove that there exists a subsequence $\left(\psi^{m_j}\right)_{j\in\nz}$ of $(\psi^m)_{m\in\nz}$ and a $\psi\in\cH$ with
\be
\label{z18}
\psi^{m_j}\rightarrow\psi,\quad j\rightarrow\infty\quad  \mbox{in} \quad \cH.
\ee
It follows immediately by the embedding theorem of Morrey \cite[p.\,118]{Dobrowolski:2006} that there exists for each $n\in\nz$ a $\psi_n\in C^{0,\alpha}\left[0,\frac{\pi}{n}\right]\subset\Lz\left[0,\frac{\pi}{n}\right]$, $0<\alpha<\frac{1}{2}$, ($C^{0,\alpha}\left[0,\frac{\pi}{n}\right]$ is the set of Hölder continuous functions on $\left[0,\frac{\pi}{n}\right]$ with exponent $\alpha$) such that for each $\left(\psi^m_n\right)_{m\in\nz}$ (the component of $\left(\psi^m\right)_{m\in\nz}$ on the $n^{th}$ edge) there exists a subsequence $\left(\psi^{m_l}_n\right)_{l\in\nz}$  with 
\be
\label{z17a}
\hspace{-2cm}\psi^{m_l}_n\rightarrow\psi_n, \quad l\rightarrow\infty \quad \mbox{in} \quad C^{0,\alpha}\left[0,\frac{\pi}{n}\right] \ \ \mbox{and in} \ \ \Lz^2\left[0,\frac{\pi}{n}\right], \ \ n\in\nz, \ \ 0<\alpha<\frac{1}{2}.
\ee
Thus, by Cantor's diagonal argument we can find a subsequence $\left(\psi^{m_j}\right)_{j\in\nz}$ of $(\psi^m)_{m\in\nz}$ for which every component $\psi^{m_j}_n$ fulfill \eref{z17a} . We define $\psi:=\oplus_{n=1}^{\infty}\psi_n$ with $\psi_n$ as in \eref{z17a}. We shall prove that it holds $\psi\in\cH$ and furthermore that $\left(\psi^{m_j}\right)_{j\in\nz}$ and $\psi$ fulfill \eref{z18}. 

Due to $\eref{z17a}$ the boundary values $\psi_n^{m_j}(0)$ and $\psi_n^{m_j}\left(\frac{\pi}{n}\right)$ converge to $\psi_n(0)$ and $\psi_n\left(\frac{\pi}{n}\right)$ for $j\rightarrow\infty$. Due to the continuity condition of $\psi^{m_j}$ for every $j\in\nz$ at the vertices $v_l$, $l\in\nz$, we deduce that $\psi$ is also continuous at these vertices. A similar argument yields $\psi_1(0)=0$. Thus, we define as above $\psi_{v_n}:=\psi_{n+1}(0)$, $n\in\nz$.

We shall first prove $\sum\limits_{n=1}^{\infty}\left|\psi_{v_n}\right|^2<\infty$.  For $0<l<p$ it follows by the assumption on $\psi^{m_j}$ and the triangle inequality that it holds for every $\epsilon>0$ ($j$ large enough, $\kappa>0$)
\be
\label{z19}
\left[\sum\limits^{p}_{n=l}\left|\psi_{v_n}\right|^2\right]^{\frac{1}{2}} \leq\left[\sum\limits^{p}_{n=l}\left|\psi_{v_n}-\psi^{m_j}_{v_n}\right|^2\right]^{\frac{1}{2}}+\left[\sum\limits^{p}_{n=l}\left|\psi^{m_j}_{v_n}\right|^2\right]^{\frac{1}{2}}\leq\epsilon+\frac{M}{\kappa}.
\ee
This proves $\sum\limits_{n=1}^{\infty}\left|\psi_{v_n}\right|^2<\infty$. 

From the proof of Morrey's inequality in Satz 6.26 of \cite[p.\,119]{Dobrowolski:2006} and from the prerequisite $\left\|\psi^{m_j}\right\|_{D(h)}\leq M$ for all $j\in\nz$, it follows that one can uniformly estimate in $n\in\nz$ (the constant $c>0$ doesn't grow with $n$)
\be
\label{z20}
\hspace{-2cm}\left|\psi^{m_j}_n(x)-\psi^{m_j}_n(y)\right|\leq c\left|x-y\right|^{\frac{1}{2}}\left\|
{\psi^{m_j}_n}'\right\|_{\Lz\left[0,\frac{\pi}{n}\right]}\leq cM \left|x-y\right|^{\frac{1}{2}},\quad x,y\in\left[0,\frac{\pi}{n}\right].
\ee  
Since $\psi^{m_j}_n(x)\rightarrow\psi_n(x)$ for $j\rightarrow\infty$ in $\kz$ (pointwise) for every $n\in\nz$ and fixed $x\in\left[0,\frac{\pi}{n}\right]$, we infer that the estimate \eref{z20} between the first and last term also holds for $\psi_n$, $n\in\nz$, (the middle term in \eref{z20} doesn't generally exist in this case).

We calculate using \eref{z20}, $1<l<p$, 
\be
\label{z21}
\hspace{-2cm}\eq{
\left\|\psi\right\|^2_{l,p}:=\sum\limits_{n=l}^{p}\int\limits_{0}^{\frac{\pi}{n}}\left|\psi_n(x)\right|^2\ud x&\leq\sum\limits_{n=l}^{p}\int\limits_{0}^{\frac{\pi}{n}}\left|\left|\psi_{v_n}\right|+cMx^{\frac{1}{2}}\right|^2\ud x\\
 &= \sum\limits_{n=l}^{p}\left[\frac{\pi}{n}\left|\psi_{v_n}\right|^2+\frac{4}{3}cM\left(\frac{\pi}{n}\right)^{\frac{3}{2}}\left|\psi_{v_n}\right|+\frac{1}{2}c^2M^2\left(\frac{\pi}{n}\right)^2\right]\\
 &=: C_{l,p}, 
}
\ee
where 
\be
\label{z22}
\lim_{p\rightarrow\infty}C_{l,p}=:C_{l} \ \ \mbox{exists} \ \ \mbox{and} \ \ C_{l}\rightarrow0, \quad l\rightarrow\infty.
\ee
This proves $\psi\in\cH$. 

By an analogous calculation for $\psi^{m_j}$ as in \eref{z1} and since $\left\|\psi^{m_j}\right\|_{D(h)}\leq M$ for all $j\in\nz$, we can deduce that it also holds (the constant $c$ in \eref{z20} can be chosen independently of $j\in\nz$ \cite[p.\,119]{Dobrowolski:2006})
\be
\label{z23}
\left\|\psi^{m_j}\right\|^2_{l,p}\leq\widetilde{C}_{l,p}, \quad \mbox{for all} \quad j\in\nz,\quad l<p,
\ee
where $\widetilde{C}_{l,p}$ fulfills also \eref{z22}. We then derive
\be
\label{z24}
\left\|\psi^{m_j}-\psi\right\|^2_{\cH}=\sum\limits_{n=1}^l\left\|\psi^{m_j}_n-\psi_n\right\|^2_{\Lz\left[0,\frac{\pi}{n}\right]}+C_{l}+\widetilde{C}_{l},\quad l>1.
\ee
Due to \eref{z17a} and \eref{z22} we infer $\psi^{m_j}\rightarrow\psi$ for $j\rightarrow\infty$ in $\cH$. This proves the theorem \ref{z17}.
\end{proof} 
%
From theorem \ref{z17} we finally obtain with \cite[p.\,6,28,106]{Arendt:2006} 
\begin{theorem}
\label{z25}
The infinite quantum graph $\mathfrak{G}$ defined by the Hamiltonian $(\Delta,\Dz_\kappa(\Delta))$, where $\Delta$ denotes the Laplacian $\eref{4}$ with domain of definition $D_{\kappa}(\Delta)$ defined in $\eref{z7}$, possesses for each $\kappa>0$ a compact resolvent and therefore a purely discrete, strictly positive spectrum.
\end{theorem}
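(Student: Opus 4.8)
The plan is to read Theorem~\ref{z25} off Theorem~\ref{z17} via the standard form-domain characterisation of operators with compact resolvent, and then to upgrade ``purely discrete'' to ``purely discrete and strictly positive'' using facts already established in this section. First I would recall that $h_\kappa$ is a densely defined, symmetric, positive and closed sesquilinear form on $\cH$ (positivity and closedness having been noted just above Theorem~\ref{z17}), so that by the first representation theorem it is the form of a unique self adjoint, nonnegative operator, which is precisely $(\Delta,\Dz_\kappa(\Delta))$ of \eref{4}--\eref{z7}. The criterion of \cite[p.\,28,106]{Arendt:2006} then says that such an operator has a compact resolvent if and only if the form domain $D(h)$, equipped with the form norm $\left\|\cdot\right\|_{D(h)}$ of \eref{z16}, embeds compactly into $\left(\cH,\scp\right)$. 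Theorem~\ref{z17} establishes exactly this compact embedding for every $\kappa>0$, so $(\Delta,\Dz_\kappa(\Delta))$ has a compact resolvent, and a self adjoint operator with compact resolvent has purely discrete spectrum, i.e.\ isolated eigenvalues of finite multiplicity accumulating only at $+\infty$ \cite[p.\,6]{Arendt:2006}.

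It then remains to check strict positivity. Since $h_\kappa$ is a positive form, $(\Delta,\Dz_\kappa(\Delta))$ is nonnegative, hence $\sigma(\Delta,\Dz_\kappa(\Delta))\subset[0,\infty)$. As the spectrum is now known to be purely discrete, its infimum $\lambda_{\min}:=\inf\sigma(\Delta,\Dz_\kappa(\Delta))$ is attained and is an eigenvalue of finite multiplicity. But it was already shown above --- from the explicit form $\varphi_n(x)=a_nx+b_n$ of the solutions of \eref{z9a} at $k=0$ together with the boundary conditions in \eref{z7} --- that $0$ is not an eigenvalue of $(\Delta,\Dz_\kappa(\Delta))$ for any $\kappa\geq0$. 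Therefore $\lambda_{\min}>0$ and $\sigma(\Delta,\Dz_\kappa(\Delta))\subset(0,\infty)$, which completes the proof.

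I do not expect a genuine obstacle at this stage: all the analytic substance --- in particular the use of Morrey's inequality with a constant uniform in the edge index $n$ and the role of $\kappa>0$ in taming the vertex contributions through \eref{z19} --- is already contained in Theorem~\ref{z17}, which may be assumed. The only points deserving a word of care are that $\left\|\cdot\right\|_{D(h)}$ really is an admissible (Hilbert) norm on $D(h)$, which rests on the positivity and closedness of $h_\kappa$ noted in the text, and that the abstract criterion of \cite{Arendt:2006} applies verbatim even though $\mathfrak{G}$ has infinitely many edges and no uniform positive lower bound on the edge lengths; this causes no difficulty precisely because the criterion is purely functional-analytic, the concrete geometry of $\mathfrak{G}$ having already been absorbed into the proof of Theorem~\ref{z17}.
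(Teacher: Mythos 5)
Your proposal is correct and follows essentially the same route as the paper: the paper likewise deduces the compact resolvent directly from Theorem \ref{z17} via the form-domain criterion of \cite[p.\,6,28,106]{Arendt:2006}, and strict positivity rests on the facts stated just before Theorem \ref{z17} (positivity of $h_\kappa$ excluding negative eigenvalues, and the explicit $k=0$ solutions $\varphi_n(x)=a_nx+b_n$ with the boundary conditions \eref{z7} excluding the eigenvalue zero). Your additional remark that discreteness makes the spectral infimum an attained eigenvalue is a correct and welcome filling-in of the paper's terse statement.
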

Finally, we want to present a necessary quantization condition for the existence of an eigenfunction $\varphi_k =\oplus_{n=1}^{\infty}\varphi_{k,n}$ with $k\notin\left\{mn;\ mn\in\nz_0\right\}$ (see \cite{Kuchment:2004,Gnutzman:2006}) and
\be
\label{z10}
\hspace{-2.5cm}
\varphi_{k,n}(x) :=\frac{1}{\sin\left(k\frac{\pi}{n}\right)}\left(a_{k,n}\sin(kx)+b_{k,n}\sin\left(k\left(\frac{\pi}{n}-x\right)\right)\right),\quad x\in\left[0,\frac{\pi}{n}\right],\quad n\in\nz,  
\ee
of the eigenvalue problem \eref{z9a} with $\lambda=k^2$. (For each $k\in\left\{mn;\ mn\in\nz\right\}$ one only has to modify the following expressions on a finite set of edges, but this can be treated in a similar way). Requiring the boundary conditions \eref{z2} at all vertices of the graph $\mathfrak{G}$, we obtain the relations
\be
\label{z11}
\eq{
\phantom{aaaaaa} b_{k,1} &=0\\
\left(
\begin{array}{c}
a_{k,n+1} \\
b_{k,n+1}
\end{array}
\right)&=\left(
\begin{array}{cc}
A_n(k) & B_n(k)\\
1 & 0 
\end{array}
\right)
\left(
\begin{array}{c}
a_{k,n} \\
b_{k,n}
\end{array}
\right),
}
\ee
with $n\in\nz$, $k\notin\left\{mn;\ mn\in\nz_0\right\}$,
\be
\label{z12}
\fl A_n(k) :=\left[\frac{\kappa}{k}+\cot\left(k\frac{\pi}{n}\right)+\cot\left(k\frac{\pi}{n+1}\right)\right]\sin\left(k\frac{\pi}{n+1}\right), \quad B_n(k) :=-\frac{\sin\left(k\frac{\pi}{n+1}\right)}{\sin\left(k\frac{\pi}{n}\right)}.
\ee
Using $b_{k,n}=a_{k,n-1}$ the solutions of relation \eref{z11} are equivalent with the solutions of the three-term recursion relation
\be
\label{z13}
a_{k,n+1}=A_n(k)a_{k,n}+B_n(k)a_{k,n-1}, \quad n\geq1,\quad k\notin\left\{mn;\ mn\in\nz\right\},
\ee
with the starting values \eref{z7}
\be
\label{z14}
b_{k,1}:=a_{k,0}=0.
\ee
We conclude that every solution of \eref{z9a} for $\sqrt{\lambda}=k\notin\left\{mn;\ mn\in\nz\right\}$ is completely determined by the two conditions \eref{z13} and \eref{z14}.

It is worthwhile to note that the quadratic form $h_\kappa$ converges for $\kappa\rightarrow\infty$ ``from below'' \cite[p.\,461]{Kato:1980} to a quadratic form whose corresponding sequilinear form is given by
\be
\label{z8}
\hspace{-2cm}h[\psi,\varphi]=\left<\psi',\varphi'\right>_{\cH},\quad \psi,\varphi\in\Hz^1(\mathfrak{G}), \quad \psi_{n}(0)=\psi_n\left(\frac{\pi}{n}\right)=0,\quad n\in\nz.
\ee
Hence, we obtain for the related operators \cite[p.\,455]{Kato:1980} 
\be
\label{z9}
(\Delta,\Dz_\kappa(\Delta))\rightarrow \left(\Delta,\Dz(\Delta)\right),\quad\kappa\rightarrow\infty\quad \mbox{in the strong resolvent sense},
\ee
where on the r.h.s. the Dirichlet Laplacian is given with domain of definition 
\begin{equation}
\label{5}
\hspace{-2.3cm}\Dz(\Delta):=\left\{\psi\in\cH;\ \psi_{n}\in\Hz^2\left(0,\frac{\pi}{n}\right), \  \left\|\Delta \psi\right\|<\infty,
\ \psi_n(0)=\psi_n\left(\frac{\pi}{n}\right)=0, \ n\in\nz\right\}.
\end{equation}
%
That one can omit in \eref{5} the requirement $\left\|\psi'\right\|_{\cH}<\infty$ follows from a suitable Poincaré inequality. 

It would be very interesting to investigate the asymptotic behaviour of the spectrum $\sigma\left((\Delta,\Dz_\kappa(\Delta))\right)$ for $\kappa>0$ in the limit $\lambda\rightarrow\infty$, but we shall discuss this in a future publication.
%
\section{The Dirichlet quantum graph}
\label{z15}
We observe that for the Dirichlet quantum graph $(\Delta,\Dz(\Delta))$ due to the Dirichlet boundary conditions on all interval ends there is no ``interaction'' between the edges. Thus, the limit process $\kappa\rightarrow\infty$ for the one-parameter family $(\Delta,\Dz_\kappa(\Delta))$ ``isolates'' the edges of the graph $\mathfrak{G}$. The S-matrix $S$ in the sense of \cite{KottosSmilansky:1998,KostrykinSchrader:1999} of $\left(-\Delta,\Dz(\Delta)\right)$ corresponds to the identity operator $-\eins$ on $l^2$ being the set of all square summable sequences of real numbers. For compact quantum graphs with Dirichlet boundary conditions the assigned quantum graph as discussed in \cite{KostrykinSchrader:2006} (the unique quantum graph with maximal vertex numbers for which the S-matrix is local) is simply the set of edges with no common vertex for two different interval ends. This corresponds to a ``hard wall reflection'' of the particle at the interval ends (see \cite{ES:2009}).  The ``hard wall'' interpretation of the Dirichlet boundary conditions still holds for the infinite Dirichlet quantum graph. 

For this simple system the stationary Schrödinger equation
\begin{equation}
\label{6}
-\Delta\psi=\lambda\psi, \quad \psi\in \Dz(\Delta),
\end{equation}
is explicitly solvable. The normalized eigenvectors $\psi_{n,m}$ are expressed by the orthogonal decomposition as in (\ref{3}) by ($x\in\left[0,\frac{\pi}{l}\right]$)
\begin{equation}
\label{7}
\hspace{-2cm}\Pz_l\left({\psi_{n,m}}\right)(x) :=\delta_{ln}\sqrt{\frac{2n}{\pi}}\sin \left(k_{n,m}x\right)\quad \mbox{with}\quad k_{n,m}:=nm,\quad l,n,m\in\nz,
\end{equation}
where we have introduced the projector $\Pz_l$ from $\cH$ onto $\Lz^2\left[0,\frac{\pi}{l}\right]$ (see (\ref{3})) and the wavenumbers $k_{n,m}:=+\sqrt{\lambda_{n,m}}$. Since the set $\left\{\psi_{n,m};\quad m=1,\ldots,\infty\right\}$ ($n$ fixed) forms an orthonormal basis of $\Lz^2\left[0,\frac{\pi}{n}\right]$, we immediately infer that the eigenvectors $\psi_{n,m}$ form a complete orthonormal basis of $\cH$. Therefore, the spectrum $\sigma\left(\left(-\Delta,\Dz(\Delta)\right)\right)$ is purely discrete (see e.g.\ \cite{Reed:1980}) and the set of the corresponding wavenumbers is identical with the set of the natural numbers $\nz$. The multiplicity of the eigenvalue $\lambda=k^2$ respectively of the corresponding wavenumber $k$ is given by the divisor function $\dz(k)$ which counts the divisors of $k$, unity and $k$ itself included, i.e.\ it is defined as 
\begin{equation}
\label{8}
\dz(k):=\#\left\{(n,m)\in\nz\times\nz;\quad nm=k\right\}, \quad k\in\nz.
\end{equation}
Note that the divisor function $d(k)$, $d(1)=1$, $d(2)=2$, $d(3)=2$, $d(4)=3$, $d(5)=2$, $d(6)=4$, $\ldots$, with $d(p)=2$ for $p$ prime, is a very irregular function with asymptotic behaviour 
\begin{equation}
\label{9a}
d(k)=\Or\left(k^{\epsilon}\right),\quad k\rightarrow\infty \quad \mbox{for any} \quad \epsilon>0.
\end{equation}

\section{Spectral asymptotics and trace formula}
\label{9}
From section \ref{z15} and in particular (\ref{8}) we conclude that the number of wavenumbers of (\ref{6}) less than $x$, the spectral counting function $N(x)$, is closely related to the Dirichlet divisor problem. Explicitly, one obtains 
\begin{equation}
\label{10}
\hspace{-1.5cm}N(x):=\#\left\{(n,m)\in\nz\times\nz;\quad \sqrt{\lambda_{n,m}}=nm\leq x\right\}=\sum\limits_{n\leq x}\dz(n),\quad x\in\rz_{>0},
\end{equation}
where the sum on the r.h.s in \eref{10} is set to be zero for $x<1$. The famous divisor problem of Dirichlet \cite{Dirichlet:1849} is that of determining the asymptotic behaviour of $N(x)$ as $x\rightarrow\infty$. Reducing the problem to a lattice point problem, i.e.\ counting the positive integer lattice points under the hyperbola $n\leq\frac{x}{m}$, Dirichlet proved \cite{Dirichlet:1849}
\begin{equation}
\label{11}
\hspace{-1.5cm}N(x)=x\ln x +\left(2\gamma-1\right)x+\frac{1}{4}+N^{\mathrm{Osc.}}(x), \quad N^{\mathrm{Osc.}}(x)=\Or\left(x^\frac{1}{2}\right),\quad x\rightarrow\infty,
\end{equation}
wherein $\gamma$ denotes the Euler constant and $N^{\mathrm{Osc.}}(x)$ is, in physics notation, the ``oscillatory term'' to $N(x)$. Note that $N^{\mathrm{Osc.}}(x)$ is identical to $\Delta(x)$ in \cite{Wilton:1931a} and possesses jump discontinuities with jumps of magnitude $d(n)$ at $n\in\nz$. From (\ref{11}) it follows that the mean value of the multiplicities of the corresponding wavenumbers $k_{n,m}$ respectively the eigenvalues $\lambda_{n,m}$ of our infinite quantum graph is given by 
\begin{equation}
\label{11a}
\overline{d}(x):=\frac{1}{x}\sum\limits_{n^\leq x}d(n)=\ln x+(2\gamma-1)+\Or\left(\frac{1}{\sqrt{x}}\right),\quad x\rightarrow\infty.
\end{equation}
Furthermore, we deduce that the counting function of the eigenvalues $\lambda_{n,m}=k^2_{n,m}=n^2m^2$, 
\begin{equation}
\label{11b}
\mathcal{N}(\lambda):=\#\left\{(n,m)\in\nz\times\nz;\quad\lambda_{n,m}\leq\lambda\right\},
\end{equation}
is given by $\mathcal{N}(\lambda)=N\left(\sqrt{\lambda}\right)$, i.e.\ we derive from Dirichlet's result (\ref{11}) the modified Weyl's law
\begin{equation}
\label{11c}
\mathcal{N}(\lambda)=\frac{\sqrt{\lambda}}{2}\ln \lambda +(2\gamma-1)\sqrt{\lambda}+\Or\left(\lambda^\frac{1}{4}\right),\quad \lambda\rightarrow \infty.
\end{equation}
This should be compared with the counting function for a compact quantum graph with total length $\mathfrak{L}<\infty$, for which one has the standard Weyl asymptotics $\mathcal{N}(\lambda)=\frac{\mathfrak{L}}{\pi}\sqrt{\lambda}+\Or(1)$ \cite{BE:2008,ES:2009}.

Dirichlet's estimate \eref{11} on $N^{\mathrm{Osc.}}(x)$ was later improved e.g. by \cite{Voronoi:1903} ($N^{\mathrm{Osc.}}(x)=\Or\bigl(x^\frac{1}{3}\ln x\bigr)$), \cite{Vandercorput:1922} ($N^{\mathrm{Osc.}}(x)=\Or\left(x^{\Theta}\right)$ with $\Theta<\frac{33}{100}$) and \cite{Huxley:2003} ($N^{\mathrm{Osc.}}(x)=\Or\left(x^{\frac{131}{146}}\right)$). In \cite{Hardy:1917} it was proved that $N^{\mathrm{Osc.}}(x)=\Or\left(x^{\Theta}\right)$ with $\Theta>\frac{1}{4}$, and it is not unlikely (Hardy's conjecture) that
\be
\label{11ca}
N^{\mathrm{Osc.}}(x)=\Or\left(x^{\frac{1}{4}+\epsilon}\right),\quad x\rightarrow\infty,
\ee
for all positive $\epsilon$, but the exact order of $N^{\mathrm{Osc.}}(x)$ is still unknown. However, there exists an explicit expression for $N^{\mathrm{Osc.}}(x)$ (see e.g. \cite{Wilton:1931a})
\be
\label{11d}
\hspace{-2cm}N^{\mathrm{Osc.}}(x)=\frac{d(x)}{2}-\sqrt{x}\sum\limits_{n=1}^{\infty}\frac{d(n)}{\sqrt{n}}\left[\frac{2}{\pi}K_1\left(4\pi\sqrt{nx}\right)+Y_1\left(4\pi\sqrt{nx}\right)\right],\quad x\in\rz_{>0},
\ee
where we have defined
\be
\label{q11d}
d(x):=\cases{d(n),& for $x=n\in\nz$,\\
0,& else,\\}
\ee
and $K_\nu(z)$ and $Y_\nu(z)$ are Bessel functions (see e.g. \cite[p.\,66]{Magnus:1966}). Note, the series in \eref{11d} is not absolutely convergent due to the asymptotics \cite[p.\,139]{Magnus:1966}
\be
\label{rr3}
\hspace{-2cm}\eq{
K_\nu(z)&=\left(\frac{\pi}{2z}\right)^{\frac{1}{2}}\ue^{-z}\left(1+\Orr\left(\frac{1}{z}\right)\right),\quad-\frac{3\pi}{2}<\arg z<\frac{3\pi}{2},\\
Y_\nu(z)&=\left(\frac{1}{2}\pi z\right)^{-\frac{1}{2}}\sin\left(z-\frac{1}{2}\pi\nu-\frac{1}{4}\pi\right)\left(1+\Or\left(\frac{1}{z}\right)\right),\quad -\pi<\arg z<\pi. 
}
\ee
Owing to \eref{11d}, \eref{rr3}  and \eref{9a} we can estimate $N^{\mathrm{Osc.}}(x)$ as \cite{Wilton:1931a}
%
\be
\label{q12d}
\hspace{-2cm} N^{\mathrm{Osc.}}(x)=-\frac{x^{\frac{1}{4}}}{\pi\sqrt{2}}\sum\limits_{n=1}^{\infty}\frac{d(n)}{n^{\frac{3}{4}}}\cos\left(4\pi\sqrt{nx}-\frac{\pi}{4}\right)+\Or\left(x^{\epsilon}\right),\quad \epsilon>0 \quad x\rightarrow\infty,
\ee
and we recognize that only the first term on the r.h.s. meets the $\Omega$ result of Hardy \cite{Hardy:1917}. Relation \eref{q12d} suggests that $N^{\mathrm{Osc.}}(x)$ oscillates about zero with asymptotically vanishing mean value. This observation is corroborated by the asymptotics of its mean value \cite{Wilton:1931a}
\be
\label{q12e}
\overline{N^{\mathrm{Osc.}}}(x):=\frac{1}{x}\int\limits_{0}^{x}N^{\mathrm{Osc.}}(x')\ud x'=\Or\left(x^{-\frac{1}{4}}\right)\rightarrow 0, \quad x\rightarrow \infty. 
\ee
If we would not have taken into account the constant $\frac{1}{4}$ in \eref{11}, then $\overline{N^{\mathrm{Osc.}}}(x)$ would instead tend to $\frac{1}{4}$ for $x\rightarrow \infty$. Furthermore, \eref{11c} and \eref{11d} arise formally by applying the Vorono\"i summation formula \eref{13} to the Heaviside step function $f(x):=\Theta(x)$, which is, however, not an allowed test function in theorem \ref{12}. (Nevertheless, the formula \eref{13} still holds for the Heaviside step function by the calculations in \cite{Wilton:1932}). Therefore, we denote 
\be
\label{q12}
N^W(x):=x\ln x  +\left(2\gamma-1\right)x+\frac{1}{4},\quad x\in\rz_{>0},
\ee
as the ``Weyl term'' and $N^{\mathrm{Osc.}}(x)$ as the ``oscillatory term'' to $N(x)$.

It turns out that the Vorono\"i summation formula \eref{13} (see also \cite{Voronoi:1904}) can be interpreted as a trace formula for our noncompact quantum graph $\mathfrak{G}$ with pure Dirichlet boundary conditions specified in (\ref{2}) and (\ref{5}). In various articles (e.g. \cite{Rogosinski:1922,Oppenheim:1927,Landau:1927,Koshliakov:1928,Dixon:1931,Wilton:1932,Ferrar:1935,Ferrar:1937,Dixon:1937,Chandrasekharan:1968,Berndt:1972,Hejhal:1979,Endres:2010}) the authors investigate the Vorono\"i summation formula and specify proper function spaces for which the Vorono\"i summation formula is valid. In the following we employ the Vorono\"i summation formula as formulated in
\begin{theorem}[Wilton:\,1932, Dixon/Ferrar:\,1937, \cite{Wilton:1932,Dixon:1937}]
\label{12}
If, for any finite $k_0>0$, $f(k)$ is a real function of bounded variation in the interval $(0,k_0)$ and is continuous at $k=1,2,3,\ldots$, then
\begin{equation}
\label{13}
\hspace{-2cm}\eqalign{
\sum\limits_{n=1}^{\infty} & d(n) f(n)=\int\limits_0^{\infty}\left(\ln k+2\gamma\right)f(k)\ud k+\frac{f(0)}{4}\\
& \hspace{3.0cm}+2\pi\sum\limits_{n=1}^{\infty}d(n)\int\limits_0^{\infty}\left[\frac{2}{\pi}K_0\left(4\pi\sqrt{nk}\right)-Y_0\left(4\pi\sqrt{nk}\right)\right]f(k)\ud k, 
}
\end{equation}
provided that
\begin{itemize}
 \item $ \left(V_{0^+}^x f(k)\right)\ln x\rightarrow 0$ as $x\rightarrow 0^+$, 
 \item for some positive $\kappa$, $k^{\frac{1}{2}+\kappa}f(k)\rightarrow 0$ as $k\rightarrow\infty$,
 \item $f(k)$ is the (indefinite) integral of $f'(k)$ in $k\geq k_0$,
 \item for some positive $\kappa$ 
\begin{equation}
\label{13a}
 \int\limits_{\kappa}^{\infty}k^{\frac{1}{2}+\kappa}\left|f'(k)\right|\ud k<\infty.
\end{equation}
\end{itemize}
\end{theorem}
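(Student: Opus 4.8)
The plan is to derive \eref{13} from the explicit Vorono\"i formula for the divisor sum already recorded in \eref{11} and \eref{11d}, combined with Riemann--Stieltjes integration by parts; this is the reverse of the heuristic mentioned above, that \eref{13} applied to the Heaviside step function formally reproduces \eref{11} and \eref{11d}. Since $f$ is of bounded variation on every interval $(0,k_{0})$ and, by hypothesis, continuous at $1,2,3,\ldots$, and since the jump of $N$ at $n$ equals $d(n)$, the left-hand side can be written as the Stieltjes integral $\sum_{n=1}^{\infty}d(n)f(n)=\int_{0^{+}}^{\infty}f(k)\,\ud N(k)$. Splitting $N=N^{W}+N^{\mathrm{Osc.}}$ as in \eref{q12} and \eref{11d} and using $\bigl(N^{W}\bigr)'(k)=\ln k+2\gamma$, the ``Weyl'' part contributes $\int_{0^{+}}^{\infty}(\ln k+2\gamma)f(k)\,\ud k$ immediately. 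For the oscillatory part I would integrate by parts, $\int_{0^{+}}^{\infty}f\,\ud N^{\mathrm{Osc.}}=\bigl[fN^{\mathrm{Osc.}}\bigr]_{0^{+}}^{\infty}-\int_{0}^{\infty}N^{\mathrm{Osc.}}(k)f'(k)\,\ud k$: the boundary term at $\infty$ vanishes because $N^{\mathrm{Osc.}}(k)=\Orr(\sqrt k)$ and $k^{1/2+\kappa}f(k)\to0$, while at $0^{+}$ it equals $-f(0)\,N^{\mathrm{Osc.}}(0^{+})=\tfrac14 f(0)$ since $N^{\mathrm{Osc.}}(0^{+})=N(0^{+})-N^{W}(0^{+})=-\tfrac14$. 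Together these already produce the first line of \eref{13}.

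It then remains to turn $-\int_{0}^{\infty}N^{\mathrm{Osc.}}(k)f'(k)\,\ud k$ into the Bessel series on the right of \eref{13}. Here I would insert the series \eref{11d} for $N^{\mathrm{Osc.}}$, interchange summation and integration, and integrate by parts once more in each term; the key fact is the elementary identity
\be
\fl\frac{\ud}{\ud k}\left[\sqrt{k/n}\,\left(\frac{2}{\pi}K_{1}\bigl(4\pi\sqrt{nk}\bigr)+Y_{1}\bigl(4\pi\sqrt{nk}\bigr)\right)\right]=-2\pi\left[\frac{2}{\pi}K_{0}\bigl(4\pi\sqrt{nk}\bigr)-Y_{0}\bigl(4\pi\sqrt{nk}\bigr)\right],
\ee
which follows from the recurrences $\frac{\ud}{\ud z}\bigl[zK_{1}(z)\bigr]=-zK_{0}(z)$ and $\frac{\ud}{\ud z}\bigl[zY_{1}(z)\bigr]=zY_{0}(z)$ via the chain rule for $z=4\pi\sqrt{nk}$. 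The new boundary terms vanish (at $0$ since the bracket on the left is $\Orr(k\ln k)$, at $\infty$ by the decay of $f$), the $\tfrac12 d(k)$ term of \eref{11d} contributes nothing in the Lebesgue sense, and one is left with exactly $2\pi\sum_{n=1}^{\infty}d(n)\int_{0}^{\infty}\bigl[\tfrac{2}{\pi}K_{0}(4\pi\sqrt{nk})-Y_{0}(4\pi\sqrt{nk})\bigr]f(k)\,\ud k$, the second line of \eref{13}.

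An alternative route --- and the one by which \eref{11} and \eref{11d} are themselves obtained --- is direct contour integration: write $f(k)=\frac{1}{2\pi\ui}\int\tf(s)k^{-s}\,\ud s$ with $\tf$ the Mellin transform of $f$, use $\sum_{n}d(n)n^{-s}=\zeta(s)^{2}$ for $\re s>1$, shift the contour to the left past the double pole at $s=1$ --- whose residue $\tf'(1)+2\gamma\,\tf(1)$ reproduces $\int_{0}^{\infty}(\ln k+2\gamma)f(k)\,\ud k$ --- and past $s=0$, where $\zeta(0)^{2}=\tfrac14$ meets the simple pole of $\tf$ with residue $f(0)$, then apply the functional equation $\zeta(s)^{2}=\pi^{2s-1}\Gamma\!\bigl(\tfrac{1-s}{2}\bigr)^{2}\Gamma\!\bigl(\tfrac{s}{2}\bigr)^{-2}\zeta(1-s)^{2}$, re-expand $\zeta(1-s)^{2}=\sum_{n}d(n)n^{s-1}$, and recognise the surviving inverse Mellin transform of the gamma quotient as $2\pi\bigl[\tfrac{2}{\pi}K_{0}(4\pi\sqrt{nk})-Y_{0}(4\pi\sqrt{nk})\bigr]$ via the standard Mellin pairs for $K_{0}$ and $Y_{0}$ and the duplication and reflection formulae for $\Gamma$.

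The delicate point --- and the reason all four hypotheses on $f$ are needed --- is the justification of the two interchanges of summation and integration, because the series in \eref{11d} (equivalently, the Bessel series in \eref{13}) is only conditionally convergent. The weighted bound \eref{13a}, together with the $\Orr(\sqrt k)$ estimate for $N^{\mathrm{Osc.}}$ and the matching bounds for the partial sums of \eref{11d}, is precisely what makes $\int_{0}^{\infty}N^{\mathrm{Osc.}}(k)f'(k)\,\ud k$ and its term-by-term expansion legitimate --- one works with symmetric partial sums and passes to the limit by Abel summation; the requirement that $f$ be an indefinite integral of $f'$ for $k\ge k_{0}$ supplies the integration by parts at infinity; and $\bigl(V_{0^{+}}^{x}f\bigr)\ln x\to0$ controls the interaction of the variation of $f$ near $0$ with the logarithmic behaviour of $N^{\mathrm{Osc.}}(k)$ as $k\to0^{+}$, which is what licenses the use of $N^{\mathrm{Osc.}}(0^{+})=-\tfrac14$. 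The residue extraction at $s=1$ and the Bessel-kernel identity are by contrast routine once this scaffolding is in place. For the full argument, carried out for exactly the class of test functions assumed here, I would follow Wilton \cite{Wilton:1932} and Dixon and Ferrar \cite{Dixon:1937}; a modern variant proves \eref{13} first for smooth compactly supported $f$ by the contour method above and then extends it to the stated class by approximating $f$ with smooth truncations, the hypotheses ensuring convergence of the approximation on both sides.
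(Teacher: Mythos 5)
A preliminary point: the paper itself contains no proof of this statement. Theorem \ref{12} is quoted verbatim, with attribution, from Wilton (1932) and Dixon and Ferrar (1937) and is used downstream purely as an imported trace formula, so there is no in-paper argument to match yours against; the only question is whether your sketch stands on its own. As a roadmap it is essentially sound, and your one computational ingredient is correct: with $z=4\pi\sqrt{nk}$ the recurrences $\frac{\ud}{\ud z}\bigl[zK_1(z)\bigr]=-zK_0(z)$ and $\frac{\ud}{\ud z}\bigl[zY_1(z)\bigr]=zY_0(z)$ do yield the antiderivative identity you state, the boundary evaluations $N^{\mathrm{Osc.}}(0^+)=-\tfrac14$ and the vanishing at infinity under $k^{\frac12+\kappa}f(k)\to0$ together with $N^{\mathrm{Osc.}}(k)=\Orr\bigl(\sqrt{k}\bigr)$ are right, and starting from Vorono\"i's explicit formula \eref{11d} is legitimate in principle, since that formula can be established independently of the summation formula --- though note that within this paper \eref{11d} is itself only quoted from Wilton (1931), so you are deriving one cited theorem from another of comparable depth rather than from first principles.

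The genuine gaps are precisely the points you flag but do not close. First, on $(0,k_0)$ the hypotheses give only bounded variation, not absolute continuity, so the expression $\int_0^\infty N^{\mathrm{Osc.}}(k)f'(k)\,\ud k$ is not available there; you must retain the Riemann--Stieltjes integral $\int N^{\mathrm{Osc.}}\,\ud f$ on that range (and its very existence uses the continuity of $f$ at the integers, where $N^{\mathrm{Osc.}}$ jumps --- a hypothesis you invoke only implicitly), with the second integration by parts then performed in the Stieltjes sense against the $C^1$ Bessel antiderivatives. Second, the interchange of the sum over $n$ in \eref{11d} with the integration against $\ud f$ is the entire analytic content of the theorem: the series converges only conditionally, non-uniformly near the integers, and near $k=0$ each kernel $\frac{2}{\pi}K_0\bigl(4\pi\sqrt{nk}\bigr)-Y_0\bigl(4\pi\sqrt{nk}\bigr)$ has a logarithmic singularity whose summed effect is exactly what the condition $\bigl(V_{0^+}^{x}f\bigr)\ln x\to0$ must control; asserting that Wilton and Dixon--Ferrar justify this ``for exactly the class of test functions assumed here'' is a citation, not a proof, and it is the same citation the paper itself makes. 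The Mellin/contour route you offer as an alternative has the same status: correct as a formal derivation, with all the difficulty hidden in justifying the contour shift and term-by-term Mellin inversion for $f$ that is merely of bounded variation. In short: correct strategy and correct identities, but as it stands the proposal does not supply a proof beyond what the paper already delegates to the literature.
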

\hspace{-0.83cm}In (\ref{13}) $V_{0^+}^x f(k)$ denotes the total variation of $f(k)$ in $(0,x)$ (see e.g.\cite{Boyarsky1997}).

To interpret formula \eref{13} as a trace formula for our infinite quantum graph, we first notice that the l.h.s. is nothing else than the trace of the operator function $f\left((-\Delta)^{\frac{1}{2}}\right)$, i.e. with the wavenumbers $k_n:=\sqrt{\lambda_n}=n$
\be
\label{13aaq}
\sum\limits_{n=1}^{\infty}d(n)f(n)=\sum\limits_{n=1}^{\infty}d(n)f(k_n)=\tr f\left((-\Delta^{\frac{1}{2}})\right)
\ee
(under the conditions stated in theorem \ref{12}, $f\left((-\Delta)^{\frac{1}{2}}\right)$ is of trace class).

Furthermore, due to the imposed Dirichlet boundary conditions for the quantum graph, the length spectrum of the classical primitive periodic orbits of the graph $\mathfrak{G}$ is precisely given by the set of numbers $\left\{\lf_{p,n}:=\frac{2\pi}{n};\quad n=1,\ldots,\infty\right\}$ which allows us to interpret the series on the r.h.s. of \eref{13} as a series over the primitive periodic orbits by the replacement $n=\frac{2\pi}{\lfp}$ in the arguments of the Bessel functions $K_0$ and $Y_0$.
Under different conditions on the function $f$, as for instance in \cite{Hejhal:1979,Endres:2010}, the series on the r.h.s. in \eref{13} can be written as a double series
\begin{equation}
\label{100}
\hspace{-2.5cm}\eqalign{
& \sum\limits_{n=1}^{\infty}d(n)\int\limits_0^{\infty}\left[\frac{2}{\pi}K_0\left(2(2\pi)^{\frac{3}{2}}\sqrt{\frac{k}{\lf_{p,n}}}\right)-Y_0\left(2(2\pi)^{\frac{3}{2}}\sqrt{\frac{k}{\lf_{p,n}}}\right)\right]f(k)\ud k \\
& \hspace{3cm} =  \sum\limits_{l,m=1}^{\infty}\int\limits_0^{\infty}\left[\frac{2}{\pi}K_0\left(2(2\pi)^{\frac{3}{2}}\sqrt{\frac{mk}{\lf_{p,l}}}\right)-Y_0\left(2(2\pi)^{\frac{3}{2}}\sqrt{\frac{mk}{\lf_{p,l}}}\right)\right]f(k)\ud k.
}
\end{equation}
%
Therefore, the r.h.s. of \eref{13} has a purely classical interpretation. 

In the sequel we investigate the trace of the wave group (however, in ``euclidean'' time $t\rightarrow-\ui t$) and consider the trace of the heat kernel, the trace of the resolvent, some special zeta functions, a ''generalized gamma function'', a ``renormalized'' secular equation and a Selberg-like zeta function which is defined in terms of the classical periodic orbits and fulfills the analogue of the Riemann hypothesis.
\section{The trace of the wave group}
\label{14}
We shall need results from asymptotic analysis. We refer to \cite[p.\,320]{Zeidler:2006}  for the following theorem (in fact we think that there is a typographic mistake in \cite[p.\,320]{Zeidler:2006} concerning the sign of the second summand in equation (6.77) therein [cf. \hspace{-0.9mm}with \cite{Flajolet:1995}]).
\begin{theorem}[Zeidler, \cite{Zeidler:2006}]
\label{15}
Let $f:(0,\infty)\rightarrow\kz$ be a smooth function which together with all its derivatives is of 
rapid decay at infinity (i.e.\ $x^af^{(n)}(x)$ is bounded on $(b,\infty)$ for some $b>0$, for any $a\in\rz$ and for all $n\in\nz_0$) and possesses the asymptotic expansion
\begin{equation}
\label{16}
f(x)\sim\sum\limits_{n=0}^{\infty}b_{\lambda_n}x^{\lambda_n}, \quad x\rightarrow 0^+,
\end{equation}
with $-1=\lambda_0<\lambda_1<\lambda_2<\ldots$, then the function 
\begin{equation}
\label{17}
g(x):=\sum\limits_{n=1}^{\infty}f(nx)
\end{equation}
possesses the asymptotic expansion 
\begin{equation}
\label{18}
g(x)\sim\frac{1}{x}\left(\In_f^*-b_{-1}\ln x\right)+\sum\limits_{n=1}^{\infty}b_{\lambda_n}\zeta(-\lambda_n)x^{\lambda_n}, \quad x\rightarrow 0^+,
\end{equation}
with 
\begin{equation}
\label{18a}
\In_f^*:=\int\limits_{0}^{\infty}\left(f(x)-b_{-1}\frac{\ue^{-x}}{x}\right)\ud x.
\end{equation}
\end{theorem}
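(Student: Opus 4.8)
The natural approach is via the Mellin transform, in the spirit of the harmonic-sum calculus of \cite{Flajolet:1995}. Write $\tf(s):=\int_0^\infty f(x)\,x^{s-1}\,\ud x$ for the Mellin transform of $f$. The plan has three stages: (i) show that $\tf$ continues to a meromorphic function on $\kz$ whose only singularities are simple poles at $s=-\lambda_n$ with residue $b_{\lambda_n}$, and which decays faster than any power of $|\im s|$ on every vertical line; (ii) identify the Mellin transform of $g$ with $\zeta(s)\tf(s)$ and invert it along a line $\re s=c>1$; (iii) shift the inversion contour to the left, collecting residues to produce \eref{18}.

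For (i), I would split $\tf(s)=\int_0^1+\int_1^\infty$. The hypothesis that $f$ and all its derivatives are of rapid decay at infinity makes $\int_1^\infty f(x)x^{s-1}\,\ud x$ entire and, after $k$-fold integration by parts, $\Orr(|s|^{-k})$ uniformly on vertical strips, for every $k$. On $(0,1)$ one subtracts partial sums of \eref{16}: for each $N$,
\[
\int_0^1 f(x)x^{s-1}\,\ud x=\sum_{n=0}^{N}\frac{b_{\lambda_n}}{s+\lambda_n}+\int_0^1\Bigl(f(x)-\sum_{n=0}^{N}b_{\lambda_n}x^{\lambda_n}\Bigr)x^{s-1}\,\ud x ,
\]
the last integral being holomorphic for $\re s>-\lambda_{N+1}$ since the bracket is $\Orr(x^{\lambda_{N+1}})$, and (after integration by parts, using that the differentiated asymptotics of \eref{16} hold near $0$) also of rapid decay in $|\im s|$ on strips. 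Letting $N\to\infty$ yields the claimed meromorphic continuation with the stated poles and residues.

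For (ii), absolute convergence for $\re s>1$ gives $\widetilde g(s):=\int_0^\infty g(x)x^{s-1}\,\ud x=\sum_{n=1}^\infty n^{-s}\,\tf(s)=\zeta(s)\tf(s)$; since $\zeta$ is bounded and $\tf$ of rapid decay on $\re s=c$ for fixed $c>1$, Mellin inversion applies and $g(x)=\frac{1}{2\pi\ui}\int_{c-\ui\infty}^{c+\ui\infty}\zeta(s)\,\tf(s)\,x^{-s}\,\ud s$. For (iii) I would move the contour to $\re s=c'$ with $c'\in(-\lambda_{N+1},-\lambda_N)$; the horizontal segments vanish because $\zeta$ grows only polynomially while $\tf$ decays faster than any power in $|\im s|$, and what remains is $\Orr(x^{-c'})=\orr(x^{\lambda_N})$ as $x\to0^+$. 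The residues picked up are: at $s=-\lambda_n$ ($n\ge1$), where $\zeta$ is regular and $\tf$ has residue $b_{\lambda_n}$, the term $b_{\lambda_n}\zeta(-\lambda_n)x^{\lambda_n}$; and at $s=1$, a double pole from the pole of $\zeta$ and — because $\lambda_0=-1$ — the pole of $\tf$. Writing $\tf(s)=b_{-1}\Gamma(s-1)+\int_0^\infty\bigl(f(x)-b_{-1}\ue^{-x}/x\bigr)x^{s-1}\,\ud x$, the subtracted integral is analytic near $s=1$ with value $\In_f^*$ there, so with $u=s-1$ one has $\tf(s)=b_{-1}u^{-1}+(\In_f^*-\gamma b_{-1})+\Orr(u)$; combining with $\zeta(s)=u^{-1}+\gamma+\Orr(u)$ and $x^{-s}=x^{-1}(1-u\ln x+\Orr(u^2))$ the residue at $s=1$ comes out as exactly $x^{-1}(\In_f^*-b_{-1}\ln x)$. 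Letting $N\to\infty$ gives \eref{18} and \eref{18a}.

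The main obstacle is the analytic bookkeeping for the contour shift: one must show that $\tf$ in its continued form — not merely the original convergent integral — still decays faster than any polynomial on vertical lines, so that both the horizontal connecting segments drop out and the shifted integral is genuinely of order $\orr(x^{\lambda_N})$. This is precisely where the full strength of the hypotheses is used: rapid decay of every derivative of $f$ at infinity controls the tail $\int_1^\infty$, and term-by-term differentiability of the expansion \eref{16} near $0$ controls the remainder $\int_0^1$. Once these decay estimates are in place, the extraction of $\In_f^*$ and the $\zeta(-\lambda_n)$ coefficients is the short Laurent-series computation indicated above.
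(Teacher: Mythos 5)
The paper does not prove this theorem at all: it is imported verbatim from Zeidler (and the underlying technique is that of Flajolet--Gourdon--Dumas), so there is no in-paper argument to compare against. Your proposal reconstructs exactly the proof strategy of those cited sources: Mellin transform of the base function, the factorization $\widetilde g(s)=\zeta(s)\tf(s)$ for $\re s>1$, contour shift, and residue extraction. The bookkeeping at the singularities is correct: the simple poles of $\tf$ at $s=-\lambda_n$ ($n\geq1$) give $b_{\lambda_n}\zeta(-\lambda_n)x^{\lambda_n}$, and at $s=1$ the collision of the pole of $\zeta$ with the pole of $\tf$ coming from $\lambda_0=-1$, handled via $\tf(s)=b_{-1}\Gamma(s-1)+\int_0^\infty\bigl(f(x)-b_{-1}\ue^{-x}/x\bigr)x^{s-1}\,\ud x$, produces precisely $\frac{1}{x}\bigl(\In_f^*-b_{-1}\ln x\bigr)$, i.e.\ \eref{18} with \eref{18a}.

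The one genuine soft spot is the decay estimate you yourself flag. To push the contour to $\re s=c'\in(-\lambda_{N+1},-\lambda_N)$ you must beat the polynomial growth of $\zeta$ in $|\im s|$ on the strip, so the \emph{continued} $\tf$ must decay faster than a fixed (strip-dependent) power of $|\im s|$; mere boundedness of the remainder integral $\int_0^1\bigl(f-\sum_{n\le N}b_{\lambda_n}x^{\lambda_n}\bigr)x^{s-1}\ud x$ in the half-plane $\re s>-\lambda_{N+1}$ is not enough, since then $\zeta(s)\,\tf(s)\,x^{-s}$ is not even absolutely integrable on the shifted line. Your route to that decay is integration by parts near $0$, which requires that the expansion \eref{16} can be differentiated termwise, i.e.\ control of $f',f'',\dots$ as $x\rightarrow0^+$. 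That is \emph{not} among the stated hypotheses: smoothness on $(0,\infty)$ plus rapid decay of all derivatives at infinity says nothing about derivatives near the origin, and a function with a perfectly good expansion \eref{16} can have wildly oscillating derivatives there, in which case the contour shift (and in fact the expansion of $g$ beyond its leading terms) is no longer justified by this argument. So either state the termwise differentiability of \eref{16} (equivalently, rapid decay of $\tf$ on vertical strips) as an explicit additional hypothesis, or restrict to the situation actually needed in the paper: for $f(t)=\frac{1}{\ue^{t}-1}$ in section \ref{14} and $f(x)=\omega\left(x^2\right)$ in section \ref{32aaaaa} the required derivative bounds near $0$ are immediate, and with that proviso your proof is complete and is the standard one.
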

\hspace{-10mm} Here $\zeta(z)$ in (\ref{18}) denotes the Riemann zeta function.

Using for a certain Lambert series the identity \cite[p.\,467]{Knopp:1964}
\begin{equation}
\label{19}
\sum\limits_{n=1}^{\infty}\dz(n)q^n=\sum\limits_{n=1}^{\infty}\frac{q^n}{1-q^n}, \quad |q|<1,
\end{equation}
we obtain for the trace $\Th(t)$ of the (euclidean) wave group $\ue^{-t\sqrt{-\Delta}}$ where $-\Delta$ denotes the Dirichlet Laplacian, $-\Delta:=\left(-\Delta,\Dz(\Delta)\right)$ (see section \ref{1})
\begin{equation}
\label{20}
\eqalign{
\Th(t) & :=\tr\left(\ue^{-t\sqrt{-\Delta}}\right)=\sum\limits_{n=1}^{\infty}\dz(n)\ue^{-nt} \\
& =\sum\limits_{n=1}^{\infty}\frac{1}{\ue^{nt}-1}=\sum\limits_{n=1}^{\infty}f(nt), \quad t>0,
}
\end{equation}
where we have defined $f(t):=\frac{1}{\ue^{t}-1}$. It is obvious that $f(t)$ fulfills the requirements of theorem \ref{15}. Furthermore, for $f(t)$ we have the Laurent series (see e.g.\ \cite[p.\,25]{Magnus:1966}) 
\begin{equation}
\label{21}
f(t)=\sum\limits_{k=-1}^{\infty}\frac{B_{k+1}}{(k+1)!}t^{k},\quad 0<|t|<2\pi,
\end{equation}
where $B_k$ are the Bernoulli numbers ($B_0=1$, $B_1=-\frac{1}{2}$, $B_2=\frac{1}{6}$, $B_4=-\frac{1}{30},\ldots$ and $B_3=B_5=B_7
=\ldots=0$). For the calculation of 
\begin{equation}
\label{21a}
\In_f^*:=\int_{0}^{\infty}\left[\frac{1}{\ue^{t}-1}-\frac{\ue^{-t}}{t}\right]\ud t
\end{equation}
we use 
%
%
%
%
%
%
%
%
%
%
the following integral representation of the digamma function $\psi(z)=\frac{\Gamma'(z)}{\Gamma(z)}$ \cite[p.\,16]{Magnus:1966}
\be
\label{23a}
\psi(z)=\ln z+\int\limits_{0}^{\infty}\left[\frac{1}{t}-\frac{1}{1-\ue^{-t}}\right]\ue^{-zt}\ud t,\quad \re z>0,
\ee
from which we derive $\In_f^*=\gamma$ by evaluating \eref{23a} at $z=1$ and using $\psi(1)=-\gamma$. We then get by theorem \ref{15} and the identity $\zeta(-m)=-\frac{B_{m+1}}{m+1}$ for $m=1,2,3,\ldots$ \cite[p.\,19]{Magnus:1966} the asymptotic expansion
\begin{equation}
\label{24}
\Th(t)\sim-\frac{\ln t}{t}+\frac{\gamma}{t}+\frac{1}{4}+\sum\limits_{m=1}^{\infty}\frac{B_{m+1}^2}{(m+1)(m+1)!}(-t)^m, \quad t\rightarrow 0^+.
\end{equation}
The estimate $\Th(t)=\Or\left(\ue^{-t}\right)$ for $t\rightarrow\infty$ is trivial. 

Note that from the small-$t$ asymptotics (\ref{24}) one derives the leading asymptotic term for the counting function  (\ref{11}) using the Karamata-Tauberian theorem in the form \cite{Simon:1983}
\begin{equation}
\label{24a}
\lim_{t\rightarrow0^+}\left[-\frac{t^r}{\ln t} \tr\ue^{-t\sqrt{-\Delta}}\right]=c \quad \mbox{iff} \quad  \lim_{x\rightarrow\infty}\frac{N(x)}{x^r\ln x}=\frac{c}{\Gamma(r+1)}.
\end{equation}

We want to derive an exact expression for the ``error term'' in \eref{24} using a similar technique as in \cite{Hardy:1917}.
%

Let us define the spectral zeta function $\zeta_{\Delta}(s)$ of the Dirichlet Laplacian on the quantum graph $\mathfrak{G}$ (for a general definition see \cite{Singer:1973,Hawking:1977})
\be
\label{101}
\zeta_{\Delta}(s):=\tr(-\Delta)^{-s}=\sum\limits_{n=1}^{\infty}\frac{d(n)}{n^{2s}}=\sum\limits_{m,n=1}^{\infty}\frac{1}{(mn)^{2s}}=\zeta^2(2s),\quad \re s>\frac{1}{2},
\ee
respectively the spectral zeta function of $(-\Delta)^{\frac{1}{2}}$
\be
\label{101ab}
\zeta_{\Delta^{\frac{1}{2}}}(z):=\tr(-\Delta)^{-\frac{z}{2}}=\zeta^2(z),\quad \re z>1.
\ee
Using the inverse Mellin transform of the gamma function $\Gamma(z)$,
\be
\label{101a}
e^{-t} =\frac{1}{2\pi\ui}\int\limits_{\kappa-\ui\infty}^{\kappa+\ui\infty}t^{-z}\Gamma(z)\ud z, \quad \re t>0, \quad \kappa>0,
\ee
we then get from \eref{20}, \eref{101a} and \eref{101ab} (see the second footnote in \cite[p.\,5]{Hardy:1917})
\be
\label{102}
\eq{
\Th(t) &=\sum\limits_{n=1}^{\infty}\left[\frac{d(n)}{2\pi\ui}\int\limits_{\kappa-\ui\infty}^{\kappa+\ui\infty}(nt)^{-z}\Gamma(z)\ud z \right]\\
 &= \frac{1}{2\pi\ui}\int\limits_{\kappa-\ui\infty}^{\kappa+\ui\infty}t^{-z}\Gamma(z)\zeta^2(z) \ud z, \quad \re t>0, \quad \kappa>1.
}
\ee
Obviously, $\Theta_{\Delta^{\frac{1}{2}}}(t)$ is the inverse Mellin transform of $\Gamma(z)\zeta^2(z)$, and thus we obtain as a byproduct the following integral representation of the square of the Riemann zeta function 
\be
\label{102a}
\zeta^2(z)=\frac{1}{\Gamma(z)}\int\limits_{0}^{\infty}t^{z-1}\Theta_{\Delta^{\frac{1}{2}}}(t)\ud t,\quad \re z>1.
\ee
In order to derive an explicit expression for $\Theta_{\Delta^{\frac{1}{2}}}(t)$, we shift the contour of integration in \eref{102} to the left and use Cauchy's residuum theorem (see again the footnote in \cite[p.\,5]{Hardy:1917} in order to convince oneself that the steps are allowed). With the Laurent expansions \cite[p.\,2]{Magnus:1966} and \cite[p.\,807]{Abramowitz:1992} at $z=1$
\be
\label{103}
\eq{
t^{-z} &=\frac{1}{t}-\frac{\ln t}{t}(z-1)+\Or\left((z-1)^2\right),\\
\Gamma(z) &=1-\gamma(z-1)+\Or\left((z-1)^2\right),\\
\zeta^2(z) &=\frac{1}{(z-1)^2}+\frac{2\gamma}{(z-1)}+\gamma^2-2\gamma_1+\Or\left(z-1\right),
}
\ee
where $\gamma_1$ denotes the Stieltjes constant 
\be
\label{220}
\gamma_1=\underset{m\rightarrow\infty}{\lim}\left[\sum\limits_{l=1}^m\frac{\ln l}{l}-\frac{\ln^2m}{2}\right],
\ee
and the Laurent expansions at $z=-n$, $n\in\nz_0$,
\be
\label{241}
\eq{
t^{-z} &=t^n+\Or\left(z+n\right),\\
\Gamma(z) &=\frac{(-1)^n}{n!}\left[\frac{1}{z+n}+\psi(n+1)\right]+\Or\left(z+n\right),\\
\zeta^2(z) &=\frac{B_{n+1}^2}{(n+1)^2}+\Or\left(z+n\right) 
}
\ee
we obtain the exact expression ($N\in\nz$)
\be
\label{104}
\Th(t)=-\frac{\ln t}{t}+\frac{\gamma}{t}+\frac{1}{4}+\sum\limits_{n=1}^{N}\frac{B_{n+1}^2}{(n+1)(n+1)!}(-t)^n+J_{\kappa_{N}},
\ee 
where the error term $J_{\kappa_{N}}$ is given by 
\be
\label{106}
J_{\kappa_{N}}:=\frac{1}{2\pi\ui}\int\limits_{\kappa_{N}-\ui\infty}^{\kappa_{N}+\ui\infty}t^{-z}\Gamma(z)\zeta^2(z) \ud z, \quad -N<\kappa_{N}<-N-1.
\ee
It is obvious that
\be
\label{107}
J_{\kappa_{N}}=\orr\left(t^N\right),
\ee
and therefore \eref{24} is in fact an asymptotic expansion (see e.g. \cite[p.\,11]{Erdelyi:1956}). Since \cite[p.\,29]{Magnus:1966}
\be
\label{105}
|B_{2n}|>\frac{2(2n)!}{(2\pi)^{2n}},\quad B_{2n+1}=0, \quad n\in\nz,
\ee
we conclude that the radius of convergence of the formal power series in \eref{24} is equal to zero and thus we have proved the following theorem.
\begin{theorem}
\label{108}
$\Th(t)$, defined in $\eref{20}$, possesses the asymptotic expansion $\eref{24}$. Furthermore, it holds for the error term in the exact expression $\eref{104}$:
\be
\label{107a}
J_{\kappa_{N}}\not\rightarrow0 \quad \mbox{for} \quad \kappa_{N}\rightarrow-\infty,\quad \kappa_{N}\notin-\nz_0
\ee
in contrast to the case considered in $\cite{Hardy:1917})$.
\end{theorem}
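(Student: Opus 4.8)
The plan is to establish the two assertions of Theorem~\ref{108} separately, both being essentially a matter of bookkeeping for quantities already assembled in \eref{104}--\eref{107}. For the first assertion, that $\Th(t)$ possesses the asymptotic expansion \eref{24}, I would simply invoke Theorem~\ref{15}, whose hypotheses on $f(t)=\frac{1}{\ue^t-1}$ have already been verified (rapid decay at infinity, Laurent expansion \eref{21} at the origin with $b_{-1}=1$), together with the explicit evaluation $\In_f^*=\gamma$ obtained from \eref{23a} and the identity $\zeta(-m)=-\frac{B_{m+1}}{m+1}$. Alternatively, and this is the cleaner route given that the contour-shifting argument is already in place, I would point out that \eref{104} combined with the remainder estimate $J_{\kappa_N}=\orr(t^N)$ in \eref{107} is precisely the statement that the formal series $-\frac{\ln t}{t}+\frac{\gamma}{t}+\frac14+\sum_{m\geq1}\frac{B_{m+1}^2}{(m+1)(m+1)!}(-t)^m$ is an asymptotic expansion of $\Th(t)$ as $t\to0^+$, in the sense of Poincaré (cf.\ \cite[p.\,11]{Erdelyi:1956}); the two computations agree term by term, so no further work is needed here.

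For the second assertion one must show $J_{\kappa_N}\not\to0$ as $\kappa_N\to-\infty$ along non-integer values. The key point is that the radius of convergence of the formal power series in \eref{24} is zero: by the lower bound \eref{105}, $|B_{2n}|>\frac{2(2n)!}{(2\pi)^{2n}}$, so the coefficients $\frac{B_{n+1}^2}{(n+1)(n+1)!}$ of $(-t)^n$ grow faster than any geometric rate (roughly like $\frac{(n!)}{(2\pi)^{2n}}$ up to polynomial factors), hence the series diverges for every $t\neq0$. Now if one had $J_{\kappa_N}\to0$ for some fixed $t>0$ as $\kappa_N\to-\infty$, then letting $N\to\infty$ in \eref{104} would express $\Th(t)$ as the sum of the convergent part plus a null remainder, forcing the divergent tail $\sum_{n>N}\frac{B_{n+1}^2}{(n+1)(n+1)!}(-t)^n$ to converge — a contradiction. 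I would phrase this as: fix any $t>0$; since $\Th(t)$ is a finite number and the partial sums $S_N(t):=-\frac{\ln t}{t}+\frac{\gamma}{t}+\frac14+\sum_{n=1}^N\frac{B_{n+1}^2}{(n+1)(n+1)!}(-t)^n$ satisfy $\Th(t)-S_N(t)=J_{\kappa_N}$, convergence $J_{\kappa_N}\to0$ would force $S_N(t)$ to converge as $N\to\infty$, contradicting the zero radius of convergence just established.

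The main obstacle — and it is a mild one — is the logical care needed in the limiting argument: the index $N$ in \eref{104} and the contour abscissa $\kappa_N$ in \eref{106} are tied together by the constraint $-N<\kappa_N<-N-1$, so ``letting $\kappa_N\to-\infty$'' forces $N\to\infty$, and one must make sure that \eref{104} is a genuine identity for each admissible $(N,\kappa_N)$ pair (which it is, being derived from Cauchy's residue theorem applied to the meromorphic integrand $t^{-z}\Gamma(z)\zeta^2(z)$, whose only poles in $\re z\leq\kappa$ are the double pole at $z=1$ and the simple poles at $z=-n$, $n\in\nz_0$, coming from $\Gamma$). One should also remark, to complete the contrast with \cite{Hardy:1917} alluded to in the statement, that in Hardy's situation the analogous remainder does tend to zero because the relevant coefficients are controlled by $\zeta$-values rather than by squares of Bernoulli numbers; here the appearance of $\zeta^2$ rather than $\zeta$, and correspondingly $B_{n+1}^2$ rather than $B_{n+1}$ in \eref{241}, is exactly what destroys convergence. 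I would close with the one-line summary that \eref{105} $\Rightarrow$ zero radius of convergence $\Rightarrow$ \eref{107a}, which completes the proof of Theorem~\ref{108}.
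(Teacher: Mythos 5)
Your proposal is correct and follows essentially the same route as the paper: derive the exact identity \eref{104} by shifting the contour in \eref{102} and collecting residues, note that \eref{107} makes \eref{24} a genuine asymptotic expansion, and use the Bernoulli-number lower bound \eref{105} to conclude that the formal power series has zero radius of convergence, which (since $J_{\kappa_N}=\Th(t)-S_N(t)$ for fixed $t>0$) forces \eref{107a}. You merely make explicit the last implication, which the paper leaves implicit, so no further comment is needed.
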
 

The function $f_t(k):=\ue^{-tk}$, $t>0$, fulfills the requirements of the trace formula, theorem \ref{12}, for our quantum graph and thus we get for the l.h.s.\ in (\ref{13}) exactly the trace of the wave group $\ue^{-t\sqrt{-\Delta}}$ (\ref{20}). Therefore, we evaluate the r.h.s.\ in (\ref{13}) and obtain 
\begin{equation}
\label{25}
\Th(t)=\Th^W(t)+\Th^\mathrm{po.}(t),
\end{equation}
where we have defined a ``Weyl term'' 
\begin{equation}
\label{26}
\eqalign{
\Th^W(t) & :=\int\limits_0^{\infty}(\ln k+2\gamma)f_t(k)\ud k+\frac{f_t(0)}{4} \\
           & =\int\limits_{0}^{\infty}\ue^{-tk}\ln k\ud k +2\gamma\int\limits_{0}^{\infty}\ue^{-tk}\ud k+\frac{1}{4}\\
           & =-\frac{\ln t}{t}+\frac{\gamma}{t}+ \frac{1}{4},\quad t>0, 
}
\end{equation}
and a ``periodic orbit term'' ($\lfp=\frac{2\pi}{n}$), $t>0$,
\begin{equation}
\label{27}
\fl\eqalign{
\Th^\mathrm{\po}(t) & :=2\pi\sum\limits_{n=1}^{\infty}d(n)\int\limits_0^{\infty}\left[\frac{2}{\pi}K_0\left(4\pi\sqrt{nk}\right)-Y_0\left(4\pi\sqrt{nk}\right)\right]f_t(k)\ud k\\
         &     =4\pi\sum\limits_{n=1}^{\infty}\dz(n)\int\limits_0^{\infty}\left[\frac{2}{\pi}K_0\left(4\pi\sqrt{n}y\right)-Y_0\left(4\pi\sqrt{n}y\right)\right]\ue^{-ty^2}y\ud y\\
         & =\frac{2}{t}\sum\limits_{n=1}^{\infty}\dz(n)\left[\exp\left(\frac{8\pi^3}{t\lfp}\right)\Ee\left(\frac{8\pi^3}{t\lfp}\right) -\exp\left(-\frac{8\pi^3}{t\lfp}\right)\Es\left(\frac{8\pi^3}{t\lfp}\right)\right].
}
\end{equation}
Here the integrals in the second line can be found in \cite[p.\,352]{Prudnikov:1986} respectively \cite[p.\,266]{Prudnikov:1986}, where $\Ee(x)$ and $\Es(x)$, $x>0$, are the exponential integral functions defined in \cite[p.\,342]{Magnus:1966}).
%
%
%
%
Formula \eref{27} has also been given in \cite{Oberhettinger:1972}. From their asymptotics for $x\rightarrow\infty$ \cite[pp.\,346, 347]{Magnus:1966} we obtain (after correcting a typographical error and replacing $\ue^{-x}$ by $\ue^x$ on p.347 l.c.)
\begin{equation}
\label{28}
e^x\Ee(x)-e^{-x}\Es(x)=-\sum\limits_{m=0}^{M}\frac{2(2m+1)!}{x^{2m+2}}+\Or\left(\frac{1}{x^{2M+4}}\right),\quad x\rightarrow \infty.
\end{equation}
Therefore, we get for the ``periodic orbit term'' (\ref{27}) for $t\rightarrow0^+$ the asymptotics ($x=\frac{4\pi^2n}{t}$)
\begin{equation}
\label{29}
\hspace{-1cm}\eq{
\Th^\mathrm{Osc.}(t) &=-\frac{4}{t}\sum\limits_{n=1}^{\infty}d(n)\left[\sum\limits_{m=0}^{M}\left(\frac{t}{4\pi^2}\right)^{2m+2}\frac{(2m+1)!}{n^{2m+2}}+\Or\left(\left(\frac{t}{n}\right)^{2M+4}\right)\right]\\
& =-\frac{4}{t}\sum\limits_{m=0}^{M}\left[(2m+1)!\left(\frac{t}{4\pi^2}\right)^{2m+2}\left[\sum\limits_{n=1}^{\infty}\frac{d(n)}{n^{2m+2}}\right ]\right]+\Or\left({t^{2M+3}}\right)\\
& =-\frac{4}{t}\sum\limits_{m=0}^{M}(2m+1)!\zeta^2(2m+2)\left(\frac{t}{4\pi^2}\right)^{2m+2}+\Or\left({t^{2M+3}}\right)\\
& =-\sum\limits_{m=0}^{M}\frac{B_{2m+2}^{2}}{(2m+2)^2(2m+1)!}t^{2m+1}+\Or\left({t^{2M+3}}\right),\quad t\rightarrow0^+,
}
\end{equation}
where we have used \eref{101} and $\zeta(2m+2)=\frac{(2\pi)^{2m+2}}{2(2m+2)!}\left|B_{2m+2}\right|$ for $m\in\nz_0$. Thus, we infer from (\ref{25}), (\ref{26}) and (\ref{29}) for $t\rightarrow0^+$
\begin{equation}
\label{30}
\Th(t)=-\frac{\ln t}{t}+\frac{\gamma}{t}+\frac{1}{4}-\sum\limits_{m=0}^{M}\frac{B_{2m+2}^{2}}{(2m+2)^2(2m+1)!}t^{2m+1}+\Or\left({t^{2M+3}}\right), 
\end{equation}
in complete agreement with (\ref{24}) resp. \eref{104}.

\section{The ``generalized spectral zeta'' function}
\label{30a}

We define the ``generalized spectral zeta function'' for $\sqrt{-\Delta}$ (motivated by \cite{Steiner:1987}) as follows
\begin{equation}
\label{31}
\eqalign{
\Zh(s,k) & :=\tr\left[\sqrt{-\Delta}+k\right]^{-s} \\
 & =\sum\limits_{n=1}^{\infty}\frac{\dz(n)}{(n+k)^s}, \quad \re s>1, \quad k\in\kz\setminus(-\nz), 
}
\end{equation}
(using for ${\frac{1}{(n+k)^s}:=\ue^{-s\ln(n+k)}}$ the principal branch of the logarithm) which is proportional to the Mellin transform of 
\begin{equation}
\label{32}
{\Th}(t,k):=\sum\limits_{n=1}^{\infty}\dz(n)\ue^{-(n+k)t}=\ue^{-kt}\Th(t),\quad t>0,\quad \re k>-1.
\end{equation}
We call ${\Th}(t,k)$ the trace of the ``generalized (euclidean) wave group''. The Mellin transform $\widehat{f}$ of a function $f$ is defined by 
\begin{equation}
\label{32aaaa}
\widehat{f}(s):=\int\limits_0^{\infty}t^{s-1}f(t)\ud t \quad 
\end{equation}
and defines a holomorphic function in a proper domain of definition. Due to the asymptotics \eref{24} of $\Th(t)$ for $t\rightarrow0^+$ and the trivial estimate ${\Th}(t,k)=\Or\left(\ue^{-(k+1)t}\right)$ for $t\rightarrow\infty$, the Mellin transform of $\Th(t)$ and therefore also of ${\Th}(t,k)$  exists at least for $\re s>1$ and $\re k>-1$. Similarly, as in \cite[p.\,310]{Zeidler:2006} summation and integration can be interchanged and we get
\begin{equation}
\label{32ab}
\eqalign{
{\Zh}(s,k) &=\frac{1}{\Gamma(s)}\widehat{{\Th}}(s,k)\\
         &=\frac{1}{\Gamma(s)}\int\limits_{0}^{\infty}t^{s-1}\ue^{-kt}\Th(t)\ud t,\quad \re s>1, \quad \re k>-1.
}
\end{equation}

In order to obtain the analytic continuation of ${\Zh}(s,k)$ as a function of $s$ (for fixed $\re k>-1$), 
we require the small-$t$ asymptotics of $\Th(t,k)$ for fixed $k$.
%
Using the Taylor expansion of $\ue^x$ and the asymptotic expansion \eref{24} we obtain for the asymptotics of ${\Th}(t,k)$ for fixed $k$ 
\be
\label{800}
\eq{
{\Th}(t,k) &= \ue^{-kt}{\Th}(t)\\
           &\sim\left[\sum\limits_{n=0}^{\infty}\frac{(-k)^nt^n}{n!}\right]\left[-\frac{\ln t}{t}+\frac{\gamma}{t}+\sum\limits_{m=0}^{\infty}\frac{B_{m+1}^2}{(m+1)(m+1)!}(-t)^m\right]\\
           &\sim-\frac{\ln t}{t}+\frac{\gamma}{t}+\sum\limits_{n=0}^{\infty}(-1)^n\frac{k^{n+1}}{(n+1)!}t^n \ln t\\
           &\phantom{\sim} +\sum\limits_{l=0}^{\infty}(-1)^l\left[a_l(k)-\frac{\gamma k^{l+1}}{(l+1)!}\right]t^l,\quad t\rightarrow0^+,\quad k\in\kz,
}
\ee
where we have defined
\be
\label{801}
a_l(k):=\sum\limits_{n+m=l}\frac{B^2_{m+1}k^n}{(m+1)(m+1)!n!}.
\ee
Therefore, we have proved the following theorem
\begin{theorem}
\label{1000}
${\Th}(t,k)$ possesses for fixed $k \in\kz$ and $t\rightarrow 0^+$ the asymptotics $\eref{800}$.
\end{theorem}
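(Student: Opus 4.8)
The plan is to build on the elementary identity already recorded in \eref{32}, namely ${\Th}(t,k)=\ue^{-kt}\,{\Th}(t)$, and to combine it with the asymptotic expansion of ${\Th}(t)$ furnished by Theorem \ref{108}. First I would rewrite \eref{24} in the slightly more compact form
\[
{\Th}(t)\sim-\frac{\ln t}{t}+\frac{\gamma}{t}+\sum_{m=0}^{\infty}\frac{B_{m+1}^2}{(m+1)(m+1)!}(-t)^m,\qquad t\rightarrow0^+,
\]
which is legitimate because the $m=0$ term equals $\frac{B_1^2}{1\cdot1!}=\frac14$ (recall $B_1=-\frac12$), so the constant $\frac14$ is absorbed into the series. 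Next I would write down the everywhere-convergent Taylor series $\ue^{-kt}=\sum_{n=0}^{\infty}\frac{(-k)^n}{n!}t^n$, valid for every $k\in\kz$, and multiply the two expansions term by term.

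The only point that deserves genuine care is the justification that multiplying a (in general divergent) asymptotic expansion of the schematic shape ``Laurent polynomial in $t$ plus $\ln t$ times a Laurent polynomial plus $\orr(t^{N})$'' by a convergent power series again yields an asymptotic expansion of the same shape, with coefficients given by the Cauchy product. I would prove this for each fixed truncation order $N$ by replacing ${\Th}(t)$ with $P_N(t)+Q_N(t)\ln t+\orr(t^{N})$, where $P_N$ and $Q_N$ are rational functions each carrying a single simple pole at $t=0$, and $\ue^{-kt}$ with its Taylor polynomial of sufficiently high degree plus an $\Orr(t^{N+2})$ remainder. One then observes that a bounded factor leaves the error class unchanged, while products of the remainders against the $t^{-1}$ and $t^{-1}\ln t$ terms lower the order by at most one; taking the Taylor polynomial of $\ue^{-kt}$ long enough therefore recovers every coefficient up to order $t^{N}\ln t$. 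This is the routine ``manipulation of asymptotic series'' in the sense of \cite[p.\,11]{Erdelyi:1956}.

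It then remains to collect like terms, which is pure bookkeeping. The singular part $-\frac{\ln t}{t}+\frac{\gamma}{t}$ is reproduced unchanged, since the products of these two terms with the $n\geq1$ part of the Taylor series of $\ue^{-kt}$ are regular. The logarithmic regular terms $t^n\ln t$ arise exclusively from $-\frac{\ln t}{t}\cdot\ue^{-kt}$; reindexing $n\mapsto n+1$ and using $(-k)^{n+1}=-(-1)^n k^{n+1}$ gives $\sum_{n=0}^{\infty}(-1)^n\frac{k^{n+1}}{(n+1)!}t^n\ln t$. The non-logarithmic regular terms split into the contribution of $\frac{\gamma}{t}\cdot\ue^{-kt}$, which after the same reindexing equals $-\sum_{l=0}^{\infty}(-1)^l\frac{\gamma k^{l+1}}{(l+1)!}t^l$, and the Cauchy product of $\ue^{-kt}$ with the Bernoulli series, whose coefficient of $t^l$ is $(-1)^l\sum_{n+m=l}\frac{B_{m+1}^2 k^n}{(m+1)(m+1)!\,n!}=(-1)^l a_l(k)$ with $a_l(k)$ as in \eref{801}. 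Summing these three pieces yields exactly \eref{800}. I do not anticipate a deep obstacle; the single delicate spot is the interplay between the negative power $t^{-1}$, the $\ln t$ factor and the error terms, i.e.\ making sure the Taylor polynomial of $\ue^{-kt}$ is carried far enough before comparing coefficients.
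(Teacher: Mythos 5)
Your proof is correct and follows essentially the same route as the paper: write ${\Th}(t,k)=\ue^{-kt}\Th(t)$, multiply the everywhere-convergent Taylor series of $\ue^{-kt}$ by the asymptotic expansion \eref{24} (with the constant $\frac14$ absorbed as the $m=0$ Bernoulli term), and collect terms to obtain \eref{800}; your coefficient bookkeeping reproduces \eref{800} and \eref{801} exactly. The only difference is that you spell out the standard justification for multiplying an asymptotic expansion with $t^{-1}$ and $t^{-1}\ln t$ terms by a convergent power series, which the paper takes for granted.
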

To proceed further, we need the following theorem \cite[p.\,307]{Zeidler:2006}, \cite{Flajolet:1995}
\begin{theorem}[\cite{Zeidler:2006}, \cite{Flajolet:1995}]
\label{32a}
Let $f(t):(0,\infty)\rightarrow\kz$ be a continuous function and of rapid decay at infinity (i.e.\ $t^af(t)$ is bounded on $(b,\infty)$ for some $b>0$ and for any $a\in\rz$). Furthermore, there exists an asymptotic expansion for $f(t)$, $t\rightarrow0^+$, of the form
\begin{equation}
\label{32b}
f(t)\sim\sum\limits_{j=1}^{\infty}a_jt^{\alpha_j}(\ln t)^{m_j}, \quad t\rightarrow 0^+,
\end{equation}
where $\alpha_j$ is an increasing sequence of real numbers tending to $\infty$ (maybe finitely many are negative) and $m_j\in\nz_{0}$ are arbitrary. Then the Mellin transform $\widehat{f}(s)$ has a meromorphic extension to all of $\kz$ with poles at $s=-\alpha_j$ and respective principal part $\bigl(\Pp[\widehat{f}]_{-\alpha_j}\bigr)(s)$ (the sum gathers all contributions of $\eref{32b}$ corresponding to the same value of $\alpha_j$)
\begin{equation} 
\label{32c}
\bigl(\Pp[\widehat{f}]_{-\alpha_j}\bigr)(s)=\sum\limits_{m_j}a_j(-1)^{m_j}\frac{(m_j)!}{\left({\alpha_j+s}\right)^{m_j+1}}.
\end{equation}
\end{theorem}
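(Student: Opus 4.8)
The plan is to reduce the statement to the single prototype case $f(t) = t^{\alpha}(\ln t)^{m}$ near $0$ (multiplied by a smooth cutoff), compute its Mellin transform explicitly, and then patch the pieces together using the splitting $\widehat f(s) = \int_0^1 + \int_1^\infty$. First I would fix a cutoff point, say $t=1$, and write $\widehat f(s) = \widehat f_0(s) + \widehat f_\infty(s)$ with $\widehat f_0(s) := \int_0^1 t^{s-1}f(t)\,\ud t$ and $\widehat f_\infty(s) := \int_1^\infty t^{s-1}f(t)\,\ud t$. The rapid-decay hypothesis at infinity immediately gives that $\widehat f_\infty(s)$ is entire in $s$ (the integral converges absolutely and locally uniformly for every $s\in\kz$, since $t^af(t)$ is bounded on $(1,\infty)$ for every $a$), so $\widehat f_\infty$ contributes no poles and can be ignored for the pole structure.

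Next I would treat $\widehat f_0(s)$. For $\re s$ large enough that $\re s + \alpha_1 > 0$, the integral $\int_0^1 t^{s-1}f(t)\,\ud t$ converges. The key step is a subtract-and-estimate argument: given $N$, choose the partial sum $S_N(t) := \sum_{j=1}^{N} a_j t^{\alpha_j}(\ln t)^{m_j}$ so that $f(t) - S_N(t) = \Or\bigl(t^{\alpha_{N+1}}(\ln t)^{m_{N+1}}\bigr)$ as $t\to 0^+$. Then
\begin{equation}
\label{plan:split}
\widehat f_0(s) = \sum_{j=1}^{N} a_j \int_0^1 t^{s-1+\alpha_j}(\ln t)^{m_j}\,\ud t + \int_0^1 t^{s-1}\bigl(f(t) - S_N(t)\bigr)\,\ud t,
\end{equation}
where the remainder integral is holomorphic for $\re s > -\alpha_{N+1}$ (again by the asymptotic estimate, which makes the integrand $\Lz^1$ near $0$ with room to spare, up to the logarithmic factor which costs nothing). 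For the prototype integrals I would use the elementary identity
\begin{equation}
\label{plan:proto}
\int_0^1 t^{\sigma-1}(\ln t)^{m}\,\ud t = \frac{(-1)^m\, m!}{\sigma^{m+1}}, \quad \re\sigma > 0,
\end{equation}
obtained by the substitution $t = \ue^{-u}$ (turning it into $\int_0^\infty u^m \ue^{-\sigma u}\,\ud u$) or by differentiating $\int_0^1 t^{\sigma-1}\ud t = 1/\sigma$ repeatedly in $\sigma$. With $\sigma = s + \alpha_j$ this shows each term in the finite sum of \eref{plan:split} extends meromorphically to $\kz$ with a single pole at $s = -\alpha_j$ of order $m_j+1$ and principal part exactly $a_j(-1)^{m_j}(m_j)!/(\alpha_j+s)^{m_j+1}$; summing over the $j$ with the same $\alpha_j$ gives \eref{32c}. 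Letting $N\to\infty$ (i.e.\ observing that the construction is consistent for all $N$ and the identity theorem pins down the continuation) yields the meromorphic extension to all of $\kz$ with the claimed poles.

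The main obstacle — really the only subtle point — is making the manipulation of the asymptotic series rigorous: one must be careful that \eref{32b} is an asymptotic expansion, not a convergent one, so the argument cannot sum the series directly but must instead run through each finite $N$, control the remainder integral, and invoke uniqueness of analytic continuation to glue the regions $\re s > -\alpha_{N+1}$ together. A secondary technical point is handling the logarithmic powers $(\ln t)^{m_j}$ inside the error estimate for $f(t) - S_N(t)$; this is harmless because $(\ln t)^{m}$ is dominated by $t^{-\epsilon}$ for any $\epsilon>0$ as $t\to0^+$, so the remainder integral in \eref{plan:split} is in fact holomorphic on the slightly smaller but still sufficient half-plane $\re s > -\alpha_{N+1}+\epsilon$, and letting $\epsilon\to0$ recovers the sharp statement. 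Everything else is the routine computation \eref{plan:proto} and bookkeeping of which terms share a common exponent $\alpha_j$.
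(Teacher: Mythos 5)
The paper does not prove this theorem at all: it is quoted from \cite{Zeidler:2006} and \cite{Flajolet:1995}, so there is no internal proof to compare against. Your argument is exactly the standard ``direct mapping'' proof given in those references, and it is correct: splitting $\widehat f$ at $t=1$, noting $\int_1^\infty$ is entire by rapid decay, subtracting the partial sum $S_N$ so that the remainder integral is holomorphic for $\re s>-\alpha_{N+1}$, evaluating the prototype $\int_0^1 t^{\sigma-1}(\ln t)^{m}\,\ud t=(-1)^m m!/\sigma^{m+1}$ to read off the principal parts \eref{32c}, and exhausting $\kz$ via $\alpha_N\rightarrow\infty$ together with uniqueness of analytic continuation; the two delicate points (the expansion is only asymptotic, and the logarithmic factors at the abscissa) are handled correctly.
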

It is obvious that ${\Th}(t,k)$ fulfills as a function of $t$ the requirements of theorem \ref{32a} for $\re k>-1$. The reciprocal gamma function $\Gamma(s)^{-1}$ possesses at $s=1$ the Taylor expansion \cite[p.\,2]{Magnus:1966}
\be
\label{802}
\Gamma(s)^{-1}=1+\gamma(s-1)+\Or\left((s-1)^2\right),
\ee
and at $s=-n$, $n\in\nz_0$, the Taylor expansion \cite[p.\,2]{Magnus:1966}
\be
\label{803}
\Gamma^{-1}(s)=(-1)^nn!(s+n)+\Or\left((s+n)^2\right),\quad n\in\nz_0.
\ee
Combining theorem \ref{32a}, \eref{32}, \eref{800}, \eref{802} and \eref{803} we obtain the following theorem
\begin{theorem}
\label{804}
$\Zh(s,k)$, defined in $\eref{31}$, $\re k>-1$ fixed, has as a function of $s$ a meromorphic extension to all of $\kz$ with poles at $s=-(n-1)$, $n\in\nz_0$, and respective principal parts
\be
\label{805}
\eq{
\Pp\left[\Zh\right]_{s=1}(s,k) &=\frac{1}{(s-1)^2}+\frac{2\gamma}{s-1}\\
\Pp\left[\Zh\right]_{s=-n}(s,k) &=-\frac{k^{n+1}}{n+1}\frac{1}{s+n}, \quad n\in\nz_0.
}
\ee
\end{theorem}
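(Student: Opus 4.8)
The plan is to read off the meromorphic structure of $\Zh(s,k)$ from its Mellin representation \eref{32ab}, namely $\Zh(s,k)=\Gamma(s)^{-1}\widehat{\Th}(s,k)$, by applying Theorem \ref{32a} to the function $t\mapsto\Th(t,k)$ and then dividing by $\Gamma(s)$ using its known behaviour \eref{802}, \eref{803} at the non-positive integers.

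First I would verify that, for fixed $k$ with $\re k>-1$, the function $t\mapsto\Th(t,k)$ meets the hypotheses of Theorem \ref{32a}: it is continuous on $(0,\infty)$, it is of rapid decay at infinity because of the bound $\Th(t,k)=\Or\left(\ue^{-(k+1)t}\right)$, and by Theorem \ref{1000} it has the small-$t$ expansion \eref{800}, which is precisely of the form \eref{32b} with exponents $\alpha_j\in\{-1\}\cup\nz_0$ and logarithmic powers $m_j\in\{0,1\}$. Hence $\widehat{\Th}(\cdot,k)$ extends meromorphically to all of $\kz$ with poles only at $s=1$ and $s=-n$, $n\in\nz_0$. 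Substituting the coefficients of \eref{800} into \eref{32c} and grouping the terms that share an exponent, I get at $s=1$ (from the pair $-t^{-1}\ln t$ and $\gamma t^{-1}$) the principal part $\frac{1}{(s-1)^2}+\frac{\gamma}{s-1}$, and at $s=-n$ (from the pair $(-1)^n\frac{k^{n+1}}{(n+1)!}t^n\ln t$ and $(-1)^n\left[a_n(k)-\frac{\gamma k^{n+1}}{(n+1)!}\right]t^n$, with $a_n(k)$ as in \eref{801}) the double pole

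\be
-\,(-1)^n\frac{k^{n+1}}{(n+1)!}\,\frac{1}{(s+n)^2}\;+\;(-1)^n\left[a_n(k)-\frac{\gamma k^{n+1}}{(n+1)!}\right]\frac{1}{s+n}.
\ee

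The second step is to multiply by the entire function $\Gamma(s)^{-1}$, which creates no new poles; thus $\Zh(\cdot,k)$ extends meromorphically to all of $\kz$ and its singularities lie among those of $\widehat{\Th}(\cdot,k)$. At $s=1$, $\Gamma(s)^{-1}=1+\gamma(s-1)+\Or\left((s-1)^2\right)$ is regular and nonzero, and multiplying out, the $\Or(s-1)$ term produces an extra summand $\frac{\gamma}{s-1}$, giving $\Pp\left[\Zh\right]_{s=1}(s,k)=\frac{1}{(s-1)^2}+\frac{2\gamma}{s-1}$; this doubling of the residue coefficient is the only genuinely non-mechanical point near $s=1$. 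At $s=-n$, by contrast, \eref{803} says $\Gamma(s)^{-1}=(-1)^n n!\,(s+n)+\Or\left((s+n)^2\right)$ has a \emph{simple zero}; multiplying the displayed double pole by it lowers the pole order by one, so the $t^n\ln t$--generated double pole collapses to the simple pole $-\frac{k^{n+1}}{n+1}\,\frac{1}{s+n}$ (using $n!/(n+1)!=1/(n+1)$), while the $\frac{1}{s+n}$-term of the display and all higher-order data become holomorphic at $s=-n$. This is exactly \eref{805}.

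I expect the main obstacle to be essentially bookkeeping: checking the applicability of Theorem \ref{32a} (continuity, rapid decay, and that \eref{800} is in the normal form \eref{32b}, in particular matching up the terms with equal exponents $\alpha_j$), and then carrying the Laurent--Taylor products to the correct order at each singularity. The one structural observation --- and the reason the statement lists only \emph{simple} poles at $s=-n$ although $\widehat{\Th}(\cdot,k)$ has double poles there --- is that the simple zero of $1/\Gamma$ at every non-positive integer cancels exactly one order of the pole generated by the logarithmic terms of \eref{800}. Finally, one should remark that the interchange of summation and integration underlying \eref{32ab}, hence the right to feed $\Th(t,k)$ termwise into the Mellin machinery, is justified as in \cite[p.\,310]{Zeidler:2006} and \cite{Flajolet:1995}.
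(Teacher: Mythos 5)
Your proposal is correct and follows essentially the same route as the paper: the paper likewise obtains Theorem \ref{804} by feeding the small-$t$ expansion \eref{800} of $\Th(t,k)$ into theorem \ref{32a} via the Mellin representation \eref{32ab} and then dividing by $\Gamma(s)$ using \eref{802} and \eref{803}, which produces exactly the doubling $\gamma\to 2\gamma$ at $s=1$ and the collapse of the double poles at $s=-n$ to the simple poles $-\frac{k^{n+1}}{n+1}\frac{1}{s+n}$ that you describe.
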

Note that the pole at $s=1$ is independent of $k$ in contrast to the poles at $s=-n$, $n\in\nz_0$, which are absent for $k=0$. Furthermore, \eref{805} is in complete agreement with the obvious identity (see \eref{101} and \eref{31})
\be
\label{805a}
\Zh(s,0)=\zeta^2(s).
\ee
Finally, let us investigate the holomorphy of ${\Zh}(s,k)$ defined in (\ref{31}) as a function on $\kz^2$ in $s$ and $k$. For this reason, we need in the sequel Hartogs' theorem \cite{Hartogs:1906} (this version is stated and improved in \cite{Filippov:1973}).
\begin{theorem}[Hartogs' theorem]
\label{55a}
Let $K$ be a domain in the space of the complex variables $k=\left(k_1,\ldots,k_n\right)$, and let $S_0\subset S$ be domains in the space of the complex variables $s=\left(s_1,\ldots,s_m\right)$. If $f(s,k)$ is analytic with respect to the set of variables in the domain $s\in S_0$, $k\in K$, and if for each fixed $k\in K$ it is analytic with respect to $s$ in the domain $S$, then $f(s,k)$ is analytic with respect to the set of variables $s$, $k$ in the domain $S\times K$.
\end{theorem}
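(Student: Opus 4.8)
The statement is the classical theorem of Hartogs, so strictly speaking the ``proof'' is the reference \cite{Hartogs:1906,Filippov:1973}; but here is the argument I would reconstruct, which in this form does \emph{not} require the deep Hartogs theorem on separately analytic functions. Since analyticity is a local property, the plan is to fix a point $(s^\ast,k^\ast)\in S\times K$ and exhibit a polydisc neighbourhood on which $f$ is analytic in the pair $(s,k)$. As $S$ is a domain it is connected, so one may join a point $s_0\in S_0$ to $s^\ast$ by a path lying in $S$. Choose a closed polydisc $\overline{\Delta_k}\subset K$ around $k^\ast$, and cover the path by a finite chain of polydiscs with centres $a_0=s_0,a_1,\dots,a_N=s^\ast$; the chain will be built inductively so that each new centre $a_{j+1}$ lies inside the polydisc of joint analyticity produced at the previous stage. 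Since $s_0\in S_0$ and $S_0$ is open, the first polydisc $\Delta^0$ can be taken inside $S_0$, so that by hypothesis $f$ is jointly analytic on $\Delta^0\times\Delta_k$.

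The key step is analytic continuation along this chain while keeping explicit control of the Taylor coefficients as functions of $k$. On $\Delta^0\times\Delta_k$ write $f(s,k)=\sum_\alpha c^{(0)}_\alpha(k)(s-a_0)^\alpha$; each $c^{(0)}_\alpha$ is a $k$-derivative of the jointly analytic $f$, hence analytic on $\Delta_k$, and the multidimensional Cauchy estimates give a geometric bound $|c^{(0)}_\alpha(k)|\le M_0\,r_0^{-|\alpha|}$ valid uniformly for $k\in\overline{\Delta_k}$, with $M_0:=\sup\{|f(s,k)|:s\in\overline{\Delta^0},\,k\in\overline{\Delta_k}\}<\infty$ and $r_0$ the polyradius of $\Delta^0$. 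For the next centre $a_1\in\Delta^0$, reorganise the series about $a_1$: for each fixed $k$ this is legitimate because the original series converges on $\Delta^0\ni a_1$ and because $f(\cdot,k)$ is in fact analytic on all of $S$, so the rearranged series sums to $f$ on a polydisc about $a_1$. Its coefficients $c^{(1)}_\alpha(k)=\sum_{\beta\ge\alpha}\binom{\beta}{\alpha}c^{(0)}_\beta(k)(a_1-a_0)^{\beta-\alpha}$ converge uniformly in $k\in\overline{\Delta_k}$ thanks to the geometric bound just obtained, hence are again analytic in $k$, and then $f$ itself is exhibited on (a polydisc about $a_1$)$\times\Delta_k$ as a uniform limit of jointly analytic partial sums, hence jointly analytic there. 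Iterating along the chain — using at the $j$-th stage that $f$ is now known to be jointly analytic, hence continuous and locally bounded, on a polydisc about $a_j$, and that $f(\cdot,k)$ remains analytic on all of $S$ — transports joint analyticity from $a_0$ to $a_N=s^\ast$, which proves the theorem.

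The hard part is the uniformity in $k$ at each reorganisation: what makes the induction close is precisely that the initial data on $S_0\times K$ are \emph{jointly} analytic (not merely analytic in $s$ for fixed $k$), since this is what furnishes the $k$-uniform Cauchy bounds on the Taylor coefficients and thereby guarantees that analyticity in $k$ survives both the rearrangement of power series and the passage to the limit; without it one would be thrown back on the genuinely deep Hartogs theorem. The remaining points are routine: that the chain of polydiscs can be chosen with each closure contained in $S$ and with $a_{j+1}$ in the previously established neighbourhood (by compactness of the path and openness of $S$, shrinking radii as needed), and that $S_0$ being a (nonempty) domain is what allows the chain to be started strictly inside $S_0$.
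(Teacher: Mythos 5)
The paper itself gives no proof of this statement --- it is quoted as a classical result with the references to Hartogs (1906) and Filippov (1973) --- so your proposal has to be measured against the classical argument, and there it has a genuine gap: the chain-of-polydiscs induction never advances. At the first rearrangement the only bound on the new coefficients $c^{(1)}_\alpha(k)$ that is \emph{uniform in} $k\in\overline{\Delta_k}$ is the one inherited from the Cauchy estimates on $\Delta^0\times\overline{\Delta_k}$, i.e. from $|c^{(0)}_\beta(k)|\le M_0 r_0^{-|\beta|}$; summing the rearrangement formula with this bound gives uniform convergence of the series about $a_1$ only for $|s-a_1|<r_0-|a_1-a_0|$, that is, on a polydisc \emph{contained in} $\Delta^0$. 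So the set on which you have ``exhibited $f$ as a uniform limit of jointly analytic partial sums'' lies inside the region where joint analyticity was already known; iterating, each new polydisc is contained in the previous one, and the construction never leaves $\Delta^0$, hence nothing is transported along the path to $s^\ast$. The observation that for each \emph{fixed} $k$ the rearranged series converges on a much larger polydisc (because $f(\cdot,k)$ is analytic on all of $S$) is a pointwise statement in $k$ and by itself does not yield joint analyticity on that larger polydisc.

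What is missing is precisely the non-elementary heart of the theorem: one must upgrade the pointwise bound $\limsup_{|\alpha|\to\infty}|c_\alpha(k)|^{1/|\alpha|}\le 1/R$ (valid for each single $k$ because $f(\cdot,k)$ is analytic on a large polydisc inside $S$) to a bound holding \emph{locally uniformly} in $k$, using only the cruder uniform estimate $|c_\alpha(k)|^{1/|\alpha|}\le C^{1/|\alpha|}/r_0$ coming from the old polydisc. In the standard proofs this step is carried out with Hartogs' lemma on sequences of (pluri)subharmonic functions, applied to $u_\alpha(k)=\frac{1}{|\alpha|}\ln|c_\alpha(k)|$ (the coefficients are analytic in $k$, so these are plurisubharmonic and uniformly bounded above); Baire-category or Osgood-type arguments give the improvement only on a dense open set of $k$, not on all of $K$. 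So the very claim that your argument ``does not require the deep Hartogs ingredient'' is where it breaks down: once that subharmonicity lemma is inserted at each step of your chain, the radius of uniform convergence about $a_{j+1}$ becomes essentially the distance to the boundary of $S$ rather than to the boundary of $\Delta^j$, the continuation does advance, and the argument turns into the classical proof of the theorem that the paper cites without proof.
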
 
Due to the normal convergence of the series in \eref{31}, the function $\Zh(s,k)$, $\re s>1$ fixed, is as a function of $k$ holomorphic on $\kz\setminus\left\{z\leq-1\right\}$. Combining this with theorem \ref{804} we infer the following theorem (see \cite[p.\,135]{Cartan:1995}):
\begin{theorem}
\label{806}
$\Zh(s,k)$, defined in $\eref{31}$, is an analytic function on
\be
\label{807}
\mathcal{S}_0\times\widetilde{\mathcal{K}}:=\left\{(s,k)\in\kz\times\kz; \quad  \re s>1, \quad  k\in\kz\setminus\left\{z\leq-1\right\}\right\}.
\ee
\end{theorem}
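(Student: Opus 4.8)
The plan is to obtain joint analyticity from separate analyticity in the two groups of variables, using Hartogs' theorem (Theorem~\ref{55a}) with one $s$-variable and one $k$-variable. Three ingredients are needed: (a) for each fixed $k\in\widetilde{\mathcal{K}}=\kz\setminus\{z\leq-1\}$ the function $s\mapsto\Zh(s,k)$ is holomorphic on the half-plane $\mathcal{S}_0=\{\re s>1\}$; (b) for each fixed $s$ with $\re s>1$ the function $k\mapsto\Zh(s,k)$ is holomorphic on $\widetilde{\mathcal{K}}$; and (c) $\Zh$ is jointly analytic on some subproduct $S_0'\times\widetilde{\mathcal{K}}$ with $S_0'\subset\mathcal{S}_0$. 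Feeding (a) and (c) into Theorem~\ref{55a} (with $S_0=S_0'$, $S=\mathcal{S}_0$, $K=\widetilde{\mathcal{K}}$) then yields joint analyticity on $\mathcal{S}_0\times\widetilde{\mathcal{K}}$, which is the assertion.

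For ingredient (a): when $\re k>-1$ this is already contained in Theorem~\ref{804}, since the meromorphic continuation produced there is holomorphic away from its poles, the first of which sits at $s=1$. For the remaining points $k\in\widetilde{\mathcal{K}}$, i.e.\ those with $\re k\leq-1$ but $k\notin(-\infty,-1]$, I would argue directly from \eref{31}: the excision of the ray $(-\infty,-1]$ guarantees $n+k\notin(-\infty,0]$ for every $n\in\nz$, so each summand $d(n)(n+k)^{-s}=d(n)\ue^{-s\ln(n+k)}$ (principal logarithm) is entire in $s$; since $|n+k|$ grows linearly in $n$, the series converges absolutely and locally uniformly in $s$ on $\{\re s>1\}$, and Weierstrass' theorem gives holomorphy of the sum.

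Ingredient (b) is precisely the remark recorded immediately before the statement: for fixed $s$ with $\re s>1$ the series \eref{31} converges normally on $\kz\setminus\{z\leq-1\}$, hence defines a holomorphic function of $k$ there. For (c) one may take $S_0'=\{\re s>1+\epsilon_0\}$ for a fixed small $\epsilon_0>0$ (or, in fact, $S_0'=\mathcal{S}_0$): on any product of a compact $L\subset S_0'$ and a compact $K'\subset\widetilde{\mathcal{K}}$ one has $|d(n)(n+k)^{-s}|\leq C(L,K')\,d(n)\,|n+k|^{-\re s}$, and the bound $d(n)=\Or(n^{\epsilon})$ together with $\inf_{n}|n+k|/n>0$ makes this summable; since every term is jointly holomorphic, the several-variables Weierstrass theorem yields joint holomorphy on $S_0'\times\widetilde{\mathcal{K}}$. (In particular this already proves the theorem directly; invoking Hartogs' theorem as above only serves to streamline the passage to the full half-plane.)

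The only genuinely delicate point is the bookkeeping around the branch cut: one must check that $(n+k)^{-s}$, defined via the principal branch of the logarithm, is unambiguous and jointly continuous on all of $\mathcal{S}_0\times\widetilde{\mathcal{K}}$---which is exactly why the half-plane $\{\re k>-1\}$ of Theorem~\ref{804} must be enlarged to the slit plane $\widetilde{\mathcal{K}}$, the removal of $(-\infty,-1]$ being what keeps every $n+k$, $n\in\nz$, off the negative real axis---and that the dominating estimates stay uniform on compacta notwithstanding the factors $\ue^{|\im s|\,|\arg(n+k)|}$, which are bounded on compact subsets of $\mathcal{S}_0\times\widetilde{\mathcal{K}}$. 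With this in place all remaining steps are routine.
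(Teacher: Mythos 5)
Your proposal is correct, and its core coincides with the paper's argument: everything rests on the normal (locally uniform) convergence of the series \eref{31} on compact subsets of $\mathcal{S}_0\times\widetilde{\mathcal{K}}$, each summand $d(n)\ue^{-s\ln(n+k)}$ being jointly holomorphic there. The paper phrases the conclusion by combining the two separate statements (holomorphy in $k$ for fixed $\re s>1$ by normal convergence, holomorphy in $s$ via theorem \ref{804}) and citing Cartan, whereas your ingredient (c) reaches joint analyticity directly through the several-variables Weierstrass theorem; the additional appeal to Hartogs' theorem \ref{55a} is, as you yourself note, redundant for this statement --- in the paper Hartogs is reserved for the next step, corollary \ref{808}, where the $s$-domain is enlarged beyond $\re s>1$. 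A small merit of your version: theorem \ref{804} only treats $\re k>-1$, so for fixed $k\in\widetilde{\mathcal{K}}$ with $\re k\leq-1$ your direct verification of holomorphy in $s$ (removal of the ray $(-\infty,-1]$ keeps every $n+k$, $n\in\nz$, off the cut of the principal logarithm, and $|n+k|$ grows linearly in $n$) supplies a detail the paper leaves implicit.
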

Now, we apply theorem \ref{55a} and infer from the above results the following corollary ($\mathcal{K}\subset\widetilde{\mathcal{K}}$):
\begin{cor}
\label{808}
$\Zh(s,k)$ is an analytic function on  
\be
\label{809}
\mathcal{S}\times\mathcal{K}:=\left\{(s,k)\in\kz\times\kz; \quad s\neq-(n-1), \quad n\in\nz_0, \quad \re k>-1\right\}.
\ee
\end{cor}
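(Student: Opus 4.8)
The plan is to obtain the corollary as a direct application of Hartogs' theorem (Theorem \ref{55a}), combining the joint analyticity already established on the ``safe'' region with the separate analyticity in $s$ for each fixed admissible $k$. Concretely, in the notation of Theorem \ref{55a} I would take a single complex variable in each group ($n=m=1$) and set $K:=\mathcal{K}=\{k\in\kz;\ \re k>-1\}$, $S_0:=\mathcal{S}_0=\{s\in\kz;\ \re s>1\}$ and $S:=\mathcal{S}=\kz\setminus\{-(n-1);\ n\in\nz_0\}$. All three sets are domains: $S_0$ and $K$ are open half-planes, and $S$ is $\kz$ with the discrete closed set $\{1,0,-1,-2,\ldots\}$ removed, hence still open and connected. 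Moreover $S_0\subset S$, since $\re s>1$ excludes in particular the point $s=1$ and all the shifted non-positive integers.

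Next I would check the two hypotheses of Theorem \ref{55a}. First, joint analyticity on $S_0\times K$: since $\re k>-1$ implies $k\notin\{z\leq-1\}$, we have $\mathcal{K}\subset\widetilde{\mathcal{K}}$, so Theorem \ref{806} already gives that $\Zh$ is analytic in the pair $(s,k)$ on $\mathcal{S}_0\times\widetilde{\mathcal{K}}\supset S_0\times K$. Second, analyticity in $s$ on all of $S$ for each fixed $k\in K$: this is precisely the content of Theorem \ref{804}, which provides, for every fixed $k$ with $\re k>-1$, a meromorphic continuation of $s\mapsto\Zh(s,k)$ to $\kz$ whose only possible poles sit at $s=-(n-1)$, $n\in\nz_0$; restricted to $S$ this continuation is holomorphic, and by the identity theorem it agrees on $\{\re s>1\}$ with the Dirichlet series \eref{31}, so it is the unique analytic object we are continuing.

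With both hypotheses verified, Theorem \ref{55a} yields that $\Zh$ is analytic with respect to the pair $(s,k)$ on $S\times K=\mathcal{S}\times\mathcal{K}$, which is exactly the assertion \eref{809}. Since all the inputs are already proved in the excerpt, no further computation is required.

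The argument is essentially bookkeeping; the only points needing genuine care are (i) confirming that $\mathcal{S}$ is a domain, i.e.\ that deleting the sequence $-(n-1)$, $n\in\nz_0$, leaves $\kz$ connected, and (ii) matching the three hypotheses of Hartogs' theorem correctly, in particular the inclusions $\mathcal{S}_0\subset\mathcal{S}$ and $\mathcal{K}\subset\widetilde{\mathcal{K}}$ needed to import Theorem \ref{806}, together with the fact that the meromorphic extension of Theorem \ref{804} is holomorphic on \emph{all} of $\mathcal{S}$ for \emph{every} admissible $k$, not merely for generic $k$. I expect no substantive obstacle beyond these verifications.
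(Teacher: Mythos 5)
Your proposal is correct and follows essentially the same route as the paper: the paper obtains the corollary precisely by applying Hartogs' theorem (Theorem \ref{55a}) to the joint analyticity of Theorem \ref{806} (using $\mathcal{K}\subset\widetilde{\mathcal{K}}$) together with the single-variable meromorphic continuation of Theorem \ref{804}. Your write-up merely makes explicit the bookkeeping (domains, inclusions, identity theorem) that the paper leaves implicit.
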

\section{The ``regularized trace'' of the resolvent of $(-\Delta)^{\frac{1}{2}}$}
\label{200}
Since the trace of the resolvent of $(-\Delta)^{\frac{1}{2}}$ does not exist ($\Zh(s,k)$, defined in \eref{31}, possesses a pole at $s=1$ for all $\re k>-1$ [see theorem \ref{804})]), we introduce the ``regularized trace'' of the resolvent ($\tau(0)=0$)
\be
\label{201}
\tau(k):=\sum\limits_{n=1}^{\infty}d(n)\left[\frac{1}{n+k}-\frac{1}{n}\right],\quad k\notin-\nz,
\ee
which is, due to \eref{9a}, absolutely convergent on $\rz\setminus\left\{-\nz\right\}$. 
It is not difficult to see that the imaginary part of $\tau(k)$ determines the counting function $N(x)$ of the wave numbers of our quantum graph, defined in \eref{11}, via the formula ($x>0$)
\be
\label{201aaz}
N(x):=\lim_{\epsilon\rightarrow0}\frac{1}{\pi}\int\limits_{0}^{x}\im\tau(-k-\ui\epsilon)\ud k.
\ee
We can write $\tau(k)$ as a double series, similarly as in \eref{100}, and with the relation \cite[p.\,13]{Magnus:1966}
\be
\label{202}
\psi(z)=-\gamma+\sum\limits_{n=0}^{\infty}\left[\frac{1}{n+1}-\frac{1}{z+n}\right], \quad z\notin-\nz_0,
\ee
we get
\be
\label{203}
\eq{
\tau(k) &=\sum\limits_{n=1}^{\infty}\frac{1}{n}\left[\sum\limits_{m=1}^{\infty}\left[\frac{1}{m+\frac{k}{n}}-\frac{1}{m}\right]\right]\\
        &=-\sum\limits_{n=1}^{\infty}\frac{1}{n}\left[\psi\left(1+\frac{k}{n}\right)+\gamma\right],\quad k\notin-\nz.
}
\ee

In order to determine the asymptotics of $\tau(k)$ for $k\rightarrow\infty$ and for further applications we use a theorem of \cite{Flajolet:1995} (see the definitions therein for the notations used below).  To apply the theorem assume that a harmonic sum of the form
\be
\label{211}
G(x):=\sum\limits_{n=1}^{\infty}\lambda_ng(\mu_nx)
\ee
is given and the Mellin transform $\hat{g}(s)$ of the ``base function'' $g$ possesses the fundamental strip (see definition 1 in \cite{Flajolet:1995})
\be
\label{200a}
<\alpha,\beta>:=\left\{z\in \kz; \ \alpha<\re z<\beta\right\}.
\ee
Let us furthermore assume that the Dirichlet series
\be
\label{212}
\Lambda(s):=\sum\limits_{n=1}^{\infty}\lambda_n\mu_n^{-s}
\ee
possesses a half-plane of simple (not necessarily absolute) convergence $\re s>\sigma_c$. Assume further that
\begin{itemize}
	\item the half-plane of convergence of $\Lambda(s)$ intersects the fundamental strip of $\hat{g}(s)$, and let $\alpha':=\max(\alpha,\sigma_c)$,
	\item the functions $\hat{g}(s)$ and $\Lambda(s)$ admit a meromorphic continuation into the strip $<\eta,\beta>$, $\eta<\alpha$ (respectively $<\alpha,\eta>$, $\eta>\beta$) and are analytic on $\re s=\eta$,
	\item on the closed strip $\eta\leq\re s\leq\frac{\alpha'+\beta}{2}$ (respectively $\frac{\alpha'+\beta}{2}\leq\re s\leq\eta$) the function $\hat{g}(s)$ is of  fast decrease ($\hat{g}(s)=\Or\left(|s|^{-r}\right)$, $|s|\rightarrow\infty$, $r>0$ arbitrary [definition 4 in \cite{Flajolet:1995}]) and the function $\Lambda(s)$ is of slow increase ($\Lambda(s)=\Or\left(|s|^r\right)$, $|s|\rightarrow \infty$, for some $r>0$ [definition 4 in \cite{Flajolet:1995}]).   
\end{itemize}
Then the theorem of \cite{Flajolet:1995} states:
\begin{theorem}[Flajolet/Gourdon/Dumas:\,1995, \cite{Flajolet:1995}]
\label{213}
Let the above conditions for $g$, $\hat{g}$ and $\Lambda$ be fulfilled. Then the following asymptotic formula holds:
\be
\label{214}
G(x)\sim _{(-)}^{ \phantom{(} + \phantom{)}}\sum_{P}\res\left[\hat{g}(s)\Lambda(s)x^{-s}\right]+\Or\left(x^{-\eta}\right), \quad x\rightarrow0, \ (\infty),
\ee
where $G(x)$ is defined in $\eref{211}$ and the sum on the r.h.s in $\eref{214}$ is over the set $P$ of poles in $<\eta,\alpha'>$ (respectively $<\alpha',\eta>$).
\end{theorem}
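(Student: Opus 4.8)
The plan is to follow the classical Mellin-transform analysis of a harmonic sum, essentially the method of \cite{Flajolet:1995}. The first step is to factorize the Mellin transform of $G$. Interchanging summation and integration in $\widehat{G}(s)=\int_0^{\infty}x^{s-1}\sum_{n=1}^{\infty}\lambda_n g(\mu_n x)\,\ud x$ and substituting $u=\mu_n x$ in each summand gives the \emph{harmonic sum rule}
\be
\label{prfhsr}
\widehat{G}(s)=\left(\sum_{n=1}^{\infty}\lambda_n\mu_n^{-s}\right)\hat g(s)=\Lambda(s)\,\hat g(s),
\ee
valid on the common strip $\alpha'<\re s<\beta$, which is nonempty by the first bulleted hypothesis ($\alpha'=\max(\alpha,\sigma_c)$). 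The interchange is legitimate there because the fundamental strip $<\alpha,\beta>$ encodes the decay of $g$ near $0$ and near $\infty$, while $\Lambda$ converges for $\re s>\sigma_c$. Since $\hat g$ is of fast decrease on vertical lines inside its fundamental strip, $\widehat{G}$ is absolutely integrable on every line $\re s=c$ with $\alpha'<c<\beta$, and Mellin inversion yields $G(x)=\frac{1}{2\pi\ui}\int_{c-\ui\infty}^{c+\ui\infty}\hat g(s)\Lambda(s)\,x^{-s}\,\ud s$.

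The second step is the contour displacement. For the $x\to0$ statement one pushes this line to the left onto $\re s=\eta<\alpha$; for the $x\to\infty$ statement, to the right onto $\re s=\eta>\beta$. By hypothesis $\hat g$ and $\Lambda$ continue meromorphically into the strip between the two lines and are analytic on $\re s=\eta$. Applying Cauchy's residue theorem on the rectangle with vertices $c\pm\ui T$ and $\eta\pm\ui T$, the two horizontal sides vanish as $T\to\infty$: on the closed strip $\eta\le\re s\le\frac{\alpha'+\beta}{2}$ (respectively $\frac{\alpha'+\beta}{2}\le\re s\le\eta$) the product $\hat g(s)\Lambda(s)$ is $\Or(|s|^{-r})$ with $r$ as large as we wish, since the fast decrease of $\hat g$ beats the slow increase of $\Lambda$, so the horizontal integrals are $\orr(1)$. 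Collecting the residues one obtains $G(x)=\pm\sum_{P}\res\left[\hat g(s)\Lambda(s)x^{-s}\right]+\frac{1}{2\pi\ui}\int_{\eta-\ui\infty}^{\eta+\ui\infty}\hat g(s)\Lambda(s)x^{-s}\,\ud s$, with the sign $+$ for the leftward shift ($x\to0$) and $-$ for the rightward one ($x\to\infty$); the sum runs over the poles $P$ in the swept strip, which by analyticity of $\hat g$ on $<\alpha,\beta>$ and of $\Lambda$ on $\re s>\sigma_c$ lie in $<\eta,\alpha'>$ (respectively $<\alpha',\eta>$).

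The third step is to estimate the remaining integral. On $\re s=\eta$ one has $|x^{-s}|=x^{-\eta}$, hence
\be
\label{prfrem}
\left|\frac{1}{2\pi\ui}\int_{\eta-\ui\infty}^{\eta+\ui\infty}\hat g(s)\Lambda(s)x^{-s}\,\ud s\right|\le\frac{x^{-\eta}}{2\pi}\int_{-\infty}^{\infty}\left|\hat g(\eta+\ui t)\Lambda(\eta+\ui t)\right|\ud t=\Or\left(x^{-\eta}\right),
\ee
the last integral being finite by the same fast-decrease/slow-increase estimate. This is precisely the asymptotic formula \eref{214}, and the residue at each pole automatically takes the principal-part shape exhibited in theorem \ref{32a}.

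The main obstacle is the rigorous control of the contour displacement: showing the horizontal tails tend to $0$ uniformly in $T$ and that only the stated poles $P$ are crossed, which is exactly where one needs the fast-decrease hypothesis on $\hat g$ together with the slow-increase hypothesis on $\Lambda$ and the analyticity of both on the final line $\re s=\eta$. A secondary technical point is the Fubini-type interchange underlying \eref{prfhsr}, which must be justified near $x=0$ and $x=\infty$, where the individual terms $g(\mu_n x)$ are accessible only through the endpoints $\alpha$ and $\beta$ of the fundamental strip.
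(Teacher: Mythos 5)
The paper itself does not prove this statement: theorem \ref{213} is quoted from \cite{Flajolet:1995}, so there is no in-paper argument to compare with. Your proposal reconstructs exactly the proof mechanism of that reference: the harmonic-sum factorization $\widehat{G}(s)=\Lambda(s)\hat g(s)$, Mellin inversion on a vertical line in the intersection strip, displacement of the contour to $\re s=\eta$ with the fast-decrease/slow-increase hypotheses killing the horizontal segments, the correct signs ($+$ for the leftward shift as $x\to0$, $-$ for the rightward shift as $x\to\infty$), the correct localization of the crossed poles in $<\eta,\alpha'>$ respectively $<\alpha',\eta>$, and the $\Or\left(x^{-\eta}\right)$ bound on the residual line integral. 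All of this is sound and is the intended argument.

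The one step that is not justified as written is the factorization \eref{prfhsr} on the whole strip $\alpha'<\re s<\beta$ with $\alpha'=\max(\alpha,\sigma_c)$. The theorem (as stated here) only assumes \emph{simple} convergence of the Dirichlet series for $\re s>\sigma_c$, and a Fubini interchange requires absolute convergence of $\sum_n|\lambda_n|\,\mu_n^{-\re s}\int_0^\infty u^{\re s-1}|g(u)|\,\ud u$; the abscissa of absolute convergence $\sigma_a$ can exceed $\sigma_c$ (by as much as $1$), so on lines with $\re s$ close to $\alpha'$ the interchange is not automatic, and in unfavourable cases the half-plane of absolute convergence need not even meet the fundamental strip. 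The standard repair is to first establish $G(x)=\frac{1}{2\pi\ui}\int_{c-\ui\infty}^{c+\ui\infty}\Lambda(s)\hat g(s)x^{-s}\,\ud s$ on a line $\re s=c$ inside the region of absolute convergence intersected with $<\alpha,\beta>$ when such a line exists, and otherwise to justify the identity by partial summation (Abel summation in $n$) together with analytic continuation, using the growth estimate $\Lambda(\sigma+\ui t)=\Or(|t|)$ available in any closed half-plane $\re s\geq\sigma_c+\epsilon$; only then does one move the line and run your contour argument. You flag the Fubini point as a ``secondary technical point,'' but under the stated hypothesis of merely simple convergence it is the only genuinely delicate ingredient, and your justification (``the fundamental strip encodes the decay of $g$'') does not address it. Note also, for context, that the paper applies the theorem to $\Lambda(s)=\zeta^2(1-s)$, whose half-plane of convergence lies to the \emph{left}; the authors remark separately that the mirrored version requires a (simple) modification of the proof, which your argument would likewise need if it were to cover that use.
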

We write \eref{201} in the form
\be
\label{235}
\tau(k)=-\sum\limits_{n=1}^{\infty}\frac{d(n)}{n}\frac{\frac{k}{n}}{1+\frac{k}{n}},\quad k>0,
\ee
and consider \eref{235} as a harmonic sum as defined in \eref{211}. Comparing \eref{235} with \eref{211} we identify
\be
\label{236}
g(x):=-\frac{x}{x+1}, \quad \mu_n=\frac{1}{n} \quad \mbox{and}\quad \lambda_n=\frac{d(n)}{n}
\ee
and obtain (see \eref{101} and \cite[p.\,307,308]{Magnus:1954})
\be
\label{237}
\eq{
\Lambda(s) &=\sum\limits_{n=1}^{\infty}\frac{d(n)}{n^{1-s}}=\zeta^2(1-s),\quad  s\neq0,\\
\hat{g}(s) &=\frac{\pi}{\sin(s\pi)}, \quad s\notin\gz.
}
\ee
We realize that $\hat{g}$ possesses the fundamental strip $<-1,0>$ (the restrictions on the domain of definitions stated in \eref{237} pertain to the analytic continuation of the corresponding functions). Furthermore, $\Lambda(s)$ is of slow increase and $\hat{g}(s)$ is of fast decrease (see e.g. \cite[p.\,314]{Magnus:1954} and \cite[p.\,74]{Abramowitz:1992}). However, $\Lambda$ admits the half-plane of convergence $\re s<0$. Therefore, we can't directly apply theorem \ref{213} since a half-plane of convergence to the right is required therein. But the theorem still holds true in our case. We omit the proof of this assertion since the proof is a simple modification of the proof in \cite{Flajolet:1995}. The Laurent expansions of $\Lambda$ and $\hat{g}$ at $s=0$ are given by \eref{103} (replacing $z$ by $1-s$) and \cite[p.\,75]{Abramowitz:1992}
\be
\label{238}
\hat{g}(s)=\frac{1}{s}+\frac{\pi^2s}{6}+\Or\left(s^3\right).
\ee
Furthermore, the Laurent expansion  of $\hat{g}(s)$ at $s=1$ is given by (see \eref{103})
\be
\label{238a}
\hat{g}(s)=-\frac{1}{s-1}+\Orr((s-1)).
\ee
Thus, the Laurent expansion of $\Lambda\hat{g}$ at $s=0$ is given by
\be
\label{239}
\Lambda(s)\hat{g}(s)=\frac{1}{s^3}-\frac{2\gamma}{s^2}+\frac{\frac{\pi^2}{6}+\gamma^2-2\gamma_1}{s}+\Or(1)
\ee
and at $s=1$ by ($\zeta(0)=-\frac{1}{2}$ \cite[p.\,19]{Magnus:1966})
\be
\label{239a}
\Lambda(s)\hat{g}(s)=-\frac{1}{4(s-1)}+\Orr(1).
\ee
Since the next relevant pole of $\Lambda(s)\hat{g}(s)$ is located at $s=2$ we therefore obtain by theorem \ref{213} ($k\rightarrow\infty$)
\be
\label{240}
\tau(k)\sim-\frac{1}{2}\ln^2 k-2\gamma\ln k+D+\frac{1}{4k}+\Orr\left(\frac{1}{k^2}\right)
\ee
with the constant $D$ given by 
\be
\label{241a}
D=2\gamma_1-\frac{\pi^2}{6}-\gamma^2.
\ee

It is easy to derive the asymptotics of $\tau(k)$ for $k\rightarrow 0$. For this, we write the summands in \eref{235} as a geometric series and obtain (see \eref{101})
\be
\label{400}
\eq{
\tau(k) &=\sum\limits_{n=1}^{\infty}\frac{d(n)}{n}\left[\sum\limits_{m=1}^{\infty}\left(-\frac{k}{n}\right)^m\right]\\
        &=\sum\limits_{m=1}^{\infty}(-k)^m\left[\sum\limits_{n=1}^{\infty}\frac{d(n)}{n^{m+1}}\right]=\sum\limits_{m=1}^{\infty}\zeta^2(m+1)(-k)^m, \quad |k|<1,
}
\ee
where in the last line the interchange of the order of summations is allowed since the double series is absolutely convergent. Combining the above results we have proved the following theorem. 
\begin{theorem}
\label{233}
The regularized trace of the resolvent $\tau(k)$, defined in $\eref{201}$, possesses the representation $\eref{203}$ and the asymptotics $\eref{240}$ for $k\rightarrow\infty$ and the absolutely convergent series expansion $\eref{400}$ for $|k|<1$.
\end{theorem}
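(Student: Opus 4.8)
The three assertions of Theorem~\ref{233} will be proved separately. For the representation \eref{203} I would start from the definition \eref{201}, write the divisor function as $d(n)=\#\{(l,m)\in\nz\times\nz:\ lm=n\}$, and apply the elementary identity $\sum_{n\geq1}d(n)a_n=\sum_{l,m\geq1}a_{lm}$, which is legitimate here because $a_{lm}=\frac{1}{lm+k}-\frac{1}{lm}=-\frac{k}{lm(lm+k)}$ is, up to a $k$-dependent constant, $\Or((lm)^{-2})$, so that $\sum_{n\geq1}d(n)|a_n|<\infty$ by \eref{9a}. Carrying out the sum over $m$ first with $w:=k/l$ held fixed, the Weierstrass product representation \eref{202} of the digamma function gives $\sum_{m\geq1}\bigl[\frac{1}{m+w}-\frac{1}{m}\bigr]=-[\psi(1+w)+\gamma]$, and summing the resulting factor $1/l$ over $l$ reproduces \eref{203}; the outer sum is absolutely convergent since $\psi(1+k/l)+\gamma=\Or(1/l)$ as $l\to\infty$ and again by \eref{9a}. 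For the power series \eref{400} I would work from the form \eref{235}: for $|k|<1$ one has $|k/n|\leq|k|<1$ for every $n\geq1$, so each summand may be expanded in the geometric series $\frac{k/n}{1+k/n}=\sum_{m\geq1}(-k/n)^m$; the associated double series $\sum_{n,m\geq1}\frac{d(n)}{n}\cdot\frac{|k|^m}{n^m}=\sum_{n\geq1}\frac{d(n)}{n}\cdot\frac{|k|/n}{1-|k|/n}$ converges for $|k|<1$ by \eref{9a}, so Fubini permits interchanging the summations, and the inner sum over $n$ equals $\zeta^2(m+1)$ by \eref{101}, which is \eref{400}.

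The asymptotics \eref{240} is the substantive part, and I would obtain it from Theorem~\ref{213} applied to \eref{235} regarded as the harmonic sum \eref{211} with the data \eref{236}, so that $\hat g(s)=\pi/\sin(\pi s)$, the Mellin transform of $g$, has fundamental strip $\langle-1,0\rangle$, and the associated Dirichlet series is $\Lambda(s)=\zeta^2(1-s)$, as recorded in \eref{237}. The only nonstandard point is that $\Lambda$ converges in the half-plane $\re s<0$, i.e.\ to the left of rather than to the right of its strip of analyticity; as already remarked after \eref{237}, the proof in \cite{Flajolet:1995} adapts verbatim under the reflection $s\mapsto-s$, so \eref{214} still applies as $k\to\infty$, the sum running over the poles of $\hat g(s)\Lambda(s)\,k^{-s}$ with $0<\re s<\eta$ for some fixed non-integer $\eta\in(2,3)$. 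These poles sit at $s=0,1,2$: at $s=0$ the simple pole of $\hat g$ and the double pole of $\Lambda$ combine into a triple pole with Laurent expansion \eref{239}, and multiplying by $k^{-s}=1-s\ln k+\frac{1}{2}s^2\ln^2k-\cdots$ and reading off the coefficient of $s^{-1}$ yields $\frac{1}{2}\ln^2k+2\gamma\ln k+\bigl(\frac{\pi^2}{6}+\gamma^2-2\gamma_1\bigr)$; at $s=1$ the simple pole \eref{239a} contributes residue $-\frac{1}{4k}$; and the residue at $s=2$ is $\Or(k^{-2})$. Inserting these into \eref{214} with the sign appropriate to the limit $k\to\infty$, and with remainder $\Or(k^{-\eta})=\orr(k^{-2})$, gives precisely $\tau(k)\sim-\frac{1}{2}\ln^2k-2\gamma\ln k+D+\frac{1}{4k}+\Or(k^{-2})$ with $D=2\gamma_1-\frac{\pi^2}{6}-\gamma^2$, i.e.\ \eref{240} together with \eref{241a}.

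The step I expect to be the main obstacle is the verification of the analytic hypotheses of (the reflected form of) Theorem~\ref{213} that make the contour shift legitimate: that $\hat g(s)=\pi/\sin(\pi s)$ is of fast decrease on vertical lines bounded away from $\gz$, which is immediate from the exponential decay of $1/\sin(\pi s)$, and, more seriously, that $\Lambda(s)=\zeta^2(1-s)$ is of slow increase (at most polynomial in $|\im s|$) on the strip $0\leq\re s\leq\eta$. The latter forces one to invoke standard growth bounds for the Riemann zeta function in and to the left of the critical strip (convexity-type estimates), which simultaneously guarantee that the shifted integral is genuinely $\Or(k^{-\eta})$; once this is secured, the rest of the argument is the residue bookkeeping carried out above together with the elementary Fubini-type justifications of the first two parts.
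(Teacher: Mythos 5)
Your proposal is correct and follows essentially the same route as the paper: the double-series rearrangement with the digamma series \eref{202} for \eref{203}, the geometric-series/Fubini argument for \eref{400}, and the Mellin/harmonic-sum machinery of theorem \ref{213} with $\hat g(s)=\pi/\sin(\pi s)$, $\Lambda(s)=\zeta^2(1-s)$ and the residues at $s=0,1$ (with the reflected half-plane of convergence handled exactly as the paper remarks) for \eref{240}. Your residue bookkeeping reproduces \eref{239}, \eref{239a} and the constant \eref{241a}, so no gap remains beyond the growth estimates you already identify, which the paper likewise delegates to standard bounds.
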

\hspace{-0.81cm}It is worthwhile to remark that the logarithmically divergent terms in \eref{240} are nothing else than the leading Weyl contributions to $\tau(k)$. Indeed, if one calculates the analytic continuation of the imaginary part of \eref{240} (using $\lim_{\epsilon\rightarrow0}\ln(-k-\ui \epsilon)=\ln k-\ui\pi$, $k>0$) one obtains from \eref{201aaz} precisely the two leading terms of the Weyl term in $N^W(x)$, see eq. \eref{q12}.
\section{A ``regularized spectral zeta function'' for $(-\Delta)^{\frac{1}{2}}$ and a ``generalized gamma function''}
\label{600}
We introduce the ``regularized spectral zeta function'' (see \eref{101} and \eref{31} and theorem \ref{806}) 
\be
\label{601}
\eq{
\tZh(s,k) &:=\sum\limits_{n=1}^{\infty}d(n)\left[\frac{1}{(n+k)^s}-\frac{1}{n^s}\right]\\
            &=\Zh(s,k)-\zeta^2(s), \quad \re s>0, \quad k\in\widetilde{\mathcal{K}},
}
\ee
which is holomorphic at $s=1$ for all $k\in\widetilde{\mathcal{K}}=\kz\setminus\left\{z\leq-1\right\}$ in contrast to $\Zh(s,k)$. We immediately notice the relations (see \eref{201})
\be
\label{602}
\tZh(s,0)=0,\quad s\in\kz , \quad \tZh(1,k)=\tau(k),\quad k\in\widetilde{\mathcal{K}}.
\ee
Because of theorem \ref{804}, corollary \ref{808}, \eref{103} and \eref{601} we infer the following corollary.
\begin{cor}
\label{603}
$\tZh(s,k)$ possesses an analytic continuation on 
\be
\label{604}
\widetilde{\mathcal{S}}\times\mathcal{K}:=\left\{(s,k)\in\kz\times\kz;  \quad s\neq-n, \quad n\in\nz_0,\quad \re k>-1\right\}
\ee
with respective principal parts $\eref{805}$ at $s=-n$, $n\in\nz_0$.
\end{cor}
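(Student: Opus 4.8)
The plan is to deduce everything from the identity $\tZh(s,k)=\Zh(s,k)-\zeta^2(s)$ recorded in \eref{601}, which is valid a priori for $\re s>0$ and $k\in\widetilde{\mathcal{K}}$, and then to propagate it by analytic continuation. Fix $k\in\widetilde{\mathcal{K}}$. By Theorem \ref{804}, $\Zh(\cdot,k)$ extends to a meromorphic function on all of $\kz$ whose only poles are at $s=1$ and at $s=-n$, $n\in\nz_0$, with principal parts \eref{805}; by the Laurent expansion \eref{103}, $\zeta^2$ is meromorphic on $\kz$ with a single pole, at $s=1$, of principal part $\frac{1}{(s-1)^2}+\frac{2\gamma}{s-1}$. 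Hence $\Zh(s,k)-\zeta^2(s)$ is meromorphic in $s$ on $\kz$, and by the identity theorem it is the (unique) meromorphic continuation of $\tZh(\cdot,k)$.

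The only point with any content is the cancellation at $s=1$: according to \eref{805} the principal part of $\Zh(\cdot,k)$ at $s=1$ equals $\frac{1}{(s-1)^2}+\frac{2\gamma}{s-1}$ \emph{independently of $k$}, which is exactly the principal part of $\zeta^2$ there, so the difference is holomorphic at $s=1$. At the remaining points $s=-n$, $n\in\nz_0$, the function $\zeta^2$ is holomorphic, so $\tZh(\cdot,k)$ inherits the principal part $-\frac{k^{n+1}}{n+1}\frac{1}{s+n}$ of $\Zh(\cdot,k)$, i.e. the second line of \eref{805}. This already establishes the meromorphic continuation in $s$ with the asserted pole data, and shows that $\tZh(\cdot,k)$ is holomorphic on $\widetilde{\mathcal{S}}=\{s\in\kz:\ s\neq-n,\ n\in\nz_0\}$.

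To obtain joint analyticity on $\widetilde{\mathcal{S}}\times\mathcal{K}$ I would copy the passage from Theorem \ref{806} to Corollary \ref{808}. The series in \eref{601} converges normally on compact subsets of $\{\re s>0\}\times\widetilde{\mathcal{K}}$, since $\frac{1}{(n+k)^s}-\frac{1}{n^s}=\Or(n^{-\re s-1})$ and $d(n)=\Or(n^\epsilon)$; hence $\tZh$ is analytic in the pair $(s,k)$ there, in particular on $\{\re s>0\}\times\mathcal{K}$ because $\mathcal{K}\subset\widetilde{\mathcal{K}}$. On the other hand, by the previous paragraph $\tZh(\cdot,k)$ is analytic on all of $\widetilde{\mathcal{S}}$ for each fixed $k\in\mathcal{K}$. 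Hartogs' theorem \ref{55a}, applied with $S_0=\{\re s>0\}$, $S=\widetilde{\mathcal{S}}$ and $K=\mathcal{K}$ (note $S_0\subset S$), then yields analyticity in the pair $(s,k)$ on $\widetilde{\mathcal{S}}\times\mathcal{K}$.

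I do not expect a genuine obstacle: the substance is the exact matching of the $s=1$ principal parts of $\Zh$ and $\zeta^2$, which is already in hand from Theorem \ref{804} and \eref{103}, and the Hartogs step is a verbatim repetition of the one for Corollary \ref{808}. The only places calling for a little care are keeping track of the inclusion $\mathcal{K}\subset\widetilde{\mathcal{K}}$ and invoking the identity theorem so that the formula $\tZh=\Zh-\zeta^2$, established only for $\re s>0$, actually governs the continuation on all of $\widetilde{\mathcal{S}}$.
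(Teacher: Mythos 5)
Your proposal is correct and is essentially the paper's own derivation: the paper infers the corollary precisely from theorem \ref{804}, corollary \ref{808}, \eref{103} and \eref{601}, i.e. from the decomposition $\tZh(s,k)=\Zh(s,k)-\zeta^2(s)$ with the $k$-independent principal part of $\Zh$ at $s=1$ cancelling against that of $\zeta^2(s)$, leaving the poles at $s=-n$ with the second line of \eref{805}. Your only deviation, re-running the Hartogs argument (normal convergence of \eref{601} for $\re s>0$ plus theorem \ref{55a}) on $\tZh$ directly instead of citing corollary \ref{808}, is a verbatim repetition of the paper's passage from theorem \ref{806} to corollary \ref{808} and, if anything, treats joint analyticity at the points $(1,k)$ a bit more explicitly, since the domain $\mathcal{S}$ in corollary \ref{808} excludes $s=1$.
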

Due to \eref{103} we have
\be
\label{605a}
\Fp\left[\zeta^2(s)\right]_{s=1}:=\underset{s\rightarrow 1}{\lim}\left[\zeta^2(s)-\frac{1}{(s-1)^2}-\frac{2\gamma}{s-1}\right]=\widetilde{\gamma},
\ee
where we have introduced the finite part $\Fp[f(s)]_{s=s_0}$ of a function $f(s)$ at $s=s_0$ (see e.g. \cite[p.\,55]{Zeilder:2009}) and the ``generalized Euler constant'' (see \eref{103})
\be
\label{605b}
\widetilde{\gamma}:=\gamma^2-2\gamma_1.
\ee
With  
\be
\label{605c}
\Fp\left[\tZh(s,k)\right]_{s=1}=\tZh(1,k)=\tau(k),\quad \re k>-1,
\ee
we obtain from \eref{805}, \eref{201} and \eref{601} 
\be
\eq{
\Fp\left[\Zh(s,k)\right]_{s=1} &:=\underset{s\rightarrow1}{\lim}\left[\Zh(s,k)-\frac{1}{(s-1)^2}-\frac{2\gamma}{s-1}\right]\\
 &=\widetilde{\gamma}+\tau(k)\\
 &=\widetilde{\gamma}+\sum\limits_{n=1}^{\infty}d(n)\left[\frac{1}{n+k}-\frac{1}{n}\right],\quad \re k>-1,
}
\ee
which should be compared with a very similar relation for the Hurwitz zeta function \cite[p.\,22,23]{Magnus:1966} 
\be
\label{607}
\eq{
\Fp\left[\zeta(s,k)\right]_{s=1} &:=\underset{s\rightarrow1}{\lim}\left[\zeta(s,k)-\frac{1}{s-1}\right]  =\gamma+\sum\limits_{n=0}^{\infty}\left[\frac{1}{n+k}-\frac{1}{n+1}\right]\\
                                                                    &=-\psi(k)=-\frac{\Gamma'(k)}{\Gamma(k)},\quad k\notin-\nz_0.
}
\ee
This depicts an analogy between $\Zh(s,k)$ and $\zeta(s,k+1)$ and thus we search for a ``generalized gamma function'' $\widetilde{\Gamma}(k)$ satisfying the corresponding relation to \eref{607}. We define in analogy to \cite[p.\,1]{Magnus:1966}
\be
\label{608a}
\frac{1}{\Gamma(k)}:=k\ue^{\gamma k}\prod\limits_{n=1}^{\infty}\left[\left(1+\frac{k}{n}\right)\ue^{-\frac{k}{n}}\right],\quad k\in\kz,
\ee
the ''generalized gamma function'' by a Weierstraß product ($\widetilde{\Gamma}(0)=1)$):
\be
\label{608}
\frac{1}{\widetilde{\Gamma}(k)}:=\ue^{\widetilde{\gamma}k}\prod\limits_{n=1}^{\infty}\left[\left(1+\frac{k}{n}\right)\ue^{-\frac{k}{n}}\right]^{d(n)},\quad k\in\kz.
\ee
The relation
\be
\label{609}
\frac{1}{\widetilde{\Gamma}(n)}=0,\quad n\in-\nz,
\ee
is obvious, which shows that the zeros of $\widetilde{\Gamma}(s)^{-1}$ are precisely located at the negative wavenumbers $-k_n=-n$ of the quantum graph $\mathfrak{G}$ and the orders of these zeros are exactly given by the associated multiplicities $d(n)$. In order to prove the normal convergence of the Weierstraß product \eref{608}, it is sufficient to prove the normal convergence of the series (cp. \cite[p.\,34]{Whittaker:1996})
\be
\label{610}
\sum\limits_{n=1}^{\infty}d(n)\left[\left(1+\frac{k}{n}\right)\ue^{-\frac{k}{n}}-1\right],\quad k\in\kz.
\ee
This is shown by the simple calculation
\be
\label{611}
\eq{
\left[1+\frac{k}{n}\right]\ue^{-\frac{k}{n}}-1 &=\left[1+\frac{k}{n}\right]\left[1-\frac{k}{n}+\Or\left(\left(\frac{k}{n}\right)^2\right)\right]-1\\
& =\Or\left(\left(\frac{k}{n}\right)^2\right), \quad k\in\kz,
}
\ee
and \eref{9a}. Furthermore, due to the normal convergence of $\widetilde{\Gamma}(k)^{-1}$ on $\kz$ we obtain for the ``generalized digamma function'' 
\be
\label{612}
\widetilde{\psi}(k):=\frac{\widetilde{\Gamma}'(k)}{\widetilde{\Gamma}(k)}=-\widetilde{\gamma}-\sum\limits_{n=1}^{\infty}d(n)\left[\frac{1}{n+k}-\frac{1}{n}\right],\quad k\notin-\nz,
\ee
which should be compared with the relation \eref{202} for the ``ordinary'' digamma function $\psi(z)$. Additionally, due to the normal convergence of $\widetilde{\Gamma}(k)^{-1}$ one can rearrange the product in \eref{608} and obtains using the identity \eref{608a}
\be
\label{613}
\eq{
\frac{1}{\widetilde{\Gamma}(k)} &=\ue^{\widetilde{\gamma}k}\prod\limits_{n=1}^{\infty}\left\{\ue^{-\frac{\gamma k}{n}}\left[\ue^{\frac{\gamma k}{n}}\prod\limits_{m=1}^{\infty}\left(1+\frac{k}{n}\frac{1}{m}\right)\ue^{-\frac{k}{n}\frac{1}{m}}\right]\right\}\\
 &=\ue^{\widetilde{\gamma}k}\prod\limits_{n=1}^{\infty}\frac{\ue^{-\frac{\gamma k}{n}}}{\left(\frac{k}{n}\right)\Gamma\left(\frac{k}{n}\right)}=\ue^{\widetilde{\gamma}k}\prod\limits_{n=1}^{\infty}\frac{\ue^{-\frac{\gamma k}{n}}}{\Gamma\left(1+\frac{k}{n}\right)},\quad k\in\kz,
}
\ee
which yields
\be
\label{613a}
\widetilde{\psi}(k)=-\widetilde{\gamma}+\sum\limits_{n=1}^{\infty}\frac{1}{n}\left[\psi\left(1+\frac{k}{n}\right)+\gamma\right],\quad k\notin-\nz.
\ee
By \eref{203} we infer the analogous relation to \eref{607}
\be
\label{613b}
\Fp\left[\Zh(s,k)\right]_{s=1}=-\widetilde{\psi}(k)=-\frac{\widetilde{\Gamma}'(k)}{\widetilde{\Gamma}(k)}.
\ee
We have proved the following theorem
\begin{theorem}
\label{614}
\begin{enumerate}
 \item[i)]  The generalized gamma function $\widetilde{\Gamma}(k)$, defined in $\eref{608}$, is a meromorphic function on $\kz$ whose poles are exactly located at the negative wavenumbers $-k_n$, $n\in\nz$, of the quantum graph $\mathfrak{G}$ and the orders of these poles are given by the associated multiplicities $d(n)$ of the eigenvalues $k_n^2$. Furthermore, the generalized gamma function $\widetilde{\Gamma}(k)$ fulfills the relations $\eref{612}$, $\eref{613}$ and $\eref{613b}$.
 \item[ii)] The generalized digamma function $\widetilde{\psi}(k)$, defined in $\eref{612}$, is a meromorphic function on $\kz$ whose poles are exactly located at the negative wavenumbers $-k_n=-n$, $n\in\nz$, of the quantum graph $\mathfrak{G}$. The residues of the poles are given by the associated multiplicities $d(n)$. Furthermore, 
the generalized digamma function $\widetilde{\psi}(k)$ fulfills the relations $\eref{612}$, $\eref{613a}$ and $\eref{613b}$.
\end{enumerate}
\end{theorem}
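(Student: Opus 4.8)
The plan is to assemble Theorem~\ref{614} directly from the identities and convergence facts established just above, treating parts i) and ii) in parallel since both ultimately rest on the normal convergence of the Weierstra\ss\ product \eref{608}.

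First I would establish the analytic character of $\widetilde\Gamma(k)^{-1}$. The computation \eref{611} together with the divisor bound \eref{9a} shows that the series \eref{610} converges normally on every compact subset of $\kz$; by the standard criterion for infinite products (as cited, \cite[p.\,34]{Whittaker:1996}), the product \eref{608} converges normally and hence defines an entire function $k\mapsto\widetilde\Gamma(k)^{-1}$. Its zeros are read off factor by factor: the $n$th factor $\bigl[(1+k/n)\ue^{-k/n}\bigr]^{d(n)}$ vanishes precisely at $k=-n$ with multiplicity $d(n)$, and the exponential prefactor $\ue^{\widetilde\gamma k}$ is nowhere zero, so by normal convergence the only zeros of $\widetilde\Gamma(k)^{-1}$ are at $k=-n$, $n\in\nz$, with orders $d(n)$. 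Reciprocating, $\widetilde\Gamma(k)$ is meromorphic on $\kz$ with poles exactly at the negative wavenumbers $-k_n=-n$ of orders $d(n)$; this is relation \eref{609} sharpened to the statement about orders. That $\widetilde\Gamma(0)=1$ is the explicit normalization in \eref{608}.

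Next I would record the three functional relations. Taking the logarithmic derivative of the normally convergent product \eref{608} term by term (again justified by normal convergence of \eref{610}) yields exactly the series \eref{612} defining $\widetilde\psi(k)=\widetilde\Gamma'(k)/\widetilde\Gamma(k)$, valid for $k\notin-\nz$; the same term-by-term differentiation of the rearranged product \eref{613} --- whose validity as a rearrangement is once more a consequence of normal convergence --- gives \eref{613a}, and comparison with the representation \eref{203} of $\tau(k)$ identifies $\widetilde\psi(k)$ with $-\bigl(\widetilde\gamma+\tau(k)\bigr)$, which by the finite-part computation leading to \eref{613b} is precisely $-\Fp[\Zh(s,k)]_{s=1}$. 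This chain establishes \eref{613b} and completes part i). For part ii), the series \eref{612} exhibits $\widetilde\psi$ as meromorphic on $\kz$; each term $d(n)/(n+k)$ contributes a simple pole at $k=-n$ with residue $d(n)$, and by absolute (indeed normal) convergence of the regularizing series these are the only poles, with the stated residues. The relations \eref{612}, \eref{613a}, \eref{613b} for $\widetilde\psi$ are the ones just derived.

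The main obstacle is not any single estimate but making precise that \emph{all} the manipulations --- convergence of the product, term-by-term logarithmic differentiation, and the rearrangement \eref{613} --- are legitimate, and this is exactly what the normal convergence of \eref{610} on compacta buys us: it allows the standard Weierstra\ss-product theory to be applied verbatim. A minor point worth spelling out is that the rearrangement in \eref{613} regroups the double product $\prod_n\prod_m$ underlying $\prod_n\Gamma(1+k/n)^{-1}$ against the compensating factors $\ue^{-\gamma k/n}$, so one should note that the inner product $\ue^{\gamma k/n}\prod_m(1+\tfrac{k}{nm})\ue^{-k/(nm)}=\bigl[(k/n)\Gamma(k/n)\bigr]^{-1}=\Gamma(1+k/n)^{-1}$ is the Weierstra\ss\ product \eref{608a} evaluated at $k/n$, and that $\sum_n \tfrac1n[\psi(1+k/n)+\gamma]$ converges on compacta since $\psi(1+w)+\gamma=\Or(w)$ as $w\to0$ --- which is again \eref{611}-type behavior combined with \eref{9a}. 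With these observations in place the theorem follows by collecting the identities above.
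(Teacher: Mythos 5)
Your proposal is correct and follows essentially the same route as the paper: normal convergence of the product \eref{608} via \eref{611} and the divisor bound \eref{9a}, reading off the zeros/poles with orders $d(n)$, term-by-term logarithmic differentiation to get \eref{612}, the rearrangement \eref{613} via \eref{608a} giving \eref{613a}, and the identification with $\tau(k)$ through \eref{201}/\eref{203} to obtain \eref{613b}. The paper assembles the theorem from exactly these preceding identities, so nothing essential is missing or different.
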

Finally, comparing \eref{612} with \eref{201} we obtain 
\begin{cor}
\label{614a}
The regularized trace of the resolvent of $(-\Delta)^{\frac{1}{2}}$ on the quantum graph $\mathfrak{G}$, defined in $\eref{201}$, has the compact representation 
\be
\label{614b}
\tau(k)=-\widetilde{\gamma}-\widetilde{\psi}(k).
\ee
\end{cor}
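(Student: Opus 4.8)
The statement is an immediate consequence of the explicit series representation \eref{612} of the generalized digamma function $\widetilde{\psi}(k)$, which was established in the course of proving Theorem \ref{614}. The plan is therefore simply to compare \eref{612} with the definition \eref{201} of $\tau(k)$ and solve for the common series.

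Concretely, I would proceed as follows. First, recall \eref{612}, valid for $k\notin-\nz$, which reads $\widetilde{\psi}(k)=-\widetilde{\gamma}-\sum_{n=1}^{\infty}d(n)\bigl[\tfrac{1}{n+k}-\tfrac{1}{n}\bigr]$; this identity rests on the normal convergence of the Weierstra\ss\ product \eref{608} for $\widetilde{\Gamma}(k)^{-1}$ on $\kz$ (shown via \eref{610}, \eref{611} and the divisor bound \eref{9a}), which legitimises the termwise logarithmic differentiation. Next, recall \eref{201}, $\tau(k)=\sum_{n=1}^{\infty}d(n)\bigl[\tfrac{1}{n+k}-\tfrac{1}{n}\bigr]$, which is absolutely convergent on $\rz\setminus(-\nz)$, again by \eref{9a}. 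The domains of validity of the two formulae coincide. Subtracting, one reads off from \eref{612} that $\widetilde{\psi}(k)+\widetilde{\gamma}=-\tau(k)$, i.e. $\tau(k)=-\widetilde{\gamma}-\widetilde{\psi}(k)$, which is \eref{614b}.

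There is no genuine obstacle here: the corollary is a bookkeeping consequence of Theorem \ref{614}, and the only point requiring (already completed) care is the absolute/normal convergence that makes the series manipulations in \eref{612} legitimate. If desired, one may add as a consistency check that at $k=0$ both sides vanish — indeed $\tau(0)=0$ by \eref{201} and $\widetilde{\psi}(0)=-\widetilde{\gamma}$ since $\widetilde{\Gamma}(0)=1$ and the series in \eref{612} is empty at $k=0$ — and that the large-$k$ asymptotics \eref{240} of $\tau(k)$ is then inherited by $-\widetilde{\gamma}-\widetilde{\psi}(k)$, in accordance with \eref{613a} and the asymptotics \eref{203}.
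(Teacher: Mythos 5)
Your proposal is correct and is exactly the paper's argument: the corollary is obtained by comparing the series representation \eref{612} of $\widetilde{\psi}(k)$ with the definition \eref{201} of $\tau(k)$, the convergence issues having already been settled in the proof of theorem \ref{614}. Nothing further is needed.
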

\section{A Stirling-like formula}
\label{a1}
We want to derive a Stirling-like formula for $\widetilde{\Gamma}(k)$, defined in \eref{608}. If we define 
\be
\label{a2}
\widetilde{Z}(k):=\frac{1}{\widetilde{\Gamma}(k)},\quad k\in\kz,
\ee
we obtain (see \eref{608})
\be
\label{a3}
\widetilde{Z}(0)=1.
\ee
Furthermore, it holds by \eref{612} and \eref{614b}
\be
\label{a4}
\dk \ln\widetilde{Z}(k)=-\frac{\widetilde{\Gamma'}(k)}{\widetilde{\Gamma}(k)}=-\widetilde{\psi}(k)=\widetilde{\gamma}+\tau(k),\quad k\notin-\nz.
\ee 
Since $\widetilde{Z}(k)$ is positive for nonnegative $k$ and by \eref{a3} and \eref{a4} we obtain
\be
\label{a5}
\eq{
\int\limits_{0}^{k}\frac{\ud\phantom{k'}}{\ud k'}\ln\widetilde{Z}(k')\ud k' &=\ln\widetilde{Z}(k)-\ln\widetilde{Z}(0)=\ln\widetilde{Z}(k)\\
                                                  &=\widetilde{\gamma}k+\int\limits_{0}^{k}\tau(k')\ud k',\quad k\in\rz_{\geq0}.
}
\ee
Therefore, we obtain (see \eref{605b})
\be
\label{a6}
\widetilde{\Gamma}(k)=\exp\left(-\widetilde{\gamma}k-\int\limits_{0}^{k}\tau(k')\ud k'\right),\quad k\in\kz.
\ee
If we split the integral in \eref{a6} in two parts
\be
\label{a7}
\int\limits_{0}^{k}\tau(k')\ud k'=\int\limits_{0}^{K}\tau(k')\ud k'+\int\limits_{K}^{k}\tau(k')\ud k',\quad 1<<K<<k
\ee
and insert in the second integral in \eref{a7} the asymptotic expansion \eref{240} we obtain
\be
\label{a8}
\hspace{-1cm}\eq{
\int\limits_{K}^{k}\tau(k')\ud k' &=-\int_{K}^{k}\frac{\ln^2 k'}{2}\ud k'-2\gamma\int_{K}^{k}\ln k'\ud k'+D\int_{K}^{k}\ud k'+\int_{K}^{k}\frac{\ud k'}{4k'}\\
 &\phantom{=}+ \int_{K}^{k}r(k')\ud k', \quad r(k')=\Orr\left(\frac{1}{k'^2}\right), \quad k'\rightarrow\infty. 
}
\ee
Using \cite[p.\,124]{Prudnikov:1986b} 
\be
\label{a10}
\fl\int\frac{\ln x}{x}\ud x=\frac{\ln^2 x}{2}, \ \int \ln x\ud x=x\ln x-x,\ \int \ln^2 x\ud x=x\ln^2x-2x\ln x+2x,
\ee
we obtain by \eref{a6} and \eref{a8} 
\begin{theorem}
\label{aa10}
The generalized gamma function $\widetilde{\Gamma}(k)$ defined by the Weierstraß product $\eref{608}$ satisfies asymptotically for $k\rightarrow\infty$ the Stirling-like formula
\be
\label{a11}
\hspace{-1.5cm}\widetilde{\Gamma}(k)=\exp\left(\frac{k\ln^2 k}{2}+(2\gamma-1)k\ln k+\left(1+\frac{\pi^2}{6}-2\gamma\right)k-\frac{\ln k}{4}+\Orr(1)\right).
\ee
\end{theorem}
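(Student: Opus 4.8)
The plan is to exploit the closed-form representation \eref{a6}, which writes $\widetilde{\Gamma}(k)=\exp\bigl(-\widetilde{\gamma}k-\int_0^k\tau(k')\,\ud k'\bigr)$, and to obtain the large-$k$ behaviour of $\widetilde{\Gamma}$ by determining the asymptotics of the exponent. The only nontrivial ingredient is the large-$k$ expansion of the integral $\int_0^k\tau(k')\,\ud k'$, and for this I would feed in the asymptotic expansion \eref{240} of the regularized trace $\tau$ established in Theorem \ref{233}.

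First I would split the integral as in \eref{a7}, writing $\int_0^k\tau(k')\,\ud k'=\int_0^K\tau(k')\,\ud k'+\int_K^k\tau(k')\,\ud k'$ for a fixed cutoff $1\ll K\ll k$; the first summand is a finite constant (note $\tau$ is continuous on $\rz_{\geq0}$ with $\tau(0)=0$). In the second summand the integrand is replaced by its expansion \eref{240}, namely $\tau(k')=-\frac{1}{2}\ln^2 k'-2\gamma\ln k'+D+\frac{1}{4k'}+r(k')$ with $r(k')=\Orr(k'^{-2})$, and the elementary pieces are integrated term by term using the antiderivatives collected in \eref{a10}; this is precisely the computation already displayed in \eref{a8}. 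The remainder contributes harmlessly: since $r(k')=\Orr(k'^{-2})$ the tail $\int_K^{\infty}r(k')\,\ud k'$ converges, so $\int_K^k r(k')\,\ud k'$ equals a constant plus $\Orr(k^{-1})$, well inside the claimed $\Orr(1)$.

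Performing the integrations yields $-\int_K^k\tau(k')\,\ud k'=\frac{1}{2}k\ln^2 k+(2\gamma-1)k\ln k+(1-2\gamma-D)k-\frac{1}{4}\ln k+\Orr(1)$, where the $\Orr(1)$ absorbs $\int_0^K\tau(k')\,\ud k'$ together with all boundary contributions at the fixed point $K$ (these are independent of $k$, and $K$ must drop out of the $k$-dependent part). Adding the term $-\widetilde{\gamma}k$ from \eref{a6} and inserting the explicit constants $\widetilde{\gamma}=\gamma^2-2\gamma_1$ from \eref{605b} and $D=2\gamma_1-\frac{\pi^2}{6}-\gamma^2$ from \eref{241a}, the coefficient of $k$ collapses because $\widetilde{\gamma}+D=-\frac{\pi^2}{6}$, so it becomes $1-2\gamma+\frac{\pi^2}{6}$. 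Hence the exponent is $\frac{k\ln^2 k}{2}+(2\gamma-1)k\ln k+\bigl(1+\frac{\pi^2}{6}-2\gamma\bigr)k-\frac{\ln k}{4}+\Orr(1)$, which is exactly \eref{a11}. The computation is essentially bookkeeping; the one step that deserves a word of care is the legitimacy of integrating the asymptotic relation \eref{240} over $[K,k]$, and this is unproblematic precisely because \eref{240} carries an explicit, integrable $\Orr(k'^{-2})$ remainder rather than a mere $\orr(1)$ error.
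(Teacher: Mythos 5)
Your proposal is correct and follows essentially the same route as the paper: the paper likewise derives \eref{a6} from $\frac{\ud}{\ud k}\ln\widetilde{Z}(k)=\widetilde{\gamma}+\tau(k)$, splits the integral as in \eref{a7}, inserts the asymptotics \eref{240} into \eref{a8}, and integrates term by term with \eref{a10}, so that the constants combine via $\widetilde{\gamma}+D=-\frac{\pi^2}{6}$ exactly as you describe. Your explicit remark that the integrable $\Orr(k'^{-2})$ remainder justifies integrating the asymptotic relation is a nice touch of care, but substantively the argument coincides with the paper's.
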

\hspace{-0.85cm}From \eref{a4} and theorem \ref{233} we obtain the following 
\begin{cor}
\label{b1r}
The generalized digamma function $\widetilde{\psi}(k)$, defined in $\eref{612}$, possesses the following absolutely convergent series expansion for $|k|<1$, respectively for $k\rightarrow\infty$ the asymptotic behaviour
\be
\eq{
\widetilde{\psi}(k) &=2\gamma_1-\gamma^2+\sum\limits_{m=1}^{\infty}\zeta^2(m+1)(-1)^{m+1}k^m,\quad |k|<1\\
                    &=\frac{1}{2}\ln^2k+2\gamma\ln k +\frac{\pi^2}{6}-\frac{1}{4k}+\Or\left(\frac{1}{k^2}\right), \quad k\rightarrow\infty. 
}
\ee
\end{cor}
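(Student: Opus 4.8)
The plan is to deduce both expansions directly from the properties of the regularized trace $\tau(k)$ gathered in theorem \ref{233}, using the identity \eref{a4}. First I would record the elementary consequence of \eref{a4}, namely that $\widetilde{\psi}(k)=-\widetilde{\gamma}-\tau(k)$ for $k\notin-\nz$, together with $\widetilde{\gamma}=\gamma^2-2\gamma_1$ from \eref{605b}. All that then remains is to substitute the two representations of $\tau(k)$ supplied by theorem \ref{233} and to keep track of the constants.

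For the small-$k$ statement I would insert the absolutely convergent power series \eref{400}, i.e.\ $\tau(k)=\sum_{m=1}^{\infty}\zeta^2(m+1)(-k)^m$ for $|k|<1$, into $\widetilde{\psi}(k)=-\widetilde{\gamma}-\tau(k)$. This yields $\widetilde{\psi}(k)=\bigl(2\gamma_1-\gamma^2\bigr)-\sum_{m=1}^{\infty}\zeta^2(m+1)(-k)^m$, and writing $(-k)^m=(-1)^m k^m$ produces exactly the asserted series $2\gamma_1-\gamma^2+\sum_{m=1}^{\infty}\zeta^2(m+1)(-1)^{m+1}k^m$; absolute convergence for $|k|<1$ is inherited from \eref{400}, and the radius $1$ is forced by the pole of $\widetilde{\psi}$ at $k=-1$ (cf. theorem \ref{614}).

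For the large-$k$ statement I would feed the asymptotic expansion \eref{240}, $\tau(k)\sim-\frac{1}{2}\ln^2 k-2\gamma\ln k+D+\frac{1}{4k}+\Orr\!\left(\frac{1}{k^2}\right)$ with $D=2\gamma_1-\frac{\pi^2}{6}-\gamma^2$, into the same identity $\widetilde{\psi}(k)=-\widetilde{\gamma}-\tau(k)$. Negating and adding $-\widetilde{\gamma}$ flips the signs of the $\ln$-terms and of the $\frac{1}{4k}$-term, while the constant becomes $-\widetilde{\gamma}-D=-(\gamma^2-2\gamma_1)-\bigl(2\gamma_1-\frac{\pi^2}{6}-\gamma^2\bigr)=\frac{\pi^2}{6}$, so that $\widetilde{\psi}(k)\sim\frac{1}{2}\ln^2 k+2\gamma\ln k+\frac{\pi^2}{6}-\frac{1}{4k}+\Or\!\left(\frac{1}{k^2}\right)$, as claimed. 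The only point requiring a remark is that adding a fixed constant to, and negating, an asymptotic expansion is legitimate, which is immediate.

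There is essentially no analytic obstacle here, since all the substantive work was already carried out in sections \ref{200} and \ref{a1}; the one thing to be careful about is the bookkeeping of the three constants $\widetilde{\gamma}$, $D$ and $2\gamma_1-\gamma^2$, so that the Stieltjes constant $\gamma_1$ and the term $\gamma^2$ cancel out of the final constant term $\frac{\pi^2}{6}$. I would therefore present the two substitutions explicitly and highlight this cancellation, which completes the proof.
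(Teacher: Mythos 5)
Your proposal is correct and follows exactly the paper's own route: the corollary is obtained there "from \eref{a4} and theorem \ref{233}", i.e.\ from $\widetilde{\psi}(k)=-\widetilde{\gamma}-\tau(k)$ together with the series \eref{400} and the asymptotics \eref{240}, which is precisely your substitution. Your bookkeeping of the constants ($-\widetilde{\gamma}-D=\frac{\pi^2}{6}$) matches the paper's result, so nothing further is needed.
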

\section{The trace of the heat kernel and the spectral zeta function of $-\Delta$}
\label{32aaaaa}
For the trace ${\T}(t,k)$ of the ``generalized heat kernel'' $\ue^{-k^2t}\ue^{t\Delta}$ we get
\begin{equation}
\label{33}
\eqalign{
{\T}(t,k) &:=\tr\ue^{t\left(\Delta-k^2\right)}=:\ue^{-k^2t}\Theta_{\Delta}(t)=\sum\limits_{n=1}^{\infty}\dz(n)\ue^{-\left(n^2+k^2\right)t} \\
        &=\ue^{-k^2t}\sum\limits_{n=1}^{\infty}\left(\sum\limits_{m=1}^{\infty}\ue^{-n^2m^2t}\right)\\
        &=\ue^{-k^2t}\sum\limits_{n=1}^{\infty}\omega\left(\frac{n^2t}{\pi}\right), \quad t>0, \quad k\in\kz,
}
\end{equation}
with
\be
\label{33a}
\omega(x):=\sum\limits_{n=1}^{\infty}\ue^{-\pi n^2x}=\frac{1}{2}\left[\theta_3\left(0,\frac{\ui x}{\pi}\right)-1\right],
\ee
where $\theta_3\left(z,\tau\right)$ is the elliptic theta function  \cite[p.\,371]{Magnus:1966}. From
\begin{equation}
\label{34}
\theta_3(0,\ui x)=
\cases{
\frac{1}{\sqrt{x}} +\Or\left(\frac{\ue^{-\frac{\pi}{x}}}{\sqrt{x}}\right), & $x\rightarrow 0^+$, \\
1+\Or\left(\ue^{-\pi x}\right), & $x\rightarrow \infty$,\\ 
}
\end{equation}
we obtain the asymptotics
\be
\label{34a}
\omega\left(x^2\right)=
\cases{
\frac{1}{2x}-\frac{1}{2} +\Or\left(\frac{\ue^{-\frac{\pi}{x^2}}}{x}\right), & $x\rightarrow 0^+$, \\
\Or\left(\ue^{-\pi x^2}\right), & $x\rightarrow \infty$,\\ 
}
\ee
and thus we can apply theorem \ref{15} identifying $f(x):=\omega\left(x^2\right)$ and subsequently setting $x=\sqrt\frac{t}{\pi}$.
We then obtain from (\ref{33}) and  (\ref{34}) ($k\in\kz$ fixed, and any $\eta>0$)
\begin{equation}
\label{35}
{\T}(t,k) \sim \ue^{-k^2t}\left[\sqrt{\frac{\pi}{t}}\left(\In_f^{**}-\frac{1}{4}\ln\left(\frac{t}{\pi}\right)\right)+\frac{1}{4}+\Orr\left(t^{\eta}\right)\right], \quad t\rightarrow 0^+,
\end{equation}
where we have defined
\begin{equation}
\label{35a}
\In_f^{**}:=\int\limits_{0}^{\infty}\left[\omega\left(x^2\right)-\frac{\ue^{-x}}{2x}\right]\ud x.
\end{equation}
Using the integral representation 
\be
\label{35b}
\zeta(s)=\frac{2\pi^\frac{s}{2}}{\Gamma\left(\frac{s}{2}\right)}\int\limits_{0}^{\infty}x^{s-1}\omega\left(x^2\right)\ud x,\quad\re s>1,
\ee
we obtain for the Riemann zeta function 
\be
\label{35c}
\zeta(s)=2\pi^{s}\frac{\Gamma(s-1)}{\Gamma\left(\frac{s}{2}\right)}+\frac{2\pi^\frac{s}{2}}{\Gamma\left(\frac{s}{2}\right)}\int\limits_{0}^{\infty}x^{s-1}\left[\omega\left(x^2\right)-\frac{\ue^{-x}}{2x}\right]\ud x,\quad\re s>0.
\ee
From \eref{35a} and \eref{35c} we conclude
\be
\label{35d}
\In_f^{**}=\frac{1}{2}\underset{s\rightarrow 1}{\lim}\left[\zeta(s)-\pi^\frac{s}{2}\frac{\Gamma(s-1)}{\Gamma\left(\frac{s}{2}\right)}\right]=\frac{3}{4}\gamma-\frac{1}{2}\ln\left(2\sqrt{\pi}\right).
\ee
Using the Taylor expansion of $e^x$ we get for the asymptotics of ${\T}(t,k)$ by (\ref{35}) for fixed $k$ the following theorem
\begin{theorem}
\label{36a}
The generalized heat kernel ${\T}(t,k)$ possesses for fixed $k\in\kz$ the asymptotics
\begin{equation}
\label{36}
\eqalign{
{\T}(t,k) &\sim-\sum\limits_{n=0}^{\infty}\frac{\sqrt{\pi}}{4}\frac{(-1)^nk^{2n}}{n!}t^{n-\frac{1}{2}}\ln t \\
        &\phantom{\sim} + \sum\limits_{n=0}^{\infty}\frac{\sqrt{\pi}}{4}\left(3\gamma-2\ln 2\right)\frac{(-1)^nk^{2n}}{n!}t^{n-\frac{1}{2}} \\
        &\phantom{\sim} + \sum\limits_{n=0}^{\infty}\frac{(-1)^nk^{2n}}{4n!}t^n, \quad t\rightarrow 0^+.
}
\end{equation}
\end{theorem}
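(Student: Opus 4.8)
The plan is to obtain the statement as a direct consequence of Theorem \ref{15}, since the genuine analytic input has already been assembled in \eref{35} and \eref{35d}; all that remains is to verify the hypotheses of Theorem \ref{15} for the appropriate base function and then to take a Cauchy product with the entire function $\ue^{-k^2t}$.

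First I would set $f(x):=\omega\left(x^2\right)$. Since $\omega\left(x^2\right)=\sum_{m=1}^{\infty}\ue^{-\pi m^2x^2}$, the function $f$ is smooth on $(0,\infty)$ and, together with all its derivatives, of rapid decay at infinity; and by \eref{34a} it admits the asymptotic expansion $f(x)\sim\frac{1}{2x}-\frac{1}{2}+\Orr\bigl(\ue^{-\pi/x^2}/x\bigr)$ as $x\to0^+$, i.e.\ with $b_{-1}=\frac{1}{2}$, $b_{0}=-\frac{1}{2}$ and all higher coefficients equal to zero, because the correction lies beyond all powers of $x$. Thus $f$ satisfies the hypotheses of Theorem \ref{15} with $\lambda_0=-1$, $\lambda_1=0$ (and, say, $\lambda_n=n-1$ with vanishing coefficient for $n\geq2$), and since $\sum_{n=1}^{\infty}f(nx)=\sum_{n=1}^{\infty}\omega\left(n^2x^2\right)$ equals $\Theta_{\Delta}(t)$ for $x=\sqrt{t/\pi}$ by \eref{33}, Theorem \ref{15} yields
\be
\Theta_{\Delta}(t)\sim\sqrt{\frac{\pi}{t}}\left(\In_f^{**}-\frac{1}{4}\ln\frac{t}{\pi}\right)+b_{0}\,\zeta(0),\quad t\to0^+,
\ee
where $\In_f^{**}$ is the integral \eref{35a} (it is the quantity $\In_f^{*}$ of Theorem \ref{15} specialised to our $f$) and $b_{0}\,\zeta(0)=\frac{1}{4}$ since $\zeta(0)=-\frac{1}{2}$; this reproduces \eref{35}. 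Substituting the value $\In_f^{**}=\frac{3}{4}\gamma-\frac{1}{2}\ln\bigl(2\sqrt{\pi}\bigr)$ from \eref{35d}, the terms proportional to $\ln\pi$ cancel, leaving $\Theta_{\Delta}(t)\sim\frac{\sqrt{\pi}}{4}t^{-1/2}\bigl(3\gamma-2\ln2-\ln t\bigr)+\frac{1}{4}$.

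Finally, since ${\T}(t,k)=\ue^{-k^2t}\Theta_{\Delta}(t)$ and $\ue^{-k^2t}=\sum_{n=0}^{\infty}\frac{(-1)^{n}k^{2n}}{n!}t^{n}$ is entire, I would multiply the last expansion by this power series; multiplication by an entire function preserves the asymptotic character (the remainder $\Orr(t^{\eta})$ in \eref{35} is merely multiplied by the bounded factor $\ue^{-k^2t}$), and regrouping the contributions by powers of $t$ — the $t^{-1/2}\ln t$ part, the $t^{-1/2}$ part and the regular part of the $\Theta_{\Delta}$-expansion, each convolved with the exponential series — produces exactly the three sums in \eref{36}. The only delicate point, and the closest thing to an obstacle, is the careful verification that $f(x)=\omega\left(x^2\right)$ genuinely satisfies all hypotheses of Theorem \ref{15}, in particular that the exponentially small correction in \eref{34a} contributes no additional powers of $x$; the auxiliary fact that the Cauchy product of an asymptotic expansion with an entire power series is again an asymptotic expansion is routine, and the genuinely hard computation — the evaluation of $\In_f^{**}$ via the representation \eref{35b} and the analytic continuation \eref{35c} — has already been carried out, so Theorem \ref{36a} is essentially a corollary of \eref{35} and \eref{35d}.
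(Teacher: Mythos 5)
Your proposal is correct and follows essentially the same route as the paper: apply Theorem \ref{15} to $f(x)=\omega\left(x^2\right)$ with $b_{-1}=\frac{1}{2}$, $b_0=-\frac{1}{2}$ and all higher coefficients zero, set $x=\sqrt{t/\pi}$ to recover \eref{35}, insert the value of $\In_f^{**}$ from \eref{35d}, and then multiply by the Taylor series of $\ue^{-k^2t}$ to obtain \eref{36}. The only difference is that you spell out the verification of the hypotheses of Theorem \ref{15} and the harmlessness of the Cauchy product with the entire exponential series, which the paper leaves implicit.
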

We define analogously to (\ref{31}) the ``generalized spectral zeta function'' for $-\Delta$
\begin{equation}
\label{37}
\eq{
{\Zl}(s,k) &:=\tr\left(\left(-\Delta+k^2\right)^{-s}\right)\\
 &=\sum\limits_{n=1}^{\infty}\frac{\dz(n)}{\left(n^2+k^2\right)^s}, \quad\re s>\frac{1}{2}, \quad k\neq\pm\ui n, \quad n\in\nz, 
}
\end{equation}
(defined by the principal branch of the logarithm) and (see (\ref{32aab}))
\begin{equation}
\label{38}
T(k):={\Zl}(1,k)=\sum\limits_{n=1}^{\infty}\frac{d(n)}{n^2+k^2}, \quad k\neq\pm\ui n, \quad n\in\nz, 
\end{equation}
and observe
\begin{equation}
\label{39}
{\Zl}(s,0) =\sum\limits_{n=1}^{\infty}\frac{\dz(n)}{n^{2s}}= \zeta^2(2s), \quad s\in\kz\setminus\left\{\frac{1}{2}\right\}.
\end{equation}
With the same arguments as in section \ref{30a} for the expression (\ref{32ab}), we obtain for ${\Zl}(s,k)$
\begin{equation}
\label{40}
\eqalign{
{\Zl}(s,k) &=\frac{1}{\Gamma(s)}\widehat{{\T}}(s,k)\\
         &=\frac{1}{\Gamma(s)}\int\limits_{0}^{\infty}t^{s-1}\ue^{-k^2t}\T(t)\ud t,\quad \re s>\frac{1}{2}, \quad \re k^2>-1.
} 
\end{equation}
%
In order to obtain the analytic continuation of ${\Zl}(s,k)$ as a function of $s$ for fixed $k$, we  apply theorem \ref{32a} to the spectral zeta function (\ref{40}) and use at $s=\frac{1}{2}-n$, $n\in\nz$, the Taylor expansion (setting $x=s+\left(n-\frac{1}{2}\right)$, $n\in-\nz_0$)
\begin{equation}
\label{40a}
\hspace{-2cm}\eqalign{
\frac{1}{\Gamma(s)} &=\frac{1}{\Gamma\left(\frac{1}{2}-n+x\right)}\\
            &=\frac{1}{\Gamma\left(\frac{1}{2}-n\right)}  -\frac{\psi\left(\frac{1}{2}-n\right)}{\Gamma\left(\frac{1}{2}-n\right)}\left(s+\left(n-\frac{1}{2}\right)\right)+\Or\left(\left(s+\left(n-\frac{1}{2}\right)\right)^2\right).
}
\end{equation}
We infer the following theorem
\begin{theorem}
\label{41}
${\Zl}(s,k)$, defined in $\eref{37}$, $k\in\left\{k\in\kz;\quad |\im k|<\sqrt{1+(\re k)^2}\right\}$ fixed, has as a function of $s$ a meromorphic extension to all of $\kz$ with poles at $s_n=-\left(n-\frac{1}{2}\right)$, $n\in\nz_0$, and respective principal parts 
\begin{equation}
\label{42}
\hspace{-2.5cm}\eqalign{
\left(\Pp \left[{\Zl}\right]_{s_n}\right)(s,k) &=\phantom{+}\frac{\sqrt{\pi}(-1)^nk^{2n}}{4\Gamma\left(\frac{1}{2}-n\right)n!}\frac{1}{\left(s+\left(n-\frac{1}{2}\right)\right)^2}\\
&\hspace{-1.5cm}+\left[\left(3\gamma-\frac{1}{2}\ln 2\right)\frac{\sqrt{\pi}(-1)^nk^{2n}}{4\Gamma\left(\frac{1}{2}-n\right)n!}-\frac{\sqrt{\pi}\psi\left(\frac{1}{2}-n\right)}{4\Gamma\left(\frac{1}{2}-n\right)}\frac{(-1)^nk^{2n}}{n!}\right]\frac{1}{s+\left(\frac{1}{2}-n\right)}.
}
\end{equation}
Furthermore, the function ${\Zl}(s,0)$ has ``trivial'' zeros at $s=-n$, $n\in\nz$ (see $\eref{39}$).  
\end{theorem}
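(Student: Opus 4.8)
The plan is to mimic the argument behind Theorem~\ref{804} for $\Zh(s,k)$, now with the generalized heat trace $\T(t,k)$ playing the rôle that $\Th(t,k)$ played there. The starting point is the Mellin representation \eref{40}, $\Zl(s,k)=\Gamma(s)^{-1}\widehat{\T}(s,k)$, valid for $\re s>\frac{1}{2}$ whenever $\re k^2>-1$; the latter condition is exactly the strip $\{k\in\kz:|\im k|<\sqrt{1+(\re k)^2}\}$ in which $k$ is assumed to lie. First I would verify that, for such a fixed $k$, the function $t\mapsto\T(t,k)$ satisfies the hypotheses of Theorem~\ref{32a}: it is a locally uniformly convergent series of continuous functions, hence continuous on $(0,\infty)$; since $\Theta_\Delta(t)=\sum_{n=1}^{\infty}d(n)\ue^{-n^2t}=\Or(\ue^{-t})$ as $t\to\infty$ and $\T(t,k)=\ue^{-k^2t}\Theta_\Delta(t)$, one has $\T(t,k)=\Or(\ue^{-(1+\re k^2)t})$, so $\T(t,k)$ is of rapid decay at infinity precisely because $\re k^2>-1$; and Theorem~\ref{36a} supplies the required small-$t$ expansion \eref{36}, whose exponents $\{n-\frac{1}{2}:n\in\nz_0\}\cup\{n:n\in\nz_0\}$ form an increasing sequence tending to $+\infty$, with logarithmic multiplicities $m_j\in\{0,1\}$.

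Applying Theorem~\ref{32a} then yields a meromorphic continuation of $\widehat{\T}(s,k)$ to all of $\kz$, with double poles at $s=-(n-\frac{1}{2})$ fed by the $t^{n-1/2}\ln t$ and $t^{n-1/2}$ terms of \eref{36}, and simple poles at $s=-n$, $n\in\nz_0$, fed by the integer-power terms $\frac{(-1)^nk^{2n}}{4n!}t^n$; the principal parts are obtained from \eref{32c}. Multiplying by $\Gamma(s)^{-1}$ then settles the location of the poles: $\Gamma(s)^{-1}$ is entire with \emph{simple} zeros at precisely $s=0,-1,-2,\dots$ (cf.\ \eref{803}), which cancel the simple poles of $\widehat{\T}(s,k)$ at the non-positive integers, whereas $\Gamma(\frac{1}{2}-n)^{-1}\neq0$, so the double poles at $s_n=-(n-\frac{1}{2})$ persist. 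Since $-(n-\frac{1}{2})$ is never an integer there is no collision between the two families, and one concludes that $\Zl(s,k)$ is meromorphic on $\kz$ with poles only at $s_n=-(n-\frac{1}{2})$, $n\in\nz_0$.

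For the principal parts \eref{42} I would write the Laurent tail of $\widehat{\T}(s,k)$ at $s=s_n$, which by \eref{32c} and \eref{36} has the form $\frac{A_n}{(s-s_n)^2}+\frac{B_n}{s-s_n}+\cdots$ with $A_n=\frac{\sqrt{\pi}}{4}\frac{(-1)^nk^{2n}}{n!}$ the double-pole coefficient and $B_n$ the coefficient of $t^{n-1/2}$ (without logarithm) in \eref{36}, and multiply it by the Taylor expansion \eref{40a} of $\Gamma(s)^{-1}$ about $s=s_n$, retaining the two leading Taylor coefficients $\Gamma(\frac{1}{2}-n)^{-1}$ and $-\psi(\frac{1}{2}-n)\Gamma(\frac{1}{2}-n)^{-1}$. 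Collecting the coefficients of $(s-s_n)^{-2}$ and $(s-s_n)^{-1}$ gives the double-pole coefficient $A_n/\Gamma(\frac{1}{2}-n)$ and the simple-pole coefficient $(B_n-A_n\psi(\frac{1}{2}-n))/\Gamma(\frac{1}{2}-n)$, which is \eref{42}. Finally, the ``trivial'' zeros follow instantly from \eref{39}: since $\Zl(s,0)=\zeta^2(2s)$ and the trivial zeros of the Riemann zeta function lie at $-2,-4,-6,\dots$, the function $s\mapsto\Zl(s,0)$ has double zeros at $s=-n$, $n\in\nz$; moreover, putting $k=0$ annihilates the principal parts \eref{42} for $n\geq1$ and reduces the $n=0$ one to $\frac{1}{4}(s-\frac{1}{2})^{-2}+\cdots$, consistently with $\zeta(2s)\sim\frac{1}{2(s-1/2)}$ near $s=\frac{1}{2}$.

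I do not expect a deep obstacle: once Theorems~\ref{32a} and~\ref{36a} are in hand the argument is essentially bookkeeping, parallel to the proof of Theorem~\ref{804}. The two points deserving genuine (if routine) care are: (i) checking that the cancellation of the non-positive-integer poles is \emph{exact}, which rests on the fact that \eref{36} carries no $\ln t$ at integer exponents --- so $\widehat{\T}(s,k)$ has only simple poles there --- and that $\Gamma(s)^{-1}$ has only simple zeros there (cf.\ \eref{803}); and (ii) the termwise multiplication of the two local expansions at the double poles, where the subleading Taylor coefficient of $\Gamma(s)^{-1}$ must be kept because it is exactly what produces the $\psi(\frac{1}{2}-n)$ contribution to the simple-pole part of \eref{42}.
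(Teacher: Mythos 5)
Your proposal is correct and follows exactly the paper's own route: the paper obtains theorem \ref{41} in precisely the same way, by applying theorem \ref{32a} to the Mellin representation \eref{40}, feeding in the small-$t$ expansion \eref{36} of ${\T}(t,k)$ (valid for $\re k^2>-1$, i.e. $|\im k|<\sqrt{1+(\re k)^2}$) together with the Taylor expansion \eref{40a} of $\Gamma(s)^{-1}$ at $s=\frac{1}{2}-n$, the simple poles of $\widehat{\T}(s,k)$ at the nonpositive integers being cancelled by the simple zeros of $\Gamma(s)^{-1}$. One remark: your bookkeeping gives the constant $3\gamma-2\ln 2$ in the simple-pole coefficient, as dictated by \eref{36} and confirmed both by the $k=0$, $n=0$ check against \eref{44} and by \eref{501}, so the factor $3\gamma-\frac{1}{2}\ln 2$ printed in \eref{42} --- and likewise the denominator written there as $s+\left(\frac{1}{2}-n\right)$, which should read $s+\left(n-\frac{1}{2}\right)$ --- are typographical slips in the paper rather than gaps in your argument.
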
 
A direct consequence of theorem \ref{41} is that for $k=0$ the poles at $s_n=-\left(n-\frac{1}{2}\right)$, $n\in\nz$, vanish in agreement with (\ref{39}). Furthermore, with the Laurent expansion \cite[p.\,807]{Abramowitz:1992} at $s=1$ (see \eref{103})
\begin{equation}
\label{23}
\zeta(s)=\frac{1}{s-1}+\gamma-\gamma_1(s-1)+\Or\left((s-1)^2\right),
\end{equation}
%
we get for ${\Zl}(s,0)$ at $s=\frac{1}{2}$ the Laurent expansion
\begin{equation}
\label{44}
{\Zl}(s,0)=\zeta^2(2s)=\frac{1}{4\left(s-\frac{1}{2}\right)^2}+\frac{\gamma}{s-\frac{1}{2}}+\gamma^2-2\gamma_1+\Or \left(s-\frac{1}{2}\right).
\end{equation}

We can compute $\Zl(s,k)$ also from the trace formula (\ref{13}) using the function $f_k(s,p):=\frac{1}{\left(p^2+k^2\right)^s}$, $\re s>\frac{1}{2}$, $k\neq\pm\ui n$, $n\in\nz$, which fulfills as a function of $p$ the requirements of theorem \ref{12}. Due to the identity (\ref{13}) we split ${\Zl}(s,k)$,
\begin{equation}
\label{47}
{\Zl}(s,k):={\Zl^W}(s,k)+{\Zl^\mathrm{po}}(s,k),
\end{equation}
and define analogously to (\ref{25}) a ``Weyl term'' 
\begin{equation}
\label{48}
{\Zl^W}(s,k):=\int\limits_0^{\infty}(\ln p+2\gamma)f_k(s,p)\ud p +\frac{f_k(s,0)}{4}, \\
\end{equation}
and a ``periodic orbit term''
\begin{equation}
\label{49}
\fl{\Zl^\mathrm{po}}(s,k):=4\sum\limits_{n=1}^{\infty}\int\limits_0^{\infty}d(n)\left[K_0\left(2(2\pi)^{\frac{3}{2}}\sqrt{\frac{k}{\lf_{p,n}}}\right)-\frac{\pi}{2}Y_0\left(2(2\pi)^{\frac{3}{2}}\sqrt{\frac{k}{\lf_{p,n}}}\right)\right]f_k(s,p)\ud p.
\end{equation}
In order to evaluate ${\Zl^W}(s,k)$ we calculate the integrals of each summand in (\ref{48}) separately. We obtain for the integral in the first summand \cite[p.\,527]{Prudnikov:1986b}  
\begin{equation}
\label{50}
\eqalign{
\int\limits_0^{\infty}\frac{\ln p}{\left(p^2+k^2\right)^s}\ud p &= 
\frac{1}{\Gamma(s)}\int\limits_{0}^{\infty}t^{s-1}\ue^{-k^2t}\left[\int\limits_{0}^{\infty}\ue^{-p^2t}\ln p \ud p\right]\ud t\\
& =\frac{\sqrt{\pi}}{4\Gamma(s)}\int\limits_{0}^{\infty}t^{s-\frac{3}{2}}\ue^{-k^2t}\left[\psi\left(\frac{1}{2}\right)-\ln t\right]\ud t \\
& =\frac{\sqrt{\pi}\Gamma\left(s-\frac{1}{2}\right)}{4\Gamma(s)k^{2s-1}}\left[\psi\left(\frac{1}{2}\right)-\psi\left(s-\frac{1}{2}\right)+2\ln k\right],\\
 & \hspace{1cm} \re s>\frac{1}{2}, \quad \re k^2 >0,
}
\end{equation}
and for the second summand with a similar calculation \cite[p.\,527]{Prudnikov:1986b}
\begin{equation}
\label{51}
\int\limits_0^{\infty}\frac{2\gamma}{\left(p^2+k^2\right)^s}\ud p=\frac{\gamma\sqrt{\pi}\Gamma\left(s-\frac{1}{2}\right)}{\Gamma(s)k^{2s-1}}, \quad \ \re s>\frac{1}{2}, \quad \re k^2 >0.
\end{equation}
Therefore, we get for ${\Zl^{W}}(s,k)$ ($\psi\left(\frac{1}{2}\right)=-\gamma-2\ln 2$) 
\begin{equation}
\label{52}
\eqalign{
{\Zl^W}(s,k) & =\frac{\sqrt{\pi}\Gamma\left(s-\frac{1}{2}\right)}{4\Gamma(s)k^{2s-1}}\left[-\psi\left(s-\frac{1}{2}\right)+\ln\left(\frac{k^2}{4}\right)+3\gamma\right]+\frac{1}{4k^{2s}} \\
 & \hspace{1cm} \re s>\frac{1}{2}, \quad \re k^2 >0.
}
\end{equation}
Of course, with the representation (\ref{52}), the function ${\Zl^W}(s,k)$ can as a function of $s$ ($\re k^2>0$, fixed) be analytically continued to $\kz\setminus\left\{s=-\left(n-\frac{1}{2}\right); \quad n\in\nz_0\right\}$. Using the Taylor expansion (\ref{40a}) at $s_0=\frac{1}{2}$ and \cite[pp.\,3, 13]{Magnus:1966}
\begin{equation}
\label{53}
\eqalign{
\psi\left(s-\frac{1}{2}\right) &=-\frac{1}{s-\frac{1}{2}}-\gamma+\Or\left(s-\frac{1}{2}\right),\\
\Gamma\left(s-\frac{1}{2}\right) &=\frac{1}{s-\frac{1}{2}}-\gamma+\Or\left(s-\frac{1}{2}\right),
}
\end{equation}
we obtain from (\ref{52}) for the principal part of ${\Zl^W}(s,k)$ at $s_0=\frac{1}{2}$ ($\re k^2>0$, fixed)
\begin{equation}
\label{54}
\Pp\left[{\Zl^W}\right]_{\frac{1}{2}}(s,k)=\frac{1}{4\left(s-\frac{1}{2}\right)^2}+\frac{\gamma}{s-\frac{1}{2}}+\Or(1),
\end{equation} 
in complete agreement with (\ref{44}) (see (\ref{39})).

We want to investigate the holomorphy of ${\Zl}(s,k)$ defined in (\ref{37}) as a function on $\kz^2$ in $s$ and $k$. For this reason, we again apply Hartogs' theorem \ref{55a}. Due to the normal convergence of the series in (\ref{37}) the function ${\Zl}(s,k)$,  $\re s >\frac{1}{2}$ fixed, is as a function of $k$ holomorphic on $\kz\setminus\left\{k\in\kz;\quad |\im k|\geq1,\quad \re k=0 \right\}$. On the other hand we know from theorem \ref{41} that ${\Zl}(s,k)$, $k\in\left\{k\in\kz;\quad |\im k|<\sqrt{1+(\re k)^2}\right\}$ fixed, is as a function of $s$ holomorphic on $\kz\setminus\left\{z_n=-\left(n-\frac{1}{2}\right); \quad n\in\nz_0\right\}$. Therefore, we infer \cite[p.\,135]{Cartan:1995} 
\begin{theorem}
\label{60}
${\Zl}(s,k)$ is an analytic function on 
\begin{equation}
\label{60a}
\hspace{-2cm}S_0\times \widetilde{K} :=\left\{(s,k)\in\kz\times\kz; \quad \re s>\frac{1}{2}, \quad k\in\kz\setminus\left\{|\im k|\geq1,\quad \re k=0 \right\}\right\}.
\end{equation}
\end{theorem}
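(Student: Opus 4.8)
The plan is to mirror the proof of Theorem~\ref{806}: establish the holomorphy of $\Zl(s,k)$ separately in each of the two variables on the appropriate domains, and then combine the two statements by the ``separately holomorphic $\Rightarrow$ jointly holomorphic'' principle of \cite[p.\,135]{Cartan:1995} (equivalently Hartogs' theorem, Theorem~\ref{55a}). In fact the defining series \eref{37} will be seen to converge normally in the pair $(s,k)$ on $S_0\times\widetilde{K}$, so the several-variable Weierstra\ss\ theorem already gives the assertion directly; I will organize the argument along the separate-variable route to make the two inputs explicit.

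First I would fix $s$ with $\re s>\frac{1}{2}$ and study $k\mapsto\Zl(s,k)$. The key point is to identify $\widetilde{K}=\kz\setminus\{k\in\kz;\ \re k=0,\ |\im k|\geq 1\}$ as exactly the maximal set on which every summand $(n^2+k^2)^{-s}=\ue^{-s\log(n^2+k^2)}$ (principal branch) is holomorphic in $k$: writing $k=a+\ui b$, one has $n^2+k^2=n^2+a^2-b^2+2\ui ab\in(-\infty,0]$ iff $ab=0$ and (then necessarily $a=0$ and) $|b|\geq n$, and the union over $n\geq 1$ of these exceptional loci is precisely $\{a=0,\ |b|\geq 1\}$. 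On a compact $L\subset\widetilde{K}$ the finitely many summands with small $n$ are holomorphic and bounded, while for $n$ large $\re(n^2+k^2)=n^2+a^2-b^2\geq n^2-\sup_L|k|^2>0$ and $|n^2+k^2|\sim n^2$ uniformly on $L$, so the tail is dominated by $C(L)\sum_n d(n)n^{-2\re s}$, which converges for $\re s>\frac{1}{2}$ by \eref{9a}. Hence the series \eref{37} converges normally on $\widetilde{K}$ and $\Zl(s,\cdot)$ is holomorphic there.

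Second I would fix $k\in\widetilde{K}$ and study $s\mapsto\Zl(s,k)$. For such a $k$ no summand is singular (there is no $n$ with $n^2+k^2\in(-\infty,0]$), each summand is entire in $s$, and on compacta in $\{\re s>\frac{1}{2}\}$ the same tail bound gives uniform convergence, so $\Zl(\cdot,k)$ is holomorphic on $\{\re s>\frac{1}{2}\}$; on the sub-domain $\{k;\ |\im k|<\sqrt{1+(\re k)^2}\}=\{k;\ \re k^2>-1\}$ one may alternatively quote Theorem~\ref{41}, which furnishes the stronger meromorphic continuation to all of $\kz$ with poles only at $s=-(n-\frac{1}{2})$. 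Having established that $\Zl(s,\cdot)$ is holomorphic on $\widetilde{K}$ for every fixed $s$ with $\re s>\frac{1}{2}$ and that $\Zl(\cdot,k)$ is holomorphic on $\{\re s>\frac{1}{2}\}$ for every fixed $k\in\widetilde{K}$, an application of \cite[p.\,135]{Cartan:1995} (the content of Hartogs' theorem, Theorem~\ref{55a}) yields that $\Zl$ is analytic with respect to the pair $(s,k)$ on $S_0\times\widetilde{K}$, which is the assertion.

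The main obstacle is not any single estimate but the bookkeeping of the domains: one must verify that the vertical half-lines $\{\re k=0,\ |\im k|\geq 1\}$ are genuinely unavoidable for the principal-branch definition of \eref{37}, that the separate $s$-holomorphy on $\{\re s>\frac{1}{2}\}$ holds for \emph{every} $k\in\widetilde{K}$ and not merely on the smaller Mellin domain $\re k^2>-1$ on which Theorem~\ref{41} lives (so that the combining theorem may be applied with the full advertised set $\widetilde{K}$), and that the normal convergence is uniform on compacta in each of the two product slices. All of these reduce to the elementary inequality $\re(n^2+k^2)\geq n^2-|k|^2$ together with the divisor bound \eref{9a}.
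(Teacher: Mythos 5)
Your proposal is correct and takes essentially the same route as the paper: normal convergence of the defining series \eref{37} yields the holomorphy for $\re s>\frac{1}{2}$, and the two-variable statement is then obtained by the combination principle of \cite[p.\,135]{Cartan:1995} (Hartogs, theorem \ref{55a}), exactly as for theorem \ref{806}. You are in fact a bit more careful than the text, which quotes theorem \ref{41} (valid only for $k\in K$) for the separate $s$-holomorphy, whereas your direct series argument gives it for every $k\in\widetilde{K}$, which is what the full product domain $S_0\times\widetilde{K}$ actually requires.
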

Thus, by the above results for ${\Zl}(s,k)$ we can apply theorem \ref{55a} and get ($K\subset\widetilde{K}$)
\begin{cor}
\label{61}
${\Zl}(s,k)$ is an analytic function on 
\begin{equation}
\label{61a}
\hspace{-2.5cm}S\times K :=\left\{(s,k)\in\kz\times\kz; \quad s\neq-\left(n-\frac{1}{2}\right), \ n\in\nz_0,\quad|\im k|<\sqrt{1+(\re k)^2}\right\}.
\end{equation}
\end{cor}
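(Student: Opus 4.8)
The plan is to deduce Corollary \ref{61} as a direct application of Hartogs' theorem (Theorem \ref{55a}), in exactly the same way that Corollary \ref{808} was obtained from Theorem \ref{806} in section \ref{30a}. Hartogs' theorem needs two inputs: joint analyticity of $\Zl(s,k)$ on a product domain $S_0\times K$ with $S_0\subset S$, and, for each fixed $k\in K$, analyticity of $s\mapsto\Zl(s,k)$ on the full $s$-domain $S:=\kz\setminus\{-(n-\tfrac12);\ n\in\nz_0\}$. Both are already available from Theorem \ref{60} and Theorem \ref{41} respectively, so the work is purely one of checking that the domains match up.

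First I would record the set inclusions hidden in the statement. Write $K:=\{k\in\kz;\ |\im k|<\sqrt{1+(\re k)^2}\}$ and recall $\widetilde K=\kz\setminus\{k\in\kz;\ |\im k|\ge1,\ \re k=0\}$ from Theorem \ref{60}. To see $K\subset\widetilde K$ it suffices to treat a point with $\re k=0$ lying in $K$: for such a point $\sqrt{1+(\re k)^2}=1$, so the defining inequality of $K$ is $|\im k|<1$, whence $k\in\widetilde K$. Also $S_0:=\{s\in\kz;\ \re s>\tfrac12\}\subset S$, since the excluded points $-(n-\tfrac12)$, $n\in\nz_0$ (namely $\tfrac12,-\tfrac12,-\tfrac32,\dots$) all have real part $\le\tfrac12$; and $S_0$, $S$, $K$ are all connected open sets, as Hartogs' theorem requires ($S$ is $\kz$ minus a discrete set, $S_0$ is a half-plane, and $K$ is the connected region between the two branches of $y^2-x^2=1$).

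Next I would assemble the two inputs. By Theorem \ref{60}, $\Zl(s,k)$ is analytic jointly in $(s,k)$ on $S_0\times\widetilde K$; restricting to the subdomain $S_0\times K$ (legitimate by $K\subset\widetilde K$) gives joint analyticity there. On the other hand, every $k\in K$ satisfies $|\im k|<\sqrt{1+(\re k)^2}$, so Theorem \ref{41} applies and shows that $s\mapsto\Zl(s,k)$ is meromorphic on all of $\kz$ with poles contained in $\{-(n-\tfrac12):n\in\nz_0\}$, i.e.\ analytic on $S$. Feeding these into Hartogs' theorem \ref{55a} with the identifications ``$k$-domain'' $=K$, ``$s$-domains'' $S_0\subset S$, and $f=\Zl$, one concludes that $\Zl$ is analytic in the two variables jointly on $S\times K$, which is precisely the assertion of Corollary \ref{61}.

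I do not expect any genuine obstacle here; this is a verbatim reuse of the argument already run for $\Zh$. The only point demanding a moment's care is the bookkeeping of the $k$-domains — verifying that the region $K$ appearing in Theorem \ref{41}'s hypothesis is the same as the one in the corollary and sits inside the region $\widetilde K$ on which Theorem \ref{60} furnishes joint analyticity — together with the (elementary) observation that $S_0$, $S$, $K$ are all connected domains so that Hartogs' theorem is applicable as stated.
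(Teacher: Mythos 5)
Your proposal is correct and follows the paper's own route exactly: joint analyticity on $S_0\times\widetilde K$ from theorem \ref{60}, fixed-$k$ analyticity on $S$ from theorem \ref{41}, and Hartogs' theorem \ref{55a} with the inclusion $K\subset\widetilde K$, which is precisely what the paper invokes. Your additional bookkeeping (checking $S_0\subset S$, $K\subset\widetilde K$ and connectedness of the domains) merely makes explicit what the paper leaves implicit.
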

\begin{rem}
\label{900a}
We want to remark that the domains of $k$ in corollary $\ref{808}$ respectively corollary $\ref{61}$ can be enlarged to the lager domains $\widetilde{\mathcal{K}}$ respectively $\widetilde{K}$ (see $\cite{Lang:1993}$).
\end{rem}
\hspace{-9mm}But we can't directly apply the result of \cite[p.\,17]{Lang:1993} ($\lambda_1=0$, $a_0\neq0$ is required in the series $\sum\limits_{n=0}^{\infty}a_k\ue^{\lambda_kt}$). We omit the proof of remark \ref{900a} (it's an easy adaption of the proof in \cite{Lang:1993}) since we don't need this generalization.
\section{A ``secular equation''}
\label{55}
For our simple infinite quantum graph $\mathfrak{G}$, the eigenvalues respectively wavenumbers are explicitly known. However, in order to investigate the eigenvalues of a more complex infinite quantum graph, it would be interesting to find a ``secular equation`` determining the eigenvalues for the quantum mechanical system $(-\Delta,\Dz(\Delta))$ (see (\ref{4}) and (\ref{5})). We recall that for a compact quantum graph with $E$ edges, $0<E<\infty$, and with the $\delta$-type boundary conditions discussed in the introduction this was fist done in \cite{KottosSmilansky:1998} and generalized for general self adjoint boundary conditions by \cite{KostrykinSchrader:2006b}. Note that the representations used in \cite{KottosSmilansky:1998} and \cite{KostrykinSchrader:2006b} differ only by a cyclic permutation of the matrix entries. In the setting of \cite{KottosSmilansky:1998,KostrykinSchrader:2006b} the secular equation $F_E(k)$ for a compact quantum graph is a function defined by a determinant
\begin{equation}
\label{56}
F_E(k):=\det\left(\eins_{2E\times2E}-S_{2E}(k)T_{2E}(k)\right),  \quad k\in \rz,
\end{equation} 
and the nonzero real zeros $k_n$ of $F_E(k)$ are exactly the wavenumbers (and the orders of the zeros are exactly the multiplicities of the corresponding eigenvalues) of the corresponding quantum graph. The matrix $S(k)$ is called the S-matrix and describes the quantum mechanical scattering at the vertices (see \cite{KostrykinSchrader:1999,KostrykinSchrader:2006} for a detailed discussion). The matrix $T_{2E}$ can be interpreted as a ``transfer'' matrix \cite{KottosSmilansky:1998} describing the propagation of the eigenfunctions along the edges. It has in the setting of \cite{KostrykinSchrader:2006b} the form
\begin{equation}
\label{57}
T_{2E}(k):=
\left(
\begin{array}{cc}
0 & \ue^{\ui k\vl} \\
\ue^{\ui k\vl} & 0
\end{array}
\right)
, \quad 
\left(\ue^{\ui k\vl}\right)_{mn}:=\delta_{mn}\ue^{\ui k l_n},\quad 1\leq m,n\leq E,
\end{equation}
where $0<l_n<\infty$ denotes the length of the edge $e_n$. Clearly, the definition (\ref{57}) doesn't make sense for an infinite quantum graph. Therefore, we make a rearrangement of the matrix entries and define for a compact quantum graph
\begin{equation}
\label{58}
\eqalign{
\widetilde{T}_{2E}(k) & :=
\left(
\begin{array}{ccc}
T^1(k) & & 0 \\
& \ddots & \\
0 & & T^E(k)
\end{array}
\right)
,\quad T^n(k):=
\left(
\begin{array}{cc}
0 & \ue^{\ui kl_n} \\
\ue^{\ui kl_n} & 0
\end{array}
\right)
, \\
 & \hspace{7cm} \quad 1\leq n\leq E.
}
\end{equation}
We remark that the rearrangement can be achieved by a permutation matrix $P$ by $\widetilde{T}_{2E}(k)=P^+T_{2E}(k)P$. The S-matrix of a compact quantum graph with $E$ edges is for Dirichlet boundary conditions (``hard wall reflection'') in the setting of \cite{KostrykinSchrader:2006b} (minus) the unit matrix, $S_{2E}(k)=-\eins_{2E\times2E}$, and thus $\widetilde{S}_{2E}(k):=P^+S_{2E}(k)P=-\eins_{2E\times2E}$. Therefore, the secular equation in (\ref{56}) is for Dirichlet boundary conditions equivalent to (compact graph)
\begin{equation}
\label{59}
\widetilde{F}_E^D(k):=\det\left(\eins_{2E\times2E}+\widetilde{T}_{2E}(k)\right).
\end{equation}
We now extend this rearrangement to our infinite quantum graph $\mathfrak{G}$ (see section \ref{z15}) and observe that the eigenvalue problem $\widetilde{T}_{\infty}(k)\phi=\phi$ with $\phi\in l^2$ ($\widetilde{T}_{\infty}(k)$ as a linear and bounded operator on the Hilbert space of square summable sequences) yields exactly the wavenumbers $k_{n,m}=nm$. However, if we would take over the definition of $\widetilde{F}_E^D(k)$ in (\ref{59}) for our infinite quantum graph with Dirichlet boundary conditions, there would arise a problem since  
\begin{equation}
\label{59a}
\hspace{-1.5cm}\widetilde{F}^D_{\infty}(k):=``\det\left(\eins+\widetilde{T}_{\infty}(k)\right)":=\prod\limits_{m=1}^{\infty}\left(1-\ue^{\frac{2\pi\ui k}{m}}\right)=0, \quad \mbox{for all} \quad k\in\kz.
\end{equation}    
Therefore, we have to seek a proper regularization of $\widetilde{F}^D_{\infty}(k)$ in such a way that the regularized function $\widetilde{F}^{D,R}_{\infty}(k)$ is given by a convergent infinite product and vanishes at precisely the wavenumbers $\pm\ui k_{n,m}$ of our quantum graph $\mathfrak{G}$. We remark that the operator $\widetilde{T}_{\infty}(k)$ is not an element of any trace ideal $T_p$, $p\leq1$, and therefore we cannot apply results of \cite{Simon:1979} for regularization.

There are different methods to obtain a proper regularized secular equation for the infinite quantum graph $\mathfrak{G}$. One elegant method, applicable also in more general situations, is to use the method of zeta function regularization. This method will be applied in the next section, but first we shall use an alternative method which is well adopted to the very special situation of the infinite quantum graph $\mathfrak{G}$ studied in this paper. To this purpose let us define the function ($D(0)=\frac{1}{2\pi}$)  
\begin{equation}
\label{60b}
D\left(k^2\right):=\frac{1}{2\pi}\prod\limits_{n=1}^{\infty}\left(1+\frac{k^2}{n^2}\right)^{\dz(n)},\quad k\in\kz.
\end{equation}
Since the Weierstraß product in \eref{60b} is normally convergent on $\kz$, we obtain the
\begin{cor}
\label{60c}
The function $D\left(k^2\right)$ defined in $\eref{60b}$ is an entire function on $\kz$ whose zeros $z_l$, $l\in\nz$, are exactly located at the wavenumbers $\pm \ui k_{mn}=\pm\ui mn$ ($m,n\in \nz$) of the quantum graph $\mathfrak{G}$. The order of the zero at $z_l=\pm\ui l$ is given by the multiplicity $d(k_l)=d(l)$ of the associated wavenumber $k_l=l$.
\end{cor}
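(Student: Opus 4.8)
The plan is to derive the statement directly from the standard theory of locally uniformly convergent infinite products of holomorphic functions, in exactly the spirit of the discussion of the generalized gamma function $\widetilde{\Gamma}$ around \eref{608}--\eref{611}. First I would check that the Weierstra{\ss} product in \eref{60b} converges normally on $\kz$. Writing the $n$-th factor as $\exp\!\left(d(n)\ln\!\left(1+\frac{k^2}{n^2}\right)\right)$ with the principal branch, one has $\left|\ln\!\left(1+\frac{k^2}{n^2}\right)\right|\leq\frac{2|k|^2}{n^2}$ for all $n$ larger than some $n_0=n_0(R)$, uniformly for $|k|\leq R$; combined with $d(n)=\Or\!\left(n^{\epsilon}\right)$ from \eref{9a} (taking $\epsilon<1$), this shows that $\sum_{n=1}^{\infty}d(n)\left|\ln\!\left(1+\frac{k^2}{n^2}\right)\right|$ converges uniformly on every compact subset of $\kz$. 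Hence the partial products converge locally uniformly, and $D\!\left(k^2\right)$ is a well-defined entire function of $k$. This is even slightly simpler than the argument behind \eref{610}--\eref{611}, since here no convergence-generating exponential factors $\ue^{-k/n}$ are required.

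Next I would locate the zeros. For a normally convergent product of entire functions the zero set of the product is the union of the zero sets of the individual factors, and the order of vanishing at any point is the sum of the orders of vanishing of the factors there. Factoring $\left(1+\frac{k^2}{n^2}\right)^{d(n)}=n^{-2d(n)}(k-\ui n)^{d(n)}(k+\ui n)^{d(n)}$, the $n$-th factor vanishes precisely at $k=\pm\ui n$, each zero being of order $d(n)$, while it is nonzero and finite at every other point of $\kz$. Therefore, fixing $l\in\nz$, only the factor with index $n=l$ vanishes at $k=\pm\ui l$, so $D\!\left(k^2\right)$ has there a zero of order exactly $d(l)$, and it has no zeros outside the set $\{\pm\ui l;\ l\in\nz\}$. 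Since $\{mn;\ m,n\in\nz\}=\nz$ and the wavenumber associated with the index $l$ is $k_l=l$ with multiplicity $d(k_l)=d(l)$ (see \eref{7}, \eref{8}), this is precisely the assertion of the corollary.

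I do not expect a genuine obstacle here: the corollary is an immediate consequence of the Weierstra{\ss} product representation together with the divisor bound \eref{9a}. The only point that needs a word of justification is the normal convergence of \eref{60b}, and that is handled exactly as in the treatment of \eref{608} (and, as noted, more easily, because the present factors need no exponential regulators). Everything else---that a locally uniform limit of holomorphic functions is holomorphic, and the behaviour of zeros under such products---is textbook complex analysis, so no further difficulty arises.
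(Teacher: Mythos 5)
Your proof is correct and takes essentially the same route as the paper, which simply notes that the Weierstra{\ss} product \eref{60b} is normally convergent on $\kz$ and then reads off the location and orders of the zeros from the standard theory of such products. Your explicit verification of normal convergence, via the bound $\left|\ln\left(1+\frac{k^2}{n^2}\right)\right|\leq\frac{2|k|^2}{n^2}$ for large $n$ combined with $d(n)=\Or\left(n^{\epsilon}\right)$ from \eref{9a}, merely spells out the detail the paper leaves implicit.
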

Rewriting the infinite product in \eref{60b} as a double product,
\be
\label{60d}
D\left(k^2\right)=\frac{1}{2\pi}\prod\limits_{n=1}^{\infty}\prod\limits_{m=1}^{\infty}\left(1+\frac{(k/n)^2}{m^2}\right),\quad k\in\kz,
\ee
and using the identity
\be
\label{60e}
\prod\limits_{m=1}^{\infty}\left(1+\frac{z^2}{m^2}\right)=\frac{\sinh(\pi z)}{\pi z},\quad z \in\kz,
\ee
we obtain
\begin{theorem}
\label{60f}
The entire function $D\left(k^2\right)$ defined in $\eref{60b}$ possesses the following factorization in terms of the periodic orbits $\lf_{p,n}=\frac{2\pi}{n}$ of the infinite quantum graph  $\mathfrak{G}$   
\be
\label{60g}
\eq{
D\left(k^2\right) &=\frac{1}{2\pi}\prod\limits_{n=1}^{\infty}\left[\frac{\sinh\left(\frac{\pi k}{n}\right)}{\left(\frac{\pi k}{n}\right)}\right]\\
&=\frac{1}{2\pi}\prod\limits_{n=1}^{\infty}\left[\frac{\exp\left(\frac{k\lf_{p,n}}{2}\right)}{k\lf_{p,n}}\left(1-\exp\left(-k\lf_{p,n}\right)\right)\right], \quad k\in\kz.
}
\ee
\end{theorem}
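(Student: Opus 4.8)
The proof of Theorem~\ref{60f} is essentially a bookkeeping exercise that converts the double-product representation \eref{60d} of $D(k^2)$ into a single product over the edges (equivalently, over the primitive periodic orbits $\lf_{p,n}=\frac{2\pi}{n}$). First I would take \eref{60d} as the starting point; its validity was already established in Corollary~\ref{60c} via the normal convergence of the Weierstra\ss\ product \eref{60b}, and the interchange of the $n$- and $m$-products needed to pass from \eref{60b} to \eref{60d} is licensed by the same normal convergence together with the fact that $d(n)=\#\{(a,b)\in\nz\times\nz;\ ab=n\}$, so that $\prod_{n}(1+\tfrac{k^2}{n^2})^{d(n)}=\prod_{n,m}(1+\tfrac{(k/n)^2}{m^2})$ after reindexing $k^2/n^2\mapsto (k/n)^2$.

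The key step is to carry out the inner $m$-product in \eref{60d} in closed form using the classical Euler factorization \eref{60e}, namely $\prod_{m=1}^{\infty}\bigl(1+\tfrac{z^2}{m^2}\bigr)=\tfrac{\sinh(\pi z)}{\pi z}$, applied with $z=k/n$. This immediately yields
\be
D\left(k^2\right)=\frac{1}{2\pi}\prod\limits_{n=1}^{\infty}\frac{\sinh\left(\frac{\pi k}{n}\right)}{\frac{\pi k}{n}},\quad k\in\kz,
\ee
which is the first line of \eref{60g}. Here one should note that the $n=$``$k=0$'' issue is harmless: each factor $\sinh(\pi k/n)/(\pi k/n)$ extends to an entire function of $k$ with value $1$ at $k=0$, consistent with $D(0)=\frac{1}{2\pi}$, and the resulting product converges normally on $\kz$ because $\sinh(\pi k/n)/(\pi k/n)=1+\Or(1/n^2)$ uniformly on compacta — this is just the analytic content already contained in Corollary~\ref{60c}.

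Finally I would rewrite each factor in terms of $\lf_{p,n}=\frac{2\pi}{n}$ by the elementary identity $\sinh(\pi k/n)=\tfrac{1}{2}\bigl(\ue^{\pi k/n}-\ue^{-\pi k/n}\bigr)=\tfrac{1}{2}\ue^{\pi k/n}\bigl(1-\ue^{-2\pi k/n}\bigr)=\tfrac{1}{2}\exp\!\bigl(\tfrac{k\lf_{p,n}}{2}\bigr)\bigl(1-\exp(-k\lf_{p,n})\bigr)$, while $\pi k/n=\tfrac{1}{2}k\lf_{p,n}$, so that
\be
\frac{\sinh\left(\frac{\pi k}{n}\right)}{\frac{\pi k}{n}}=\frac{\exp\left(\frac{k\lf_{p,n}}{2}\right)}{k\lf_{p,n}}\bigl(1-\exp\left(-k\lf_{p,n}\right)\bigr),
\ee
and substituting this into the single product gives the second line of \eref{60g}. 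There is no serious obstacle here; the only point that deserves a word of care is the legitimacy of the rearrangement of products (the passage from \eref{60b} through \eref{60d} to the closed form), which must be justified by normal — not merely pointwise — convergence so that the limiting entire function and the locations and orders of its zeros are preserved, exactly as recorded in Corollary~\ref{60c}. All the remaining manipulations are term-by-term identities between entire functions and hence are valid on all of $\kz$.
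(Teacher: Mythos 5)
Your proposal is correct and follows essentially the same route as the paper: rewrite the Weierstra\ss\ product \eref{60b} as the double product \eref{60d}, evaluate the inner $m$-product with the Euler identity \eref{60e}, and re-express $\sinh\left(\frac{\pi k}{n}\right)\big/\left(\frac{\pi k}{n}\right)$ in terms of $\lf_{p,n}=\frac{2\pi}{n}$, with the same remarks on normal convergence and the removable singularity at $k=0$.
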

We remark that the products in \eref{60g} are well defined since
\begin{itemize}
	\item $\sinh x=x+\Or\left(x^3\right)$ respectively $1-\ue^{-x}=x+\Or\left(x^2\right)$ for $x\rightarrow0$ ($0$ is a removable singularity), and
	\item the infinite products are compact convergent in agreement with the fact that $D\left(k^2\right)$ is an entire function.
\end{itemize}

It is worthwhile to point out that \eref{60g} has the form of an ``Euler product'' (product over the length spectrum of the primitive periodic orbits of the quantum graph $\mathfrak{G}$). We also note
\begin{cor}
\label{60aa}
The entire function $D\left(k^2\right)$ defined in $\eref{60b}$ possesses the following factorization 
\be
\label{60ab}
D\left(k^2\right)=\frac{1}{2\pi}\frac{1}{\widetilde{\Gamma}(\ui k)\widetilde{\Gamma}(-\ui k)}, \quad k\in\kz,
\ee
in terms of the generalized gamma function $\widetilde{\Gamma}(z)$, defined in $\eref{608}$.
\end{cor}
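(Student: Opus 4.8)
The plan is to substitute the Weierstra\ss\ product definition \eref{608} of $\widetilde{\Gamma}$ at the two arguments $\ui k$ and $-\ui k$, multiply the two resulting products termwise, and observe that all exponential factors cancel while the algebraic factors combine to $1+k^2/n^2$, leaving precisely $2\pi D(k^2)$.

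Concretely, first I would write, using \eref{608},
\be
\label{60ac}
\frac{1}{\widetilde{\Gamma}(\ui k)}=\ue^{\ui\widetilde{\gamma}k}\prod_{n=1}^{\infty}\left[\left(1+\frac{\ui k}{n}\right)\ue^{-\frac{\ui k}{n}}\right]^{d(n)},\qquad
\frac{1}{\widetilde{\Gamma}(-\ui k)}=\ue^{-\ui\widetilde{\gamma}k}\prod_{n=1}^{\infty}\left[\left(1-\frac{\ui k}{n}\right)\ue^{\frac{\ui k}{n}}\right]^{d(n)},
\ee
for all $k\in\kz$. Since the factor $\ue^{-\frac{k}{n}}$ appearing in \eref{608} with argument $\pm\ui k$ produces $\ue^{\mp\ui k/n}$, and since each of the two products in \eref{60ac} converges normally on $\kz$ (this is exactly the content of the estimate \eref{611} together with \eref{9a}, established in the proof of Theorem \ref{614}), their product may be formed factor by factor. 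The prefactors give $\ue^{\ui\widetilde{\gamma}k}\ue^{-\ui\widetilde{\gamma}k}=1$, the exponential factors inside the product give $\ue^{-\ui k/n}\ue^{\ui k/n}=1$ for every $n$, and the algebraic factors give $\bigl(1+\tfrac{\ui k}{n}\bigr)\bigl(1-\tfrac{\ui k}{n}\bigr)=1+\tfrac{k^2}{n^2}$. Hence
\be
\label{60ad}
\frac{1}{\widetilde{\Gamma}(\ui k)\,\widetilde{\Gamma}(-\ui k)}=\prod_{n=1}^{\infty}\left(1+\frac{k^2}{n^2}\right)^{d(n)},\qquad k\in\kz,
\ee
and comparison with the definition \eref{60b} of $D\bigl(k^2\bigr)$ yields \eref{60ab}.

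The only point requiring a word of care is the termwise multiplication of the two infinite products, i.e.\ the identity
$\prod_n u_n\cdot\prod_n v_n=\prod_n(u_n v_n)$ with $u_n,v_n$ the bracketed factors raised to the power $d(n)$; this is legitimate because both $\prod u_n$ and $\prod v_n$ converge normally, hence absolutely, on every compact subset of $\kz$, so the rearrangement producing \eref{60ad} is valid. I expect this to be the main (and essentially only) obstacle, and it is minor. As a consistency check one may also note that \eref{60ad} reproduces Corollary \ref{60c}: the right-hand side is an entire function of $k$ whose zeros are at $k=\pm\ui mn$ with multiplicity $d(l)$ at $k=\pm\ui l$, matching the zero structure of $1/\bigl(\widetilde{\Gamma}(\ui k)\widetilde{\Gamma}(-\ui k)\bigr)$ coming from the poles of $\widetilde{\Gamma}$ at the negative integers described in Theorem \ref{614}.
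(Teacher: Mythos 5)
Your proof is correct, and it takes a slightly different route than the paper does. The paper derives \eref{60ab} by combining three previously established facts: the ``Euler product'' factorization \eref{60g}, $D\left(k^2\right)=\frac{1}{2\pi}\prod_{n}\frac{\sinh(\pi k/n)}{\pi k/n}$, the classical identity $\frac{\sinh(\pi z)}{\pi z}=\frac{1}{\Gamma(1+\ui z)\Gamma(1-\ui z)}$, and the rearranged representation \eref{613} of $\widetilde{\Gamma}$ as a product of ordinary gamma functions. You instead multiply the two defining Weierstra\ss\ products \eref{608} at the arguments $\pm\ui k$ termwise, so the exponential convergence factors and the prefactors $\ue^{\pm\ui\widetilde{\gamma}k}$ cancel and the algebraic factors combine into $\bigl(1+\tfrac{k^2}{n^2}\bigr)^{d(n)}$, which is exactly $2\pi D\left(k^2\right)$ by \eref{60b}. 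This is legitimate for precisely the reason you state: the normal (hence absolute, locally uniform) convergence established via \eref{610}, \eref{611} and \eref{9a} in the proof of Theorem \ref{614} permits the termwise multiplication; the integer exponents $d(n)$ pose no branch issues. Your argument is thus more self-contained and elementary, needing only the definitions and the convergence estimate, whereas the paper's route leans on already-derived structure (\eref{60g} and \eref{613}) and in doing so makes visible the link between $\widetilde{\Gamma}$, the ordinary gamma function, and the periodic-orbit (length-spectrum) factorization of $D\left(k^2\right)$ — a connection your shortcut bypasses but does not need.
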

\hspace{-0.85cm}This follows from \eref{60g}, the identity \cite[p.\, 256]{Abramowitz:1992}
\be
\label{60ac}
\frac{\sinh(\pi z)}{\pi z}=\frac{1}{\Gamma(1+\ui z)\Gamma(1-\ui z)},\quad z\in\kz, 
\ee
and \eref{613}.

In view of the identity \eref{60g}, we can propose a regularization of our original function $\widetilde{F}^D_{\infty}$ in \eref{59a}. For this purpose we define a truncation operation for our infinite matrices $\eins$ and $\widetilde{T}_{\infty}(k)$ in (\ref{59a}) (see (\ref{58})) and for the function (\ref{59a}) (see(\ref{59})) by
\begin{equation}
\label{70}
\eqalign{
\left.\eins\right|_N &:=\eins_{2N},\quad  \left.\widetilde{T}_{\infty}\right|_N(k):=\widetilde{T}_{2N}(k), \\
\left.\widetilde{F}^D_{\infty}\right|_N(k) &:={\widetilde{F}_N}^D(k):=\det\left(\left.\eins\right|_N+\left.\widetilde{T}_{\infty}(k)\right|_N\right) \\
& =\prod\limits_{n=1}^{N}\left[1-\ue^{\frac{2\pi\ui k}{n}}\right], \quad N\in\nz.
}
\end{equation}
We remark that these truncation operations correspond in our case to a truncation of the quantum graph which means that the truncated matrices respectively the truncated functions describe a finite quantum graph with $N$ edges which have the lengths $l_n=\frac{\pi}{n}$ and are equipped with Dirichlet boundary conditions. Since as previously mentioned the zeros $z_l$, $l=1,\ldots,\infty$, of $D\left(k^2\right)$ are exactly located at the wavenumbers $\pm \ui k_{n,m}$ and the orders of the zeros and the multiplicities of the wavenumbers coincide, we define, due to theorem \ref{60f}, the regularized function $\widetilde{F}^{D,R}_{\infty}(k)$ (see (\ref{60g}), $\lf_{p,n}=\frac{2\pi}{n}$)
\begin{equation}
\label{70a}
\widetilde{F}^{D,R}_{\infty}(k):=\lim_{N\rightarrow\infty}\frac{1}{2\pi}\left(\frac{N!\ue^{k\mathcal{L}_{N}}}{\left(2\pi k\right)^{N}}\left.\widetilde{F}^D_{\infty}\right|_N(\ui k)\right)=D\left(k^2\right), \quad k\in \kz, 
\end{equation}
where $\mathcal{L}_{N}:=\sum\limits_{n=1}^{N}l_n$ is the total length of the truncated quantum graph consisting of $N$ edges. Due to (\ref{70}) this can be regarded as a regularized limit of the secular equation (\ref{59}) for finite quantum graphs with Dirichlet boundary conditions.

\section{The determinant of $-\Delta+k^2$}
\label{t1}

As already mentioned, there is another general regularization method which uses the generalized zeta function $\Zl(s,k)$ which we have already introduced in eq. \eref{37}. This method was recently applied in \cite{Harrison:2009} to obtain the determinant of a finite quantum graph. Our starting point is the functional determinant of the elliptic operator $-\Delta+k^2$ which is defined as follows
\begin{equation}
\label{62}
\eq{
\Df\left(k^2\right)&:=\det\left(-\Delta+k^2\right):=\exp\left(\left.-\frac{\partial{\Zl}(s,k)}{\partial s}\right|_{s=0}\right), \quad k\in K,\\
& \hspace{3.8cm} K=\left\{k\in\kz; \ |\im k|<\sqrt{1+(\re k)^2}\right\}.
}
\end{equation}
This definition is in accordance with \cite{Singer:1973} and has first been applied to quantum field theory in \cite{Hawking:1977}. Note that $\left.-\frac{\partial{\Zl}(s,k)}{\partial s}\right|_{s=0}$ is well defined due to corollary \ref{61}. We will prove the following 
\begin{theorem}
\label{63}
The functional determinant $\eref{62}$ has an analytic continuation on $\kz$ as an entire function and agrees with the Weierstraß product $D\left(k^2\right)$ defined in $\eref{60b}$, i.e. it holds
\be
\label{64}
\Df\left(k^2\right)=\det\left(-\Delta+k^2\right)=\frac{1}{2\pi}\prod\limits_{n=1}^{\infty}\left(1+\frac{k^2}{n^2}\right)^{d(n)}
\ee
with $D(0)=\Df(0)=\det(-\Delta)=\frac{1}{2\pi}$.
\end{theorem}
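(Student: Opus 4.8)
The plan is to compute the zeta-regularized determinant directly from the representation of $\Zl(s,k)$ in terms of the Hurwitz-type building blocks and to match the result with the Weierstraß product $D(k^2)$ via the generalized gamma function $\widetilde\Gamma$. First I would use the factorization of the eigenvalues: since $-\Delta+k^2$ has eigenvalues $n^2m^2+k^2=(nm-\ui k)(nm+\ui k)$ with multiplicity $d(nm)$ organized through $\Zl(s,k)=\sum_{n}d(n)(n^2+k^2)^{-s}$, one can write, at least formally, $\Zl(s,k)$ as a combination involving $\Zh(\,\cdot\,,\ui k)$ and $\Zh(\,\cdot\,,-\ui k)$. More precisely, I would exploit the identity $(n^2+k^2)^{-s}=\big((n+\ui k)(n-\ui k)\big)^{-s}$ together with the principal-branch conventions fixed in \eref{31} and \eref{37}, valid for $k\in K$ where the arguments stay away from the cuts, to obtain a relation of the schematic form $\Zl(s,k)=\tfrac12\big(\widetilde{\zeta}$-type pieces$\big)$ whose $s$-derivative at $s=0$ is controlled.

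The cleanest route, however, avoids splitting the logarithm and instead differentiates the Weierstraß product directly. By Corollary \ref{60aa} we already have $D(k^2)=\frac{1}{2\pi}\,\widetilde\Gamma(\ui k)^{-1}\widetilde\Gamma(-\ui k)^{-1}$, and by Theorem \ref{60f} the product over periodic orbits. So it suffices to show $\Df(k^2)=D(k^2)$. For this I would show $\partial_k\ln\Df(k^2)=\partial_k\ln D(k^2)$ for $k\in(0,\infty)$ and then fix the constant at $k=0$. The right-hand side is $\partial_k\ln D(k^2)=\sum_n d(n)\frac{2k}{n^2+k^2}=2k\,T(k)$ by \eref{38}. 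For the left-hand side I would differentiate \eref{62} under the sign of $\partial_s|_{s=0}$: $\partial_k\ln\Df(k^2)=-\partial_s\big|_{s=0}\partial_k\Zl(s,k)=-\partial_s\big|_{s=0}\big(-2ks\,\Zl(s+1,k)\big)=2k\,\Zl(1,k)=2k\,T(k)$, where the interchange of $\partial_k$ and $\partial_s|_{s=0}$ is justified by the joint analyticity established in Corollary \ref{61} (Hartogs), and the meromorphic continuation of $\Zl(s,k)$ from Theorem \ref{41} guarantees that $s\mapsto s\,\Zl(s+1,k)$ is regular at $s=0$ with value $0$, so the only surviving term comes from differentiating the explicit prefactor $-2ks$. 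This yields $\partial_k\ln\Df(k^2)=\partial_k\ln D(k^2)$ on $(0,\infty)$, hence on $\kz$ by analytic continuation once both sides are known to be entire (for $D$ this is Corollary \ref{60c}; for $\Df$ it will follow from the matching).

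It remains to evaluate the constant, i.e.\ to prove $\Df(0)=\det(-\Delta)=\frac{1}{2\pi}$. Here $\Df(0)=\exp(-\partial_s\Zl(s,0)|_{s=0})=\exp(-\partial_s\zeta^2(2s)|_{s=0})$ by \eref{39}; using $\zeta(0)=-\tfrac12$ and $\zeta'(0)=-\tfrac12\ln(2\pi)$ one computes $\partial_s\zeta^2(2s)|_{s=0}=2\cdot 2\zeta(0)\zeta'(0)=4\cdot(-\tfrac12)\cdot(-\tfrac12\ln 2\pi)=\ln(2\pi)$, hence $\Df(0)=\ue^{-\ln(2\pi)}=\frac{1}{2\pi}=D(0)$, matching \eref{60b}. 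Combining the two, $\Df(k^2)$ and $D(k^2)$ agree at $k=0$ and have the same logarithmic derivative, so they coincide as entire functions, which is \eref{64}; entirety of $\Df(k^2)$ is then inherited from that of $D(k^2)$ via Corollary \ref{60c}.

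The main obstacle I anticipate is the rigorous justification of differentiating the $\zeta$-regularized expression in $k$ under the $\partial_s|_{s=0}$ operation and, relatedly, controlling the branch of the logarithm: $\partial_k(n^2+k^2)^{-s}=-2ks(n^2+k^2)^{-s-1}$ is elementary, but one must ensure uniformity so that term-by-term differentiation of the Mellin–Barnes/heat-kernel representation \eref{40} is legitimate and that the continuation in $s$ commutes with $\partial_k$ throughout the strip; this is exactly where Corollary \ref{61}, Theorem \ref{41} and the rapid decay of the Mellin kernel (Theorem \ref{32a}) do the work. The constant computation via $\zeta'(0)$ is routine, and the identification with the periodic-orbit product is then immediate from Theorem \ref{60f}.
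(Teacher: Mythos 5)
Your proposal is correct and follows essentially the same route as the paper: it matches the logarithmic derivatives via $\partial_k\ln\Df\left(k^2\right)=2kT(k)=\partial_k\ln D\left(k^2\right)$ (the paper computes the mixed derivative of $\Zl(s,k)$ and sets $s=0$, which is the same computation packaged slightly differently), fixes the constant at $k=0$ using $\zeta(0)=-\frac{1}{2}$, $\zeta'(0)=-\frac{1}{2}\ln(2\pi)$, and concludes by analytic continuation. The only cosmetic difference is that you work on $(0,\infty)$ and interchange $\partial_k$ with $\partial_s|_{s=0}$ via Corollary \ref{61}, while the paper works on $|k|<1$ and continues the differentiated series in $s$; both are sound.
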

\begin{proof}
In order to show $\Df\left(k^2\right)=D\left(k^2\right)$ our strategy is to prove this in a neighbourhood $U_\delta(0)$, $\delta>0$, at $k=0$ and then by analytical continuation to conclude (\ref{64}) on $\kz$. Therefore, we calculate (see (\ref{37}) [principal branch of the logarithm]) 
\begin{equation}
\label{65}
\frac{\partial{\Zl}(s,k)}{\partial s}=-\sum\limits_{n=1}^{\infty}d(n)\frac{\ln\left(k^2+n^2\right)}{\left(k^2+n^2\right)^{s}} 
, \quad \re s>\frac{1}{2},\quad k\in K.
\end{equation}
Furthermore, we get
\begin{equation}
\label{66}
\eqalign{
\frac{\partial^2{\Zl}(s,k)}{\partial k\partial s} = & -2k\sum\limits_{n=1}^{\infty}\frac{d(n)}{\left(k^2+n^2\right)^{s+1}} \\
                                                     & +2sk\sum\limits_{n=1}^{\infty}d(n)\frac{\ln\left(k^2+n^2\right)}{\left(k^2+n^2\right)^{s+1}}
, \quad \re s>-\frac{1}{2}, \quad k\in K,
}
\end{equation}
where in the last line we have used the uniqueness of the analytic continuation. Setting $s=0$ (see (\ref{66}) and (\ref{32aab})), we obtain 
\begin{equation}
\label{67}
-\frac{\partial\phantom{k}}{\partial k}\left[\left.\frac{\partial{\Zl}(s,k)}{\partial s}\right|_{s=0}\right]=2k\sum\limits_{n=1}^{\infty}\frac{d(n)}{k^2+n^2}=2kT(k),\quad |k|<1.
\end{equation}
On the other hand we have (see (\ref{60b}) and (\ref{32aab})) 
\begin{equation}
\label{68}
\hspace{-2cm}\dk\ln D\left(k^2\right) = 2k\sum\limits_{n=1}^{\infty}\frac{d(n)}{k^2+n^2}=2kT(k)=-\frac{\partial\phantom{k}}{\partial k}\left[\left.\frac{\partial{\Zl}(s,k)}{\partial s}\right|_{s=0}\right],\quad |k|<1.
\end{equation}
By (\ref{39}) and \cite[p.\,807]{Abramowitz:1992} ($\zeta'(0)=-\frac{1}{2}\ln(2\pi)$, $\zeta(0)=-\frac{1}{2}$) we get 
\begin{equation}
\label{69}
\eqalign{
\exp\left(-\left.\frac{\partial{\Zl}(s,0)}{\partial s}\right|_{s=0}\right) & =\exp\left(-4\zeta'(0)\zeta(0)\right)=\frac{1}{2\pi} \\
 & =D(0)=\Df(0)=\det\left(-\Delta\right).
 }
\end{equation}
\end{proof}
%
%
%
%
%
%
%
%
%
\section{The trace of the resolvent of $-\Delta$}
\label{32aaa}
First of all, we want to express the trace of the resolvent 
of $-\Delta$ in terms of the primitive periodic orbits of the quantum graph $\mathfrak{G}$ (see section \ref{z15}). Explicitly, the trace of the resolvent is given by (see also \eref{38})
\begin{equation}
\label{32aab}
\hspace{-2.2cm} T(k):=\tr\left(-\Delta+k^2\right)^{-1}=\sum\limits_{l=1}^{\infty}\frac{\dz(l)}{l^2+k^2}=\sum\limits_{n=1}^{\infty}\sum\limits_{m=1}^{\infty}\frac{1}{(nm)^2+k^2}, \quad k\neq\pm\ui l, \ l\in\nz.
\end{equation}
Using the identity \cite[p.\,471]{Magnus:1966} 
\begin{equation}
\label{32aac}
\sum\limits_{m=0}^{\infty}\frac{1}{m^2x^2+y^2}=\frac{1}{2y^2}+\frac{\pi}{2xy}\coth\left(\frac{\pi y}{x}\right)
\end{equation}
and replacing $n$ by $\frac{2\pi}{\lf_{p,n}}$ ($\lf_{p,n}=\frac{2\pi}{n}$ is the length of the $n^{th}$ primitive classical periodic orbit of the graph $\mathfrak{G}$), we obtain from (\ref{32aab}) the following ``periodic orbit series'' of the resolvent
\begin{equation}
\label{32aad}
T(k)=\sum\limits_{n=1}^{\infty}\left[-\frac{1}{2k^2}+\frac{\lf_{p,n}}{4k}\coth\left(\frac{k\lf_{p,n}}{2}\right)\right], \quad k\neq\pm\ui l, \ l\in\nz.
\end{equation}
Using the Laurent expansion of $\coth x$ at $x=0$
\begin{equation}
\label{32aadaaa}
\coth x=\frac{1}{x}+\sum\limits_{n=1}^{\infty}\frac{2^{2n}B_{2n}}{(2n)!}x^{2n-1}, \quad 0<|x|<\pi,
\end{equation}
we get from (\ref{32aad}) for $T(k)$ (the summations can be interchanged due to the absolute convergence of the series) the power series
\begin{equation}
\label{32aae}
\eqalign{
T(k) & =\frac{1}{2k^2}\sum\limits_{n=1}^{\infty}\sum\limits_{m=1}^{\infty}\frac{2^{2m}B_{2m}}{(2m)!}\frac{(\pi k)^{2m}}{n^{2m}} \\
     & =\frac{1}{2k^2}\sum\limits_{m=1}^{\infty}\left[\frac{2^{2m}B_{2m}}{(2
     m)!}(\pi k)^{2m}\left(\sum\limits_{n=1}^{\infty}\frac{1}{n^{2m}}\right)\right] \\
     & =\frac{1}{2k^2}\sum\limits_{m=1}^{\infty}\frac{2^{2m}B_{2m}\zeta(2m)}{(2m)!}(\pi k)^{2m}\\
     & =\sum\limits_{m=0}^{\infty}(-1)^m\zeta^2(2m+2)k^{2m}, \quad 0\leq|k|<1,
}
\end{equation}
where in the last line we used the relation \cite[p.\,19]{Magnus:1966}
\be
\label{2000}
B_{2m}=\frac{2(2m)!(-1)^{m+1}\zeta(2m)}{(2\pi)^{2m}},\quad m\in\nz_0.
\ee
Note that \cite[p.\,19]{Magnus:1966}
\be
\label{300}
T(0)=\tr\left(-\Delta\right)^{-1}=\Zh(2,0)=\sum\limits_{n=1}^{\infty}\frac{\dz(n)}{n^2}=\zeta^2(2)=\left(\frac{\pi^2}{6}\right)^2.
\ee
We get the following theorem:
\begin{theorem}
\label{z300}
The trace of the resolvent $T(k)$ possesses for $0\leq|k|<1$ the absolutely convergent series expansion $\eref{32aae}$.
\end{theorem}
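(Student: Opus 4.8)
The statement is essentially the displayed derivation culminating in \eref{32aae}, so the task is to supply the missing convergence and rearrangement justifications. The plan is as follows. Starting from the double-series representation \eref{32aab} of $T(k)$ — whose validity already requires absolute convergence of $\sum_{n,m}\bigl((nm)^2+k^2\bigr)^{-1}$, which holds for $|k|<1$ since $\bigl|(nm)^2+k^2\bigr|\geq (nm)^2-|k|^2\geq\frac34(nm)^2$ for $nm\geq2$ (and the single term $nm=1$ is bounded), giving domination by $\frac43\zeta^2(2)+|1+k^2|^{-1}<\infty$ — I would fix $n\in\nz$ and apply the summation identity \eref{32aac} with $x=n$, $y=k$ to the inner sum over $m$. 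Identifying $\lf_{p,n}=\frac{2\pi}{n}$, $\frac{\pi k}{n}=\frac{k\lf_{p,n}}{2}$ and $\frac{\pi}{2nk}=\frac{\lf_{p,n}}{4k}$, this yields the ``periodic orbit series'' \eref{32aad}. For $0<|k|<1$ one has $\bigl|\frac{\pi k}{n}\bigr|<\pi$ for every $n\geq1$, so the Laurent expansion \eref{32aadaaa} of $\coth$ is applicable in each summand; inserting it, the $\frac1x$-term of $\coth$ contributes exactly $\frac{1}{2k^2}$ and cancels the $-\frac{1}{2k^2}$ in \eref{32aad}, leaving a double series in $n$ and the expansion index $j$.

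The next step is to interchange the two summations. This is legitimate because the double series converges absolutely: by \eref{2000} one has $\bigl|\frac{2^{2j}B_{2j}}{(2j)!}\bigr|=\frac{2\zeta(2j)}{\pi^{2j}}$, so the modulus of the general term is $\frac{2\zeta(2j)\,|k|^{2j}}{n^{2j}}$, and summing first over $n$ (yielding $\zeta(2j)$) and then over $j$ gives $\frac{2}{|k|^{2}}\sum_{j\geq1}\zeta^2(2j)|k|^{2j}$, which is finite for $|k|<1$ since $\zeta(2j)\to1$. After the interchange the inner sum over $n$ equals $\zeta(2j)$, producing the third line of \eref{32aae}; a second application of \eref{2000} converts $\frac{2^{2j}B_{2j}\zeta(2j)}{(2j)!}(\pi k)^{2j}$ into $2(-1)^{j+1}\zeta^2(2j)k^{2j}$, and relabelling $m=j-1$ gives the claimed series $\sum_{m\geq0}(-1)^m\zeta^2(2m+2)k^{2m}$.

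Finally, absolute convergence of this power series for $|k|<1$ is immediate, since $\zeta^2(2m+2)\leq\zeta^2(2)$ for all $m\geq0$, so the series is dominated by $\zeta^2(2)\sum_{m\geq0}|k|^{2m}$; equivalently the root test gives radius of convergence exactly $1$, consistent with the fact that $T(k)=\sum_l d(l)\bigl(l^2+k^2\bigr)^{-1}$ is meromorphic with nearest poles at $k=\pm\ui$, hence analytic precisely on $|k|<1$. The point $k=0$ is not reached by the $\coth$-computation, but there the power series collapses to its constant term $\zeta^2(2)$, which equals $T(0)$ by \eref{300}; since both sides are continuous at $0$ and agree on $0<|k|<1$, the identity extends. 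The only genuinely delicate point is the double use of absolute rearrangement — first of the original double sum over $(n,m)$, then of the sum over $(n,j)$ after the $\coth$-expansion — but in both instances the needed bounds follow from the elementary estimates indicated above, so no real obstacle remains.
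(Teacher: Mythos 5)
Your proposal is correct and follows essentially the same route as the paper: the paper's proof is precisely the chain \eref{32aab} $\to$ \eref{32aac} $\to$ \eref{32aad} $\to$ insertion of the $\coth$ Laurent expansion \eref{32aadaaa} $\to$ interchange of summations $\to$ \eref{2000}, with the interchange justified by a one-line appeal to absolute convergence. You merely fill in the convergence and rearrangement estimates (and the trivial $k=0$ check against \eref{300}) that the paper leaves implicit, so there is nothing substantively different to compare.
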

We want to derive an alternative expression for $\Tho(t)$, defined in \eref{27}, involving the trace of the resolvent $T(k)$. For this we use the relation \cite[p.\,345]{Magnus:1966}
\be
\label{301}
\int\limits_{0}^{\infty}\frac{x\cos(ax)}{x^2+y^2}\ud x=\frac{1}{2}\left[\ue^{ay}\Ee(ay)-\ue^{-ay}\Es(ay)\right],\quad a,y>0,
\ee
and insert it in the expression \eref{27}. 
%
%
We then obtain
\be
\label{302}
\Tho(t)=\frac{4}{t}\sum\limits_{n=1}^{\infty}d(n)\int\limits_{0}^{\infty}\frac{k\cos\left(\frac{4\pi^2k}{t}\right)}{k^2+n^2}\ud k\quad t\in\rz_{>0}.
\ee
Note that due to the asymptotics \eref{b6} we cannot interchange the summation with the improper Riemann integral.

For the asymptotics of $T(k)$ for $k\rightarrow\infty$ we evaluate \eref{47} at $s=1$ and obtain the trace of the resolvent \eref{32aab} 
\be
\label{r1a}
\eq{
T(k)={\Zl}(1,k)=\frac{1}{2k}\frac{\Df'\left(k^2\right)}{\Df\left(k^2\right)}&={\Zl^W}(1,k)+{\Zl^\mathrm{po}}(1,k)\\
&=:T^W(k)+T^{\po}(k),\quad k>0,
}
\ee
with $\Df\left(k^2\right)$ defined in \eref{62} (see theorem \ref{63}). From \eref{52} we obtain for the ``Weyl term''
\be
\label{r1}
\eq{
T^W(k) &=\frac{1}{2k}\left[\pi\ln k +2\pi\gamma+\frac{1}{2k}\right]\\
       &=\frac{1}{2k}\dk\ln\left[\exp\left(\pi k\ln k+(2\gamma-1)\pi k+\frac{1}{2}\ln k\right)\right],\quad k>0,
}
\ee
where the logarithm is defined by its principal branch. According to \eref{49} $T^{\po}$ is given by (with the primitive periodic orbit lengths $\lfp=\frac{2\pi}{n}$ [see section \ref{9}])
\be
\label{r2}
\hspace{-2cm}\eq{
T^{\po}(k) &=2\pi\sum\limits_{n=1}^{\infty}d(n)\int\limits_{0}^{\infty}\left[\frac{2}{\pi}K_0\left(4\pi\sqrt{nk'}\right)-Y_0\left(4\pi\sqrt{nk'}\right)\right]\frac{\ud k'}{k'^2+k^2}\\
&=4\pi\sum\limits_{n=1}^{\infty}d(n)\int\limits_{0}^{\infty}\left[\frac{2}{\pi}K_0\left(4\pi\sqrt{n}k'\right)-Y_0\left(4\pi\sqrt{n}k'\right)\right]\frac{k'\ud k'}{k'^4+k^2}\\
&=\frac{4\pi}{k}\sum\limits_{n=1}^{\infty}d(n)\ker\left(4\pi\sqrt{nk}\right)=\frac{4\pi}{k}\sum\limits_{n=1}^{\infty} d(n)\re K_0\left(4\pi\sqrt{\ui nk}\right)\\
&=\frac{4\pi}{k}\sum\limits_{m=1}^{\infty}\sum\limits_{n=1}^{\infty}\re K_0\left(2(2\pi)^{\frac{3}{2}}\sqrt{\ui \frac{mk}{\lfp}}\right),\quad k>0,
}
\ee
where the integral has been expressed in terms of the Kelvin function $\ker(z)$  (see \cite[p.\,373]{Prudnikov:1986}). Formula \eref{r2} expressed in terms of the $K_0$ Bessel function has also been given in \cite{Oberhettinger:1972}. The last step in \eref{r2} is justified due to the asymptotics \eref{rr3}. Thus, the double series in \eref{r2} is absolutely and locally uniformly convergent. Due to \eref{rr3} we can estimate the penultimate line in \eref{r2} as  ($k\rightarrow\infty$)
\be
\label{rar2}
\hspace{-1.5cm}\eq{
T^{\po}(k)&=\frac{4\pi}{k}\sum\limits_{n=1}^{\infty} d(n)\re K_0\left(4\pi\sqrt{\ui nk}\right)\\
&=\sqrt{2}\pi k^{-\frac{5}{4}}\sum\limits_{n=1}^{\infty} \frac{d(n)}{n^{\frac{1}{4}}}\re\left[\ui^{-\frac{1}{4}}\exp\left(-4\pi\sqrt{\ui nk}\right)\left(1+\Or\left(\frac{1}{\sqrt{nk}}\right)\right)\right],
} 
\ee
and we notice that the leading asymptotic part of \eref{rar2} for $k\rightarrow\infty$ is given by the very first summand in the series on the r.h.s in \eref{rar2}. For this first summand we obtain 
%
\be
\label{rr4}
\hspace{-1.5cm}\eq{
& \sqrt{2}\pi k^{-\frac{5}{4}}\re\left[\ui^{-\frac{1}{4}}\exp\left(-4\pi\sqrt{\ui k}\right)\right]\\
&\hspace{3cm}=\sqrt{2}\pi k^{-\frac{5}{4}}\exp\left(-2\pi\sqrt{2k}\right)\cos\left(2\pi\sqrt{2k}+\frac{\pi}{8}\right),\quad k>0,
}
\ee
and the remaining series in \eref{rar2} without the summand \eref{rr4}, which converges absolutely, tends faster to zero for $k\rightarrow\infty$ as the leading summand \eref{rr4}. Combining the above result \eref{r1a} with \eref{r1}, we deduce the following theorem: 
\begin{theorem}
\label{ra1}
The trace of the resolvent $T(k)$, defined in $\eref{32aab}$, possesses for $k\rightarrow\infty$ the asymptotics
\be
\label{b6}
\fl\eq{
T(k) &=\frac{\pi\ln k}{2k}+\frac{\pi\gamma}{k}+\frac{1}{4k^2}\\
&\hspace{2cm}+\sqrt{2}\pi k^{-\frac{5}{4}}\exp\left(-2\pi\sqrt{2k}\right)\cos\left(2\pi\sqrt{2k}+\frac{\pi}{8}\right)\left(1+\Or\left(\frac{1}{\sqrt{k}}\right)\right).
}%
\ee%
\end{theorem}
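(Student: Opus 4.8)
The plan is to read the asymptotics off directly from the splitting $\eref{r1a}$, $T(k)=T^W(k)+T^{\po}(k)$, which was obtained by inserting $f_k(1,p)=(p^2+k^2)^{-1}$ into the trace formula, theorem \ref{12}. The Weyl part is already available in closed form: by $\eref{r1}$ (itself the value of $\eref{52}$ at $s=1$) one has, \emph{exactly},
\[
T^W(k)=\frac{\pi\ln k}{2k}+\frac{\pi\gamma}{k}+\frac{1}{4k^2},
\]
so this piece would reproduce verbatim the first three summands of $\eref{b6}$ with no error at all. Everything therefore reduces to showing that the ``periodic orbit'' piece, in the Kelvin/Bessel form $\eref{r2}$,
\[
T^{\po}(k)=\frac{4\pi}{k}\sum_{n=1}^{\infty}d(n)\,\re K_0\!\left(4\pi\sqrt{\ui nk}\right),
\]
is, as $k\to\infty$, governed by its first summand, whose explicit value is the right-hand side of $\eref{rr4}$.

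First I would substitute the large-argument expansion $\eref{rr3}$ of $K_0$ into each term of $T^{\po}(k)$, as already recorded in $\eref{rar2}$: choosing the branch $\sqrt{\ui}=\ue^{\ui\pi/4}$ one has $\re\sqrt{\ui nk}=\sqrt{nk/2}>0$, so that $\bigl|\ue^{-4\pi\sqrt{\ui nk}}\bigr|=\ue^{-2\pi\sqrt{2nk}}$ and
\[
\frac{4\pi}{k}\,d(n)\,\re K_0\!\left(4\pi\sqrt{\ui nk}\right)=\sqrt{2}\,\pi\,k^{-5/4}\,\frac{d(n)}{n^{1/4}}\,\re\!\left[\ui^{-1/4}\ue^{-4\pi\sqrt{\ui nk}}\right]\left(1+\Or\!\left(\tfrac{1}{\sqrt{nk}}\right)\right).
\]
For $n=1$, using $\ui^{-1/4}=\ue^{-\ui\pi/8}$ and $-4\pi\sqrt{\ui k}=-2\pi\sqrt{2k}\,(1+\ui)$, the real part evaluates to $\ue^{-2\pi\sqrt{2k}}\cos\!\bigl(2\pi\sqrt{2k}+\tfrac{\pi}{8}\bigr)$, which is exactly the right-hand side of $\eref{rr4}$; the attached $\Or(1/\sqrt k)$ correction then contributes only $\Or\!\bigl(k^{-7/4}\ue^{-2\pi\sqrt{2k}}\bigr)$, i.e.\ a relative error of order $k^{-1/2}$.

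The remaining, and the only genuinely delicate, step is to bound the tail $\sum_{n\ge2}$ and show it is likewise absorbed into the relative error $\Or(1/\sqrt k)$. Here I would majorize each term by $C\,k^{-5/4}\,d(n)\,n^{-1/4}\,\ue^{-2\pi\sqrt{2nk}}$ and factor $\ue^{-2\pi\sqrt{2nk}}=\ue^{-2\pi\sqrt{2k}}\,\ue^{-2\pi\sqrt{2k}(\sqrt n-1)}$, which bounds the tail by $C\,k^{-5/4}\ue^{-2\pi\sqrt{2k}}\sum_{n\ge2}d(n)\,n^{-1/4}\ue^{-2\pi\sqrt{2k}(\sqrt n-1)}$. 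Since $\sqrt n-1\ge\sqrt2-1>0$ for all $n\ge2$ and $d(n)=\Or(n^{\epsilon})$ by $\eref{9a}$, the last series is for large $k$ dominated by its $n=2$ term and is in fact $\Or\!\bigl(\ue^{-2\pi(\sqrt2-1)\sqrt{2k}}\bigr)=\orr(k^{-N})$ for every $N$, hence $\Or(k^{-1/2})$. This would give $T^{\po}(k)=[\text{r.h.s.\ of }\eref{rr4}]\,+\,\Or\!\bigl(k^{-7/4}\ue^{-2\pi\sqrt{2k}}\bigr)$, and adding the exact $T^W(k)$ above would yield $\eref{b6}$ (the factor $1+\Or(1/\sqrt k)$ multiplying the bounded cosine there being shorthand for ``$+\,\Or\bigl(k^{-7/4}\ue^{-2\pi\sqrt{2k}}\bigr)$''). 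The main obstacle is thus the uniform-in-$n$ control of the Bessel remainder together with the $d(n)$-weighted exponential tail; beyond $\eref{rr3}$ and $\eref{9a}$ no new analytic input is required.
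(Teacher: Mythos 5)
Your proposal is correct and follows essentially the same route as the paper: split $T(k)=T^W(k)+T^{\po}(k)$ via the trace formula, take the exact Weyl term \eref{r1}, insert the $K_0$ asymptotics \eref{rr3} into \eref{r2}, and identify the $n=1$ summand \eref{rr4} as the leading contribution with the rest exponentially subdominant. Your explicit factorization $\ue^{-2\pi\sqrt{2nk}}=\ue^{-2\pi\sqrt{2k}}\ue^{-2\pi\sqrt{2k}(\sqrt{n}-1)}$ merely spells out the tail bound that the paper asserts without detail, so no substantive difference remains.
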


%
%
%
%

%
%
%
%
%
%
%
%
%
%
%
\section{A Selberg-like zeta function satisfying the analogue of the Riemann hypothesis}
\label{ar1}
In this section we construct a spectral function in terms of the ``periodic orbit term'' \eref{r2}, which resembles in its main properties those of the famous Selberg zeta function \cite{Steiner:1987} and of the semiclassical dynamical zeta function well-known in the theory of quantum chaos, see e.g. \cite{Steiner:1991a}. Our starting point is equation \eref{r1a}. Owing to \eref{r1a}, \eref{r1} and \eref{b6} we can define the ``periodic orbit zeta function'' 
\be
\label{r8}
Z(s):=\exp\left(-2\int\limits_{s}^{\infty}kT^{\po}(k)\ud k\right), \quad s>0.
\ee
Because of the existence of the improper Riemann integral in \eref{r8}, the relation 
\be
\label{r6}
\lim_{s\rightarrow\infty}Z(s)=1
\ee
is trivial. By \eref{r1a} and \eref{r1} we then obtain the relation 
\be
\label{r9}
\hspace{-1cm}Z(s)=K\exp\left(-\pi s\ln s-(2\gamma-1)\pi s-\frac{1}{2}\ln s\right)\Df\left(s^2\right),\quad s>0,
\ee
where $K$ is a positive constant. 
In order to determine the constant $K$ in \eref{r9}, we need the following proposition (the asymptotics of $\Df\left(s^2\right)$ for $s\rightarrow\infty$ is refined in section \ref{500aaa}):
\begin{prop}
\label{zz1}
The logarithm of the functional determinant $\eref{62}$ (see theorem $\ref{63}$) possesses the asymptotics
\be
\label{509a}
\eq{
\ln \Df\left(k^2\right) &=\ln\det\left(-\Delta+k^2\right)\\
&=\pi k\ln k+\pi(2\gamma-1)k+\frac{1}{2}\ln k +\orr(1)\quad \mbox{for}\quad  k\rightarrow\infty.
}
\ee
\end{prop}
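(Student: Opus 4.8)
The plan is to exploit the explicit product formula for the functional determinant. By Theorem~\ref{63} (equivalently, Theorem~\ref{60f}) one has $\Df\left(k^2\right)=\frac{1}{2\pi}\prod_{n=1}^{\infty}\left(1+\frac{k^2}{n^2}\right)^{d(n)}$, hence, taking logarithms and using $d(n)=\#\{(a,b)\in\nz^2:ab=n\}$,
\[
\ln\Df\left(k^2\right)=-\ln(2\pi)+\sum_{n=1}^{\infty}d(n)\ln\!\left(1+\frac{k^2}{n^2}\right).
\]
Writing $x:=\frac1k$, the series on the right is the harmonic sum $G(x):=\sum_{n=1}^{\infty}d(n)\,g(nx)$ with base function $g(y):=\ln(1+y^{-2})$, frequencies $\mu_n=n$ and amplitudes $\lambda_n=d(n)$, so that $\Lambda(s)=\sum_{n=1}^{\infty}d(n)n^{-s}=\zeta^2(s)$ for $\re s>1$. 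An integration by parts together with the classical integral $\int_0^{\infty}\frac{y^{s-1}}{1+y^2}\ud y=\frac{\pi}{2\sin(\pi s/2)}$ gives
\[
\hat g(s)=\frac{\pi}{s\sin(\pi s/2)},\qquad\text{with fundamental strip }0<\re s<2 .
\]
On vertical lines $\hat g$ decays exponentially while $\zeta^2$ grows only polynomially, so the hypotheses of Theorem~\ref{213} are satisfied, and — this being the standard, right half-plane situation for $\Lambda$ — no modification of that theorem is needed here.

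Next I would run the Mellin--Barnes argument underlying Theorem~\ref{213}: starting from $G(x)=\frac{1}{2\pi\ui}\int_{(c)}\hat g(s)\zeta^2(s)x^{-s}\ud s$ with $1<c<2$ (legitimate since this $c$ lies in the intersection of the fundamental strip of $\hat g$ with the half-plane of absolute convergence of $\zeta^2$), I shift the contour to $\re s=-1$, where $\hat g(s)\zeta^2(s)$ is holomorphic and exponentially decaying in $|\im s|$, so that the shifted integral is $\Orr\left(x\right)=\Orr\left(\frac1k\right)$. Only two poles are crossed, at $s=1$ and $s=0$. At $s=1$ one combines the double pole $\zeta^2(s)=(s-1)^{-2}+2\gamma(s-1)^{-1}+\Orr(1)$ with $\hat g(1)=\pi$, $\hat g'(1)=-\pi$ and $x^{-s}=x^{-1}\bigl(1-(s-1)\ln x+\cdots\bigr)$, obtaining the residue $\pi k\ln k+\pi(2\gamma-1)k$. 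At $s=0$ one combines the double pole $\hat g(s)=2s^{-2}+\frac{\pi^2}{12}+\cdots$ with the Taylor expansion $\zeta^2(s)=\frac14+\frac12\ln(2\pi)\,s+\cdots$ (using $\zeta(0)=-\frac12$, $\zeta'(0)=-\frac12\ln(2\pi)$) and $x^{-s}=1-s\ln x+\cdots$; reading off the $s^{-1}$-coefficient gives the residue $\ln(2\pi)-\frac12\ln x=\ln(2\pi)+\frac12\ln k$.

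Collecting these contributions yields $G(1/k)=\pi k\ln k+\pi(2\gamma-1)k+\frac12\ln k+\ln(2\pi)+\Orr\left(\frac1k\right)$, whence
\[
\ln\Df\left(k^2\right)=-\ln(2\pi)+G(1/k)=\pi k\ln k+\pi(2\gamma-1)k+\frac12\ln k+\Orr\left(\frac1k\right),
\]
because the $\ln(2\pi)$ produced by the residue at $s=0$ cancels exactly the $-\ln(2\pi)$ coming from the prefactor $\frac{1}{2\pi}$ of $D\left(k^2\right)$; since $\Orr\left(\frac1k\right)=\orr(1)$ this is precisely \eref{509a}. (The cancellation is consistent with, and in fact forced by, the identity $\Df(0)=\exp(-4\zeta'(0)\zeta(0))=\frac{1}{2\pi}$ established in the proof of Theorem~\ref{63}.) The only real difficulty is technical bookkeeping: computing $\hat g$ and its fundamental strip correctly, evaluating the two double-pole residues (the double pole of $\hat g$ at $s=0$ against $x^{-s}$, and the double pole of $\zeta^2$ at $s=1$ against the analytic factor $\hat g\,x^{-s}$), and the routine verification that the contour may be started at $c\in(1,2)$ and shifted past $s=1$ — legitimate precisely because the fundamental strip of $\hat g$ extends to the left of the abscissa $\re s=1$.
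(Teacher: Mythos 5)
Your proof is correct, but it takes a genuinely different route from the paper's. The paper proves proposition \ref{zz1} by Sarnak's heat-kernel method: it writes $\ln\Df\left(k^2\right)=-\left.\partial_s\Zl(s,k)\right|_{s=0}$, uses the Mellin--Laplace representation \eref{40}, splits the $t$-integral with a subtraction of the small-$t$ asymptotics $a\frac{\ln t}{\sqrt t}+\frac{b}{\sqrt t}+c$ of $\T(t)$, shows the two regularized integrals are negligible as $k\rightarrow\infty$, and evaluates the explicit remaining piece in terms of $\Gamma$ and $\psi$ at $s=\frac{1}{2}$ --- which is why the paper only records an $\orr(1)$ remainder. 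You instead start from the product formula of theorem \ref{63}, write $\sum_n d(n)\ln\bigl(1+\frac{k^2}{n^2}\bigr)$ as a harmonic sum with base function $g(y)=\ln\left(1+y^{-2}\right)$, and run the Mellin--Barnes argument with $\hat g(s)=\frac{\pi}{s\sin(\pi s/2)}$ (fundamental strip $\left<0,2\right>$) against $\Lambda(s)=\zeta^2(s)$; your residue computations at the double poles $s=1$ and $s=0$ are correct, the cancellation of $\ln(2\pi)$ against the prefactor $\frac{1}{2\pi}$ is right, and the contour shift to $\re s=-1$ is legitimate thanks to the exponential decay of $\hat g$ on vertical lines. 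Two remarks on what each approach buys: your route depends on the explicit Weierstra\ss{} product (i.e.\ on knowing the spectrum and on theorem \ref{63}, which the statement itself presupposes, so no circularity --- and you rightly avoid \eref{r10} and theorem \ref{508}, which would be circular since the paper uses proposition \ref{zz1} to fix the constant $K$), whereas the paper's method needs only the small-$t$ heat-trace asymptotics and so generalizes to situations without an explicit eigenvalue product; in return your argument gives the stronger remainder $\Or\left(\frac{1}{k}\right)$ (and, since $\zeta^2$ has double zeros at the negative even integers cancelling the remaining poles of $\hat g$, in fact $\Or\left(k^{-N}\right)$ for every $N$), which is consistent with the exponentially small oscillatory remainder the paper only obtains later in theorem \ref{508}.
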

\begin{proof}
For the proof we use the method of Sarnak \cite{Sarnak:1987}. 
We have (see \eref{62})
\be
\label{500a}
\ln\Df\left(k^2\right)=\ln\det\left(-\Delta+k^2\right)=\left.-\frac{\partial{\Zl}(s,k)}{\partial s}\right|_{s=0},
\ee
and thus we have to investigate the derivative of ${\Zl}(s,k)$ at $s=0$ (see \eref{62} [due to corollary \ref{61}, ${\Zl}(s,k)$ is holomorphic at $s=0$ for $k\in K$]). We use the representation \eref{40} for ${\Zl}(s,k)$ and split it in three terms giving a proper ``regularization'' of the integrals at $t=0$ (see theorem \ref{36a} and theorem \ref{60}),
\be
\label{500}
\fl\eq{
{\Zl}(s,k) &=\frac{1}{\Gamma(s)}\int\limits_{0}^{\infty}t^{s-1}\ue^{-k^2t}\T(t)\ud t\\
           &=\frac{1}{\Gamma(s)}\int\limits_{0}^{1}\left[\T(t)-a\frac{\ln t}{t^{\alpha}}-\frac{b}{t^{\beta}}-c\right]t^{s-1}\ue^{-k^2t}\ud t\\
           &+\frac{1}{\Gamma(s)}\int\limits_{1}^{\infty}t^{s-1}\ue^{-k^2t}\T(t)\ud t\\
           &+\frac{1}{\Gamma(s)}\int\limits_{0}^{1}\left[a\frac{\ln t}{t^{\alpha}}+\frac{b}{t^{\beta}}+c\right]t^{s-1}\ue^{-k^2t}\ud t,\quad \re s>\frac{1}{2}, \quad \re k^2>-1,
}
\ee
with (see \eref{35} and \eref{35d})
\be
\label{501}
\hspace{-1cm}a=-\frac{\sqrt{\pi}}{4},\quad \alpha=\frac{1}{2},\quad b=\frac{\sqrt{\pi}}{2}\left(\frac{3}{2}\gamma-\ln 2\right),\quad \beta=\frac{1}{2},\quad\mbox{and}\quad c=\frac{1}{4}.
\ee
Due to the regularization, the first and the second integral on the r.h.s in \eref{500} are holomorphic functions at $s=0$ for $\re k^2>0$ (see \eref{36} and \cite[p.\,92]{Whittaker:1996}). Since 
\be
\label{502}
\left({\Gamma(s)}^{-1}\right)'_{s=0}=-\left(\frac{\psi(s)}{\Gamma(s)}\right)_{s=0}=1\quad \mbox{and}\mbox\quad\left({\Gamma(s)}^{-1}\right)_{s=0}=0,
\ee
we get for the derivative with respect to $s$ of the first and the second integral on the r.h.s. in \eref{500} evaluated at $s=0$ the expression
\be
\label{503}
\int\limits_{0}^{1}\frac{1}{t}\left[\T(t)-a\frac{\ln t}{t^{\alpha}}-\frac{b}{t^{\beta}}-c\right]\ue^{-k^2t}\ud t+ \int\limits_{1}^{\infty}\frac{1}{t}\ue^{-k^2t}\T(t)\ud t.
\ee
Since the integrals in \eref{503} converge uniformly with respect to $k$ and the integrals converge locally uniformly to zero on $(0,1]$ respectively on $[1,\infty)$ for $k\rightarrow\infty$, we can neglect the first and the second integral in \eref{500} for the asymptotics of $\ln \Df\left(k^2\right)$ for $k\rightarrow\infty$. For the third integral in \eref{500} we get by the substitution $t=\frac{t'}{k^2}$ for $\re s>\frac{1}{2}$
\be
\label{504}
\eq{
& \frac{1}{\Gamma(s)}\int\limits_{0}^{1}\left[a\frac{\ln t}{t^{\alpha}}+\frac{b}{t^{\beta}}+c\right]t^{s-1}\ue^{-k^2t}\ud t \\
= & \frac{ak^{2(\alpha-s)}}{\Gamma(s)}\int\limits_{0}^{\infty}t'^{s-\alpha-1}\ln t' \ue^{-t'}\ud t'-\frac{2a k^{2(\alpha-s)}\ln k}{\Gamma(s)}\int\limits_{0}^{\infty}t'^{s-\alpha-1}\ue^{-t'}\ud t'\\
 & +\frac{bk^{2(\beta-s)}}{\Gamma(s)}\int\limits_{0}^{\infty}t'^{s-\beta-1}\ue^{-t'}\ud t'+\frac{ck^{-2s}}{\Gamma(s)}\int\limits_{0}^{\infty} t'^{s-1}\ue^{-t'}\ud t'\\
 & -\frac{1}{\Gamma(s)}\int\limits_{1}^{\infty}\left[a\frac{\ln t}{t^{\alpha}}+\frac{b}{t^{\beta}}+c\right]t^{s-1}\ue^{-k^2t}\ud t.
}
\ee
By a similar argumentation as previously applied to the corresponding integrals in \eref{500}, we can neglect the last integral in \eref{504} for the desired asymptotic evaluation of $\ln \Df\left(k^2\right)$. By the well-known integral representation of the gamma function  
\be
\label{505}
\left(\frac{\ud\phantom{s}}{\ud s}\right)^n\Gamma(s)=\int\limits_{0}^{\infty}t^{s-1}\ue^{-t}(\ln t)^n\ud t,\quad \re s>0,
\ee
we obtain for the first four integrals on the r.h.s. in \eref{504} the expression 
\be
\label{506}
\eq{
 & ak^{2(\alpha-s)}\frac{\Gamma'(s-\alpha)}{\Gamma(s)}-2a k^{2(\alpha-s)}\ln k\frac{\Gamma(s-\alpha)}{\Gamma(s)}+bk^{2(\beta-s)}\frac{\Gamma(s-\beta)}{\Gamma(s)}\\
 & +ck^{-2s},\quad s-\alpha,s-\beta\notin-\nz_0.
}
\ee
After differentiation of \eref{506} with respect to $s$ and a subsequent evaluation at $s=0$, we get by using \cite[p.\,2,14]{Magnus:1966}
\be
\label{507}
\Gamma\left(-\frac{1}{2}\right) =-2\sqrt{\pi},\quad\Gamma'\left(-\frac{1}{2}\right)=2\sqrt{\pi}(\gamma+2\ln 2-2)
\ee
and \eref{501} and \eref{502} the proposition \ref{zz1}.
\end{proof}
Owing to the asymptotics \eref{509a} we infer from \eref{r6} and \eref{r9} (identifying $s=k$) the identity $K=1$ and obtain by corollary \ref{60c} and theorem \ref{63} the following theorem:
\begin{theorem}
\label{r9a}
$Z(s)$, defined in $\eref{r8}$, is an analytic function on $\kz\setminus\rz_{\leq0}$ and is given by (\,$\ln s$ and $\sqrt{s}$ defined by the principal branch of the logarithm)
\be
\label{r10}
Z(s)=\frac{1}{\sqrt{s}}\exp\left(-\pi s\ln s-(2\gamma-1)\pi s\right)\Df\left(s^2\right),\quad s\in\kz\setminus\rz_{\leq0}.
\ee
Furthermore, the zeros of $Z(s)$ are exactly located on the ``critical line'', i.e. at the imaginary wavenumbers $\pm\ui k_n=\pm\ui n$, $n\in\nz$, and the orders of these zeros are given by the multiplicities $d\hspace{-0.07cm}\left(k_n\right)=d(n)$ of the associated wavenumbers $k_n=n$, $n\in\nz$. Thus, the zeta function $Z(s)$ satisfies the analogue of the Riemann hypothesis.  Additionally, the principal branch of the logarithm of $Z(s)$ possesses on the half plane $\re s>0$ the representation $\eref{r7}$, derived below.
\end{theorem}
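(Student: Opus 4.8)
The plan is to establish the theorem's four assertions in order, each of which should follow rather directly from the apparatus already assembled in the paper. \emph{Analyticity of $Z(s)$ on $\kz\setminus\rz_{\leq0}$.} I would start from the defining integral representation \eref{r8}, $Z(s)=\exp\bigl(-2\int_s^\infty kT^{\po}(k)\,\ud k\bigr)$. The integral converges as an improper Riemann integral because, by \eref{b6} (theorem \ref{ra1}), $kT^{\po}(k)$ decays like $k^{-1/4}\ue^{-2\pi\sqrt{2k}}$ at infinity, so the exponent is a well-defined analytic function of the endpoint $s$ wherever $T^{\po}$ is analytic. Alternatively, and more cleanly, I would simply invoke the closed-form identity \eref{r9}, $Z(s)=K\exp\bigl(-\pi s\ln s-(2\gamma-1)\pi s-\tfrac12\ln s\bigr)\Df(s^2)$: since $\Df(k^2)=D(k^2)$ is entire (corollary \ref{60c}, theorem \ref{63}) and the prefactor $\exp(-\pi s\ln s)\cdot s^{-1/2}$ is analytic precisely on the cut plane $\kz\setminus\rz_{\leq0}$ (principal branch of $\ln s$), analyticity of $Z$ on $\kz\setminus\rz_{\leq0}$ is immediate. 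The only subtlety is that \eref{r9} was derived for $s>0$; I would note that both sides are analytic on $\kz\setminus\rz_{\leq0}$ and agree on the positive real axis, hence agree throughout by the identity theorem.

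\emph{Determination of the constant $K=1$.} This is the step that actually requires proposition \ref{zz1} and is, I expect, the main (though modest) obstacle, since everything else is bookkeeping. The idea: take logarithms in \eref{r9}, getting $\ln Z(s)=\ln K-\pi s\ln s-(2\gamma-1)\pi s-\tfrac12\ln s+\ln\Df(s^2)$ for $s>0$. Substitute the large-$s$ asymptotics of $\ln\Df(s^2)$ from \eref{509a}, namely $\pi s\ln s+\pi(2\gamma-1)s+\tfrac12\ln s+\orr(1)$. The $s\ln s$, $s$, and $\ln s$ terms cancel exactly, leaving $\ln Z(s)=\ln K+\orr(1)$ as $s\to\infty$. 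But by \eref{r6}, $Z(s)\to1$, i.e. $\ln Z(s)\to0$; therefore $\ln K=0$, so $K=1$. Substituting $K=1$ into \eref{r9} yields \eref{r10}.

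\emph{Location and order of the zeros / analogue of the Riemann hypothesis.} From \eref{r10}, $Z(s)=s^{-1/2}\exp\bigl(-\pi s\ln s-(2\gamma-1)\pi s\bigr)\Df(s^2)$ on the cut plane, and the exponential prefactor is nowhere zero there, while $s^{-1/2}$ has no zeros. Hence the zeros of $Z$ on $\kz\setminus\rz_{\leq0}$ coincide, with multiplicity, with the zeros of $\Df(s^2)=D(s^2)$. By corollary \ref{60c} the zeros of $D(k^2)$ are exactly at $k=\pm\ui mn$, $m,n\in\nz$, with the zero at $\pm\ui l$ of order $d(l)$. Thus the zeros of $Z(s)$ lie exactly on the imaginary axis $\ui\rz$ at the points $\pm\ui k_n=\pm\ui n$, $n\in\nz$, with order $d(k_n)=d(n)$; since the imaginary axis is the ``critical line'' $\re s=0$, this is precisely the stated analogue of the Riemann hypothesis. (One should remark that the points $\pm\ui n$ lie on the boundary of the cut plane's complement only at $0$, which is not a zero since $Z(s)\sim s^{-1/2}$ there; all the zeros $\pm\ui n$, $n\geq1$, lie in the domain of analyticity.)

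\emph{The representation \eref{r7} of $\ln Z(s)$ on $\re s>0$.} This is deferred to a derivation ``below'' in the paper, so for the purposes of the present theorem I would only need to point forward: on the right half-plane $\re s>0$ one has $kT^{\po}(k)$ analytic and integrable at infinity, so $\ln Z(s)=-2\int_s^\infty kT^{\po}(k)\,\ud k$ is a well-defined analytic branch there with $\ln Z(s)\to0$ as $\re s\to\infty$; inserting the periodic-orbit series \eref{r2} for $T^{\po}$ and integrating term by term (justified by the absolute and locally uniform convergence noted after \eref{r2}) produces the periodic-orbit sum displayed in \eref{r7}. I would present this last clause as a forward reference rather than reprove it here, since the theorem statement itself only asserts that such a representation exists.
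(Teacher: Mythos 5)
Your proposal is correct and follows essentially the same route as the paper: the closed form \eref{r9} combined with corollary \ref{60c} and theorem \ref{63} gives analyticity on the cut plane and the location/orders of the zeros, while $K=1$ is fixed exactly as in the paper by playing the asymptotics \eref{509a} of proposition \ref{zz1} against the normalization \eref{r6}, and the representation \eref{r7} is, as in the paper, deferred to the term-by-term integration of the periodic-orbit series \eref{r2}. No gaps.
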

For an arbitrary $s$ with $\im s>0$ there holds for the principal branch of the logarithm $\ln(-s)=\ln s-\ui \pi$. Thus, we get for $Z(-s)$ with $\im s>0$ 
\be
\label{nr2}
\hspace{-1cm}Z(-s)=\exp\left(\pi\left(s\ln s +(2\gamma-1)s-\ui\pi s+\frac{1}{2}\ui\right)-\frac{1}{2}\ln s\right)\Df\left(s^2\right),\\
\ee
and thus theorem \ref{p1o}
\begin{theorem}
\label{p1o}
\phantom{u}
\begin{enumerate}
 \item[i)] The zeta function $Z(s)$ satisfies the functional equation 
 \be
 \label{p20}
 Z(-s)=\exp\left(\phi(s)\right)Z(s),\quad \im s>0,
 \ee 
 with
 \be
 \label{p30}
 \phi(s):=2\pi\left( s\ln s+(2\gamma-1)s-\ui\frac{\pi}{2}s+\ui\frac{1}{4}\right).
 \ee
 \item[ii)] The limit $\im s\rightarrow0$, $s\rightarrow k\in\rz_{>0}$ of $\eref{p20}$ exists and thus we infer that $\eref{p20}$ holds for $s\in\rz_{>0}$, too. Evaluating $\eref{p20}$ and $\eref{p30}$ on the critical line, we obtain with $\ln(\ui k)=\ln k+\ui\frac{\pi}{2}$, $k>0$, the following version of the functional equation
\be
\label{nr4}
Z(-\ui k)=\ue^{2\pi\ui N^W(k)}Z(\ui k),\quad k>0,
\ee
with $N^W(k)$ the Weyl term of the counting function $N(k)$, defined in $\eref{q12}$.
\end{enumerate}
\end{theorem}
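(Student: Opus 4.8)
The plan is to derive part (i) directly from the representation \eref{r10} of $Z(s)$ together with the identity $\Df(s^2)=\Df((-s)^2)$, which is immediate since $\Df\left(k^2\right)$ depends only on $k^2$ (see theorem \ref{63}). First I would take \eref{r10} and replace $s$ by $-s$ for $\im s>0$. The only delicate point is the branch of the logarithm: for $\im s>0$ one has $-s$ in the lower half plane (more precisely $\arg(-s)=\arg s-\pi$ when $\arg s\in(0,\pi)$), so the principal logarithm satisfies $\ln(-s)=\ln s-\ui\pi$ and likewise $\sqrt{-s}=\ue^{\frac12\ln(-s)}=\ue^{-\ui\pi/2}\sqrt{s}=-\ui\sqrt{s}$. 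Substituting these into \eref{r10} yields exactly \eref{nr2}. Dividing \eref{nr2} by \eref{r10} and cancelling the common factor $\Df(s^2)$ and the factor $\sqrt{s}$ (picking up the $\ue^{\ui\pi/2}$ from $1/\sqrt{-s}=\ui/\sqrt s$), the terms $-\pi s\ln s$ and $-(2\gamma-1)\pi s$ double up, and one is left with $Z(-s)/Z(s)=\exp\bigl(2\pi s\ln s+2\pi(2\gamma-1)s-\ui\pi^2 s+\ui\pi/2\bigr)$, which is precisely $\exp(\phi(s))$ with $\phi$ as in \eref{p30}. This establishes (i).

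For part (ii), the first claim is that both sides of \eref{p20} extend continuously (indeed analytically) to the positive real axis. By theorem \ref{r9a}, $Z(s)$ is analytic on $\kz\setminus\rz_{\leq0}$, so $Z(s)$ and $Z(-s)$ are both analytic in a neighbourhood of any $k\in\rz_{>0}$ approached from the upper half plane; $\phi(s)$ is manifestly analytic there as well (the branch cut of $s\ln s$ is on $\rz_{\leq0}$). Hence \eref{p20}, being an identity between analytic functions on the open upper half plane that extend continuously to $\rz_{>0}$, persists on $\rz_{>0}$ by the identity theorem / continuity. The second claim is the evaluation on the critical line. Here I would set $s=\ui k$ with $k>0$, so that $\ln(\ui k)=\ln k+\ui\pi/2$ by the principal branch, and compute $\phi(\ui k)$: the term $2\pi\,\ui k\ln(\ui k)=2\pi\ui k(\ln k+\ui\pi/2)=2\pi\ui k\ln k-\pi^2 k$; the term $2\pi(2\gamma-1)\ui k$; the term $-\ui\pi^2\cdot\ui k=\pi^2 k$ cancels the $-\pi^2 k$ above; and the constant $2\pi\cdot\ui/4=\ui\pi/2$. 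Collecting, $\phi(\ui k)=2\pi\ui\bigl(k\ln k+(2\gamma-1)k+\tfrac14\bigr)=2\pi\ui N^W(k)$ by the definition \eref{q12} of $N^W(k)$. Since $-s=-\ui k$, \eref{p20} becomes $Z(-\ui k)=\ue^{2\pi\ui N^W(k)}Z(\ui k)$, which is \eref{nr4}.

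The only genuine subtlety in the whole argument is bookkeeping of branch choices: one must be consistent about using the principal branch of $\ln$ throughout (as declared for $Z(s)$ in theorem \ref{r9a}), verify that $-s$ lies off the cut $\rz_{\leq0}$ whenever $\im s>0$, and track the phase $\ue^{\ui\pi/2}$ that arises from $1/\sqrt{-s}$. I expect this to be the main obstacle only in the sense of requiring care; once the relation $\ln(-s)=\ln s-\ui\pi$ for $\im s>0$ is fixed, every other step is a mechanical substitution into \eref{r10} and a power-series-free cancellation. No deep input beyond theorems \ref{63} and \ref{r9a} (both already proved in the excerpt) is needed.
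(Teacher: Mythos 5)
Your proposal is correct and follows essentially the paper's own route: substitute $-s$ into \eref{r10} using $\ln(-s)=\ln s-\ui\pi$ and $\sqrt{-s}=-\ui\sqrt{s}$ for $\im s>0$ to obtain \eref{nr2}, compare with \eref{r10} to read off $\phi(s)$ in \eref{p30}, and evaluate at $s=\ui k$ with $\ln(\ui k)=\ln k+\ui\frac{\pi}{2}$ to recover \eref{nr4} with $N^W(k)$ from \eref{q12}. The only cosmetic caveats are that $s\mapsto Z(-s)$ is not literally analytic at $k\in\rz_{>0}$ (the point $-k$ lies on the cut $\rz_{\leq0}$, so one argues via the boundary limit from the lower half plane, which is exactly what the paper's limit statement in part (ii) asserts), and that ``dividing'' by $Z(s)$ should be read as recombining the exponents in \eref{nr2} and \eref{r10}, so that the zeros of $Z$ at $s=\ui n$ cause no difficulty.
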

%
%
%
%
%
%
%
%
%
%
%
%
%
From $\eref{nr4}$ and the Schwarz reflection principle we obtain the relation 
\be
\label{r17}
\hspace{-2cm}\ue^{-\ui\pi N^W(k)}Z(-\ui k)=\overline{\ue^{\ui\pi N^W(k)}Z(\ui k)}=\ue^{\ui\pi N^W(k)}Z(\ui k), \quad k>0.
\ee
Thus, we deduce the following corollary:
\begin{cor}
\label{r17a}
On the critical line $\ui \rz$ the Hardy-like function 
\be
\label{r18}
Z_\Delta(k):=\ue^{\ui\pi N^W(k)}Z(\ui k)
\ee
with $N^W(k)$ defined in $\eref{q12}$ takes real values for all $k>0$.
\end{cor}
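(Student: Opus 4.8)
The plan is to deduce the corollary directly from relation \eref{r17}, but I would first spell out the one nontrivial ingredient behind \eref{r17}, namely the applicability of the Schwarz reflection principle to $Z(s)$. The whole point is that $Z_\Delta(k)=\ue^{\ui\pi N^W(k)}Z(\ui k)$ coincides with its own complex conjugate for every $k>0$.

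First I would note that $Z(s)$ is real, in fact strictly positive, for $s\in\rz_{>0}$. This is immediate from the defining formula \eref{r8}: by the Kelvin-function representation in \eref{r2} one has $T^{\po}(k)=\frac{4\pi}{k}\sum_{n=1}^{\infty}d(n)\ker(4\pi\sqrt{nk})$ with $\ker$ real-valued on $\rz_{>0}$, so $k\,T^{\po}(k)$ is real for $k>0$, the improper integral $\int_{s}^{\infty}k\,T^{\po}(k)\,\ud k$ is real, and hence $Z(s)=\exp(-2\int_{s}^{\infty}k\,T^{\po}(k)\,\ud k)>0$. Since by theorem \ref{r9a} $Z$ is holomorphic on $\kz\setminus\rz_{\leq0}$ and takes real values on the ray $\rz_{>0}$, the Schwarz reflection principle gives $\overline{Z(\bar s)}=Z(s)$ on all of $\kz\setminus\rz_{\leq0}$. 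Taking $s=\ui k$ with $k>0$ — and noting that $\ui k$ and $-\ui k$ both lie strictly off the cut $\rz_{\leq0}$, so there is no branch ambiguity — yields $\overline{Z(\ui k)}=Z(-\ui k)$.

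Next I would combine this with the functional equation \eref{nr4}. Since $N^W(k)\in\rz$ for $k>0$, we have $\overline{\ue^{\ui\pi N^W(k)}}=\ue^{-\ui\pi N^W(k)}$, so that $\overline{Z_\Delta(k)}=\ue^{-\ui\pi N^W(k)}\,\overline{Z(\ui k)}=\ue^{-\ui\pi N^W(k)}Z(-\ui k)$. Substituting $Z(-\ui k)=\ue^{2\pi\ui N^W(k)}Z(\ui k)$ from \eref{nr4} gives $\overline{Z_\Delta(k)}=\ue^{\ui\pi N^W(k)}Z(\ui k)=Z_\Delta(k)$, i.e. $Z_\Delta(k)\in\rz$ for all $k>0$. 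This is precisely the content of the chain of equalities \eref{r17}, read from its outer two members, and it establishes the corollary.

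I do not anticipate a genuine obstacle, since every ingredient — holomorphy of $Z$ off $\rz_{\leq0}$ (theorem \ref{r9a}), its reality on $\rz_{>0}$, and the functional equation \eref{nr4} — is already available. The only point deserving a moment's care is the consistency of the branch of the logarithm used in \eref{r10}, \eref{nr4} and \eref{r17}; but since $\ui k$ with $k>0$ lies in the open upper half-plane, far from the cut along $\rz_{\leq0}$, the principal branch is unambiguous there and all these identities hold verbatim on the critical line.
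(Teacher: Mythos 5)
Your argument is correct and is essentially the paper's own proof: the chain \eref{r17} is obtained exactly as you do, from the Schwarz reflection principle applied to $Z$ (holomorphic on $\kz\setminus\rz_{\leq0}$ and real on $\rz_{>0}$) combined with the functional equation \eref{nr4}. The only difference is that you make explicit the reality of $Z$ on $\rz_{>0}$ via \eref{r8} and \eref{r2}, which the paper leaves implicit.
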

\hspace{-0.8cm}Note, that these properties are reminiscent of the Hardy $Z$-function in the theory of the Riemann zeta function defined by ($t\in\rz$)
\be
\label{e4}
Z_H(t):=\ue^{\ui\theta(t)}\zeta\left(\frac{1}{2}+\ui t\right),\quad \theta(t):=\arg\left(\Gamma\left(\frac{2\ui t+1}{4}\right)\right)-\frac{\log\pi}{2}t,
\ee
where $\zeta(s)$ is the Riemann zeta function and $\frac{\theta(t)}{\pi}$ describes the ``Weyl term'' of the counting function for the nontrivial Riemann zeros. However, in this case the critical line is given by $\re s=\frac{1}{2}$.

From \eref{64} we obtain, $k\in\rz_{>0}\setminus\nz$,
\be
\label{r19a}
\hspace{-1cm}\eq{\frac{1}{\pi}\lim_{\epsilon\rightarrow0}\im\ln \Df\left(-(k-\ui\epsilon)^2\right)&=\frac{1}{\pi}\lim_{\epsilon\rightarrow0}\im\sum\limits_{n\leq k}d(n)\ln\left(1-\frac{(k-\ui\epsilon)^2}{n^2}\right)\\
&\phantom{=}+\frac{1}{\pi}\lim_{\epsilon\rightarrow0}\im\sum\limits_{n>k}d(n)\ln\left(1-\frac{(k-\ui\epsilon)^2}{n^2}\right)\\
&=\sum\limits_{n\leq k}d(n)=N(k).
}
\ee
For $k=n\in\nz$ we have to investigate
\be
\label{r19aa}
\eq{
f(n-\ui\epsilon)&:=\left(1-\frac{(n-\ui\epsilon)^2}{n^2}\right)=\left(1-\frac{n-\ui\epsilon}{n}\right)\left(1+\frac{n-\ui\epsilon}{n}\right)\\
&=\frac{\ui\epsilon}{n}\left(2-\frac{\ui\epsilon}{n}\right),\quad \epsilon>0.
}%
\ee
Thus, we obtain 
\be
\label{r20}
\lim_{\epsilon\rightarrow 0^+}\arg f(n-\ui\epsilon)=\frac{\pi}{2}+\lim_{\epsilon\rightarrow 0^+}\arg(2-\frac{\ui\epsilon}{n})=\frac{\pi}{2}.
\ee
From \eref{r19a} and \eref{r20} we get
\be
\label{r21}
\lim_{\epsilon\rightarrow 0}\frac{1}{\pi}\im\ln\Df\left(-(k-\ui\epsilon)^2\right)=N(k)-\frac{1}{2}d(k),\quad k\in\rz_{>0},
\ee
with $d(k)$ defined in \eref{q11d}. Note that $\ln Z(s)$ with the logarithm defined by the requirement that $\ln Z(s)$ is real for $s>0$, which coincides with the principal branch of the logarithm on $s>0$, can be analytically continued to the set 
\be
\label{rar7}
\mathfrak{S}:=\kz\setminus\left\{z=n\ui+x;\quad n\in\gz, \quad x\in\rz_{\leq0}\right\}.
\ee
Combining \eref{r21} with theorem \ref{r9a} and by the above results we get the following theorem:
\begin{theorem}
\label{r21a}
For the spectral counting function $N(k)$, defined in $\eref{10}$, the following relations hold
\be
\label{r22}
\eq{
N(k)&=\lim_{\epsilon\rightarrow 0}\frac{1}{\pi}\im\ln\Df\left(-(k-\ui\epsilon)^2\right)+\frac{d(k)}{2}\\
    &=N^W(k)+\lim_{\epsilon\rightarrow0^+}\frac{1}{\pi}\im\ln Z(\ui k(1-\ui\epsilon))+\frac{d(k)}{2},\quad k\in\rz_{>0},
}
\ee
where $Z(k)$ is defined in $\eref{r8}$ 
and $d(k)$ is defined in $\eref{q11d}$.
\end{theorem}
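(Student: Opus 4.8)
The plan is to deduce both equalities from two facts already in place: the identity \eref{r21}, which gives $\lim_{\epsilon\rightarrow0}\frac{1}{\pi}\im\ln\Df\bigl(-(k-\ui\epsilon)^2\bigr)=N(k)-\frac{1}{2}d(k)$, and the explicit factorisation \eref{r10} of Theorem \ref{r9a}, which relates $\Df(s^2)$ to $Z(s)$. The first equality in the statement is then nothing but \eref{r21} rearranged, so the real content is the second equality, which amounts to comparing the phases (imaginary parts of the logarithms) of the two sides of \eref{r10} along a path approaching the critical line.

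For the second equality I would rewrite \eref{r10} in the form $\Df(s^2)=\sqrt{s}\,\exp\bigl(\pi s\ln s+(2\gamma-1)\pi s\bigr)Z(s)$ on $\kz\setminus\rz_{\leq0}$ and pass to logarithms. For $s\in\rz_{>0}$ every factor is real and positive --- the positivity of $\Df(s^2)$ from the product representation \eref{64}, that of $Z(s)$ directly from the definition \eref{r8} --- so the additive relation
\[
\ln\Df(s^2)=\frac{1}{2}\ln s+\pi s\ln s+(2\gamma-1)\pi s+\ln Z(s)
\]
holds on $\rz_{>0}$ with no $2\pi\ui\gz$ ambiguity, $\ln Z$ being the branch (as in Theorem \ref{r9a}) that is real on $\rz_{>0}$ and $\ln s$, $\ln\sqrt{s}$ the principal ones. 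By analytic continuation this identity propagates along paths in the connected set $\mathfrak{S}$ of \eref{rar7} on which all these branches are defined; in particular it holds at $s=\ui k(1-\ui\epsilon)=k\epsilon+\ui k$ for fixed $k>0$ and $\epsilon>0$, where $s^2=-(k(1-\ui\epsilon))^2$. Taking imaginary parts and dividing by $\pi$ yields $\frac{1}{\pi}\im\ln\Df\bigl(-(k(1-\ui\epsilon))^2\bigr)=\frac{1}{2\pi}\arg s+\im(s\ln s)+(2\gamma-1)\im s+\frac{1}{\pi}\im\ln Z(s)$.

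Letting $\epsilon\rightarrow0^+$ finishes the argument: then $s\rightarrow\ui k$, so $\arg s\rightarrow\frac{\pi}{2}$ and $\ln s\rightarrow\ln k+\ui\frac{\pi}{2}$, giving $\im(s\ln s)\rightarrow k\ln k$ and $\im s\rightarrow k$, so the three ``Weyl'' terms tend to $k\ln k+(2\gamma-1)k+\frac{1}{4}=N^W(k)$ by \eref{q12}. Since the left-hand side converges to $N(k)-\frac{1}{2}d(k)$ by \eref{r21} (after the harmless substitution $\epsilon\mapsto k\epsilon$, permissible because $k>0$ is fixed) and the Weyl terms converge, the limit $\lim_{\epsilon\rightarrow0^+}\frac{1}{\pi}\im\ln Z\bigl(\ui k(1-\ui\epsilon)\bigr)$ exists and equals $N(k)-\frac{1}{2}d(k)-N^W(k)$; rearranging gives the claimed formula. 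The one genuinely delicate point --- and the only step requiring care --- is the bookkeeping of the logarithmic branches: one has to make sure the ``$\ln\Df$'' appearing in the statement is exactly the continuation from $\rz_{>0}$ through $\mathfrak{S}$, and that term by term it matches the branches of $\ln\sqrt{s}$, $\ln s$ and $\ln Z$ entering \eref{r10}. Once the additive logarithmic relation is pinned down on $\rz_{>0}$, where there is no ambiguity, everything else is the elementary limit above, and no analytic input beyond Theorem \ref{r9a} and \eref{r21} is needed.
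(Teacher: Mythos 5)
Your proposal is correct and follows essentially the same route as the paper: the first equality is \eref{r21} rearranged, and the second comes from taking imaginary parts of the logarithm of the factorization \eref{r10} of Theorem \ref{r9a} along $s=\ui k(1-\ui\epsilon)$, where the phases of $\sqrt{s}$ and $\exp\left(\pi s\ln s+(2\gamma-1)\pi s\right)$ tend to $\pi N^W(k)$, with the branch bookkeeping fixed on $\rz_{>0}$ and propagated through $\mathfrak{S}$. Your explicit remarks on the reparametrization $\epsilon\mapsto k\epsilon$ and on deducing the existence of $\lim_{\epsilon\rightarrow0^+}\frac{1}{\pi}\im\ln Z(\ui k(1-\ui\epsilon))$ from the other two limits are consistent with (and slightly more careful than) the paper's own terse argument.
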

%
%
%
%
%
%
%
%
%
By the argumentation leading to theorem \ref{ra1} the series for $T^{po}(k)$  in \eref{r2} converges locally uniformly on $(\alpha,\infty)$, $\alpha\in\rz_{>0}$, and the improper Riemann integral in \eref{r8} converges uniformly with respect to the summation. Thus, integration and summation can be interchanged and we obtain for the principal branch of the logarithm, \cite[pp.\,379,380]{Abramowitz:1992} using the Schwarz reflection principle ($\lfp=\frac{2\pi}{n}$) 
\be
\label{r7}
\fl\eq{
\ln Z(s)&=-8\pi\sum\limits_{n=1}^{\infty}d(n)\int\limits_{s}^{\infty}\ker\left(4\pi\sqrt{nk}\right)\ud k\\
              &=-\frac{1}{\pi}\sum\limits_{n=1}^{\infty}\frac{d(n)}{n}\int\limits_{4\pi\sqrt{ns}}^{\infty}k'\ker k'\ud k'\\
              &=-2\sqrt{2}\sqrt{s}\sum\limits_{n=1}^{\infty}\frac{d(n)}{\sqrt{n}}\left[\ker_1\left(4\pi\sqrt{ns}\right)-\kei_1\left(4\pi\sqrt{ns}\right)\right]\\
              &=-4\sqrt{s}\sum\limits_{n=1}^{\infty}\frac{d(n)}{\sqrt{n}}\re\left[\ue^{-\ui\frac{\pi}{4}}K_1\left(4\pi\sqrt{ns}\ue^{\ui\frac{\pi}{4}}\right)\right]\\
              &=-\sqrt{\frac{2}{\pi}}\sqrt{s}\sum\limits_{n=1}^{\infty}d(n)\sqrt{\lfp}\left[\ue^{-\ui\frac{\pi}{4}}K_1\left(2(2\pi)^{\frac{3}{2}}\sqrt{\frac{s}{\lfp}}\ue^{\ui\frac{\pi}{4}}\right)\right.\\
              &\hspace{5.5cm}+\left.\ue^{\ui\frac{\pi}{4}}K_1\left(2(2\pi)^{\frac{3}{2}}\sqrt{\frac{s}{\lfp}}\ue^{-\ui\frac{\pi}{4}}\right)\right],\quad s>0,
}
\ee
where $\ker_1(z)$ and $\kei_1(z)$ denote the Kelvin functions. On account of \eref{rr3} and \eref{9a} the last series in \eref{r7} converges absolutely and locally uniformly on 
\be
\label{ar7}
\mathcal{S}:=\left\{s\in\kz,\quad \re s>0\right\}
\ee
and defines therefore an analytic function on $\mathcal{S}$.
%
%
%
%
Note that the penultimate line in \eref{r7} is not an analytic function on $\mathcal{S}$, it only coincides with $\ln Z(s)$ on $\rz_{>0}$. We evaluate the representation \eref{r7} for $\frac{1}{\pi}\ln Z(s)$ on the critical line $\ui\rz$. By the relation \cite[pp.\,66,67]{Magnus:1966} 
\be
\label{r24}
K_1\left(z\ue^{\frac{\ui\pi}{2}}\right)=-\frac{\pi}{2}\left(J_1(z)-\ui Y_1(z)\right),\quad z\in\kz,
\ee
we get
\be
\label{r25a}
\hspace{-2.5cm}\eq{
&\left.
\sqrt{s}\left[\ue^{-\frac{\ui\pi}{4}}K_1\left(4\pi\sqrt{ns}\ue^{\frac{\ui\pi}{4}}\right)+\ue^{\frac{\ui\pi}{4}}K_1\left(4\pi\sqrt{ns}\ue^{-\frac{\ui\pi}{4}}\right)\right]\right|_{s=\ui k}\\
& \hspace{2.0cm}=\frac{\sqrt{k}}{2}\left\{(1+\ui)\left[(1-\ui)K_1\left(4\pi\sqrt{nk}\ue^{\frac{\ui\pi}{2}}\right)+(1+\ui)K_1\left(4\pi\sqrt{nk}\right)\right]\right\}\\
&\hspace{2.0cm}=\sqrt{k}\left\{-\frac{\pi}{2}J_1\left(4\pi\sqrt{nk}\right)+\ui\left[ K_1\left(4\pi\sqrt{nk}\right)+\frac{\pi}{2}Y_1\left(4\pi\sqrt{nk}\right)\right]\right\},\quad k\in\rz_{>0}.
}
\ee
By inserting \eref{r25a} in \eref{r7} we formally get for the real and imaginary part of $\frac{1}{\pi}\ln Z(\ui k)$ on the critical line
\be
\label{r25b}
\fl\eq{
&\frac{1}{\pi}\re\ln Z(\ui k)=\frac{\sqrt{2}}{\pi^{\frac{3}{2}}}\sqrt{k}\sum\limits_{n=1}^{\infty}d(n)\sqrt{\lfp}J_1\left(2(2\pi)^{\frac{3}{2}}\sqrt{\frac{k}{\lfp}}\right),\\
&\frac{1}{\pi}\im\ln Z(\ui k)=\\
&\hspace{0.1cm}-\frac{\sqrt{2}}{\pi^{\frac{3}{2}}}\sqrt{k}\sum\limits_{n=1}^{\infty}d(n)\sqrt{\lfp}\left[K_1\left(2(2\pi)^{\frac{3}{2}}\sqrt{\frac{k}{\lfp}}\right)+\frac{\pi}{2}Y_1\left(2(2\pi)^{\frac{3}{2}}\sqrt{\frac{k}{\lfp}}\right)\right], \ k\in\rz_{>0}.
}
\ee
Due to the asymptotics \eref{rr3} and \cite[p.\,139]{Magnus:1966} ($|z|\rightarrow\infty$)
\be
\label{r25c}
\hspace{-2cm} J_\nu(z)=\left(\frac{1}{2}\pi z\right)^{-\frac{1}{2}}\cos\left(z-\frac{1}{2}\pi\nu-\frac{1}{4}\pi\right)\left(1+\Or\left(\frac{1}{z}\right)\right),\quad -\pi<\arg z<\pi,
\ee
and similarly as in section \ref{9} for $N^{\mathrm{Osc.}}(x)$ respectively $N_{\lead}^{\mathrm{Osc.}}(x)$, we can split up each series in \eref{r25b} in a normally convergent series on $k\in\rz\setminus\{0\}$ and a series which is given by (ignoring some constant factors which can be extracted from the summands) the real respectively the imaginary part of the Dirichlet series
\be
\label{r25d}
\sum\limits_{n=0}^{\infty}\frac{d(n)}{n^{\frac{3}{4}}}\ue^{-\ui4\pi\sqrt{ns}},
\ee    
with $s=k\in\rz\setminus{\gz}$. Hardy \cite{Hardy:1917} has proved that this series is locally uniformly convergent on $k\in\rz\setminus\gz$, thus, the two series in \eref{r25b} exist and define a continuous function on $\rz\setminus{\gz}$. In addition, the second series in \eref{r25b} converges even for all $k\in\rz\setminus\left\{0\right\}$ (see \eref{11d}). Note, that due to \eref{rr3} in the same manner as before for \eref{r25b} the series in the last line of the r.h.s. in \eref{r7} can be split up (ignoring again some constant factors which can be extracted from the summands) in a normally convergent series on 
\be
\label{g11}
\overline{\mathcal{S}}:=\left\{s\in\kz,\quad \re s\geq0\right\}
\ee
which is therefore holomorphic on $\mathcal{S}$ and locally uniformly continuous on $\overline{\mathcal{S}}$ and a Dirichlet series which is given by \eref{r25d} but now with $s\in\mathcal{S}$. For the next theorem we need a theorem which can be found in \cite[p.\,6]{Riesz:1915}:
\begin{theorem}[\cite{Riesz:1915}]
\label{sr1}
If a Dirichlet series $f$ is convergent for $s=s_0$, and has the sum $f\left(s_0\right)$, then $f(s)\rightarrow f\left(s_0\right)$ when $s\rightarrow s_0$ along any path which lies entirely within the region $\left|\arg\left(s-s_0\right)\right|\leq a<\frac{1}{2}\pi$ (the Dirichlet series is convergent for s$>s_0$).
\end{theorem}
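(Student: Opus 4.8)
This is the Dirichlet-series analogue of Abel's classical limit theorem for power series, and the plan is to imitate that argument. Write the Dirichlet series as $f(s)=\sum_{n=1}^{\infty}a_n\ue^{-\lambda_n s}$ with real exponents $\lambda_1<\lambda_2<\cdots\to+\infty$. First I would normalise: replacing $a_n$ by $a_n\ue^{-\lambda_n s_0}$ and $s$ by $s-s_0$ reduces everything to the case $s_0=0$. The hypothesis then says that $\sum_{n=1}^{\infty}a_n$ converges, say to $A:=f(0)$, and the claim becomes: $f(s)\to A$ as $s\to0$ within any Stolz sector $\mathcal{W}_a:=\{s:\re s>0,\ |\arg s|\le a\}$ with $a<\frac{\pi}{2}$.

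Next I would carry out an Abel (partial) summation. Put $S_N:=\sum_{n=1}^{N}a_n$, so that $(S_N)$ is bounded and $S_N\to A$; then
\be
\sum_{n=1}^{N}a_n\ue^{-\lambda_n s}=S_N\ue^{-\lambda_N s}+\sum_{n=1}^{N-1}S_n\left(\ue^{-\lambda_n s}-\ue^{-\lambda_{n+1}s}\right).
\ee
The one estimate that drives everything is
\be
\left|\ue^{-\lambda_n s}-\ue^{-\lambda_{n+1}s}\right|=\left|s\int_{\lambda_n}^{\lambda_{n+1}}\ue^{-us}\,\ud u\right|\le\frac{|s|}{\re s}\left(\ue^{-\lambda_n\re s}-\ue^{-\lambda_{n+1}\re s}\right),\qquad \re s>0,
\ee
combined with the telescoping identity $\sum_{n=1}^{\infty}\left(\ue^{-\lambda_n\re s}-\ue^{-\lambda_{n+1}\re s}\right)=\ue^{-\lambda_1\re s}\le 1$. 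Together these show that for $\re s>0$ the series $f(s)$ converges (the right-hand sum converges absolutely because $S_n$ is bounded), that $S_N\ue^{-\lambda_N s}\to0$, and hence, writing $R_n:=A-S_n$,
\be
f(s)=A\,\ue^{-\lambda_1 s}-\sum_{n=1}^{\infty}R_n\left(\ue^{-\lambda_n s}-\ue^{-\lambda_{n+1}s}\right),\qquad \re s>0.
\ee

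The geometric content enters now: inside $\mathcal{W}_a$ one has $\re s=|s|\cos(\arg s)\ge|s|\cos a$, hence $|s|/\re s\le 1/\cos a=:K_a$ is bounded there. Given $\varepsilon>0$, choose $M$ with $|R_n|<\varepsilon$ for $n\ge M$; splitting the tail and using the two estimates above,
\be
\left|\sum_{n=1}^{\infty}R_n\left(\ue^{-\lambda_n s}-\ue^{-\lambda_{n+1}s}\right)\right|\le\sum_{n=1}^{M-1}|R_n|\,\left|\ue^{-\lambda_n s}-\ue^{-\lambda_{n+1}s}\right|+\varepsilon K_a.
\ee
The finite head tends to $0$ as $s\to0$ because $\ue^{-\lambda_n s}\to1$ termwise, and $A\,\ue^{-\lambda_1 s}\to A$; therefore $\limsup_{s\to0,\ s\in\mathcal{W}_a}|f(s)-A|\le\varepsilon K_a$, and letting $\varepsilon\to0$ gives $f(s)\to A$, uniformly on $\mathcal{W}_a$. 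In particular the convergence holds along every path lying in $\{|\arg(s-s_0)|\le a\}$.

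The main obstacle is not any single inequality but the bookkeeping that legitimises the manipulations: one must first observe that convergence of the series at the single point $s_0$ already forces (conditional) convergence throughout the half-plane $\re s>\re s_0$ — this is exactly what the partial-summation identity together with the estimate above yields — so that the rearrangement into the telescoped form and the passage from partial sums to $f(s)$ are valid. The other delicate point is the \emph{uniformity} of all bounds over the whole sector, and it is precisely there that the hypothesis $a<\frac{\pi}{2}$ is indispensable: for $a=\frac{\pi}{2}$ the ratio $|s|/\re s$ is unbounded and the conclusion genuinely fails.
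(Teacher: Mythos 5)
Your proof is correct: the paper states this theorem without proof, quoting it from the cited reference \cite{Riesz:1915}, and your Abel-summation argument with the sector bound $|s|/\re s\le 1/\cos a$ is essentially the classical proof given there. The only cosmetic points are that the bound $\ue^{-\lambda_1\re s}\le 1$ tacitly assumes $\lambda_1\ge 0$ (harmless, since this factor is in any case bounded as $s\to s_0$), and the final claim of uniformity should be read as uniformity of the limit $s\to s_0$ over the sector, which is exactly what your estimate delivers.
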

Using theorem \ref{sr1}, the analyticity of $\ln Z(s)$ on $\mathfrak{S}$, the uniqueness of the analytic continuation, and the above results we infer the following theorem
\begin{theorem}
\label{r25aa}
\begin{enumerate}

 \item[i)]  $Z(s)$, defined in $\eref{r8}$, can be written as the following product over the classical primitive periodic orbits of the graph $\mathfrak{G}$
 \be
 \label{r25aa1}
\fl \eq{
& Z(s)=\prod\limits_{n=1}^{\infty}\exp\left(-\sqrt{\frac{2}{\pi}}\sqrt{s}\sum\limits_{m=1}^{\infty}\sqrt{\frac{\lfp}{m}}\left[\ue^{-\ui\frac{\pi}{4}}K_1\left(2(2\pi)^{\frac{3}{2}}\sqrt{\frac{ms}{\lfp}}\ue^{\ui\frac{\pi}{4}}\right)\right.\right.\\
&\hspace{5cm}\left.\left.+\ue^{\ui\frac{\pi}{4}}K_1\left(2(2\pi)^{\frac{3}{2}}\sqrt{\frac{ms}{\lfp}}\ue^{-\ui\frac{\pi}{4}}\right)\right]\right),\quad s\in\overline{\mathcal{S}}\setminus\left\{\ui\gz\right\}.
}
 \ee
\item[ii)] $\ln Z(s)$ is an analytic function on $\mathfrak{S}$ and satisfies
\be
\label{r25}
\fl\eq{
&\frac{1}{\pi}\ln Z(s)=\\
&\hspace{0.5cm}-\frac{\sqrt{2}}{(\pi)^{\frac{3}{2}}}\sqrt{s}\sum\limits_{n=1}^{\infty}d(n)\sqrt{\lfp}\left[\ue^{-\ui\frac{\pi}{4}}K_1\left(2(2\pi)^{\frac{3}{2}}\sqrt{\frac{s}{\lfp}}\ue^{\ui\frac{\pi}{4}}\right)\right.\\
&\hspace{5.0cm}+\left.
\ue^{\ui\frac{\pi}{4}}K_1\left(2(2\pi)^{\frac{3}{2}}\sqrt{\frac{s}{\lfp}}\ue^{-\ui\frac{\pi}{4}}\right)\right],\quad s\in\overline{\mathcal{S}}\setminus\left\{\ui\gz\right\}.
}%
\ee
\item[iii)] The real and the imaginary parts of $\frac{1}{\pi}\ln Z(s)$ on the critical line $s=\ui k$, $k\in\rz\setminus\gz$, are given by $\eref{r25b}$, in particular it holds, $k\in\rz_{>0}$,
\be
\label{r25aar}
\fl\eq{
N^{\mathrm{Osc.}}(k)&=\lim_{\epsilon\rightarrow0^+}\frac{1}{\pi}\im\ln Z(\ui k(1-\ui\epsilon))+\frac{d(k)}{2}\\
&=-\frac{\sqrt{2}}{\pi^{\frac{3}{2}}}\sqrt{k}\sum\limits_{n=1}^{\infty}d(n)\sqrt{\lfp}\left[K_1\left(2(2\pi)^{\frac{3}{2}}\sqrt{\frac{k}{\lfp}}\right)\right. \\
&\hspace{5.5cm} \left.+\frac{\pi}{2}Y_1\left(2(2\pi)^{\frac{3}{2}}\sqrt{\frac{k}{\lfp}}\right)\right]+\frac{d(k)}{2},
}
\ee
where $d(k)$ is defined in $\eref{q11d}$.
\end{enumerate}
\end{theorem}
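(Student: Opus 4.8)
The plan is to assemble the explicit Bessel--function representation \eref{r7} of $\ln Z(s)$ --- so far established only for $s>0$ --- with the analyticity of $Z$ coming from \eref{r10} and with Hardy's convergence theorem for the Dirichlet series \eref{r25d}. First, by \eref{r10}, corollary \ref{60c} and theorem \ref{63} the function $Z$ is analytic and non-vanishing on $\mathfrak{S}$, so the branch of $\ln Z$ that is real on $\rz_{>0}$ is analytic on $\mathfrak{S}$. Next, by the asymptotics \eref{rr3} of $K_\nu$ together with the divisor bound \eref{9a}, the series in the last line of \eref{r7} converges absolutely and locally uniformly on the open right half--plane $\mathcal{S}=\{\re s>0\}$ and hence defines an analytic function there; since it coincides with $\ln Z(s)$ on $\rz_{>0}$ and $\ln Z$ is analytic on $\mathfrak{S}\supset\mathcal{S}$, the identity theorem yields the representation \eref{r25} throughout $\mathcal{S}$.

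To extend \eref{r25} to the boundary line $\ui\rz\setminus\ui\gz$ --- the delicate point, and the core of parts (i) and (ii) --- I would split, by means of \eref{rr3}, the $K_1$--series of \eref{r7} into a series converging normally on $\overline{\mathcal{S}}$ (hence holomorphic on $\mathcal{S}$ and locally uniformly continuous on $\overline{\mathcal{S}}$) plus the Dirichlet series \eref{r25d}. By Hardy \cite{Hardy:1917} the latter converges locally uniformly off the integers, and by Riesz's Abel--type theorem \ref{sr1} it is there the radial boundary value of its continuation from $\mathcal{S}$; thus the right--hand side of \eref{r25} is continuous on $\overline{\mathcal{S}}\setminus\ui\gz$, and since $\overline{\mathcal{S}}\setminus\ui\gz\subset\mathfrak{S}$ (where $\ln Z$ is continuous) and the two sides agree on the dense set $\mathcal{S}$, they agree on $\overline{\mathcal{S}}\setminus\ui\gz$. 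This proves (ii). For (i), exponentiating \eref{r25} writes $Z(s)=\prod_n\exp\bigl((\,\cdots\,)_n\bigr)$; inserting $d(N)=\#\{(n,m)\in\nz\times\nz;\ nm=N\}$ and grouping the resulting double sum over $(n,m)$ according to the primitive periodic orbit $\lf_{p,n}=\frac{2\pi}{n}$ and its repetition number $m$ (so that $\sqrt{ns}$ gets replaced by $\sqrt{ms/\lf_{p,n}}\propto\sqrt{mns}$) produces the Euler--type product \eref{r25aa1}, the rearrangement being legitimate on $\mathcal{S}$ by the absolute convergence above and on $\ui\rz\setminus\ui\gz$ by the same normal--part/Hardy--part decomposition.

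For (iii), evaluating \eref{r25} at $s=\ui k$, inserting the Kelvin--function identity \eref{r24} and the reduction \eref{r25a}, and separating real and imaginary parts term by term --- which is justified, after the decomposition into the series converging normally on $\rz\setminus\{0\}$ and the Hardy--convergent series \eref{r25d} with real parameter, using the $J_\nu$--asymptotics \eref{r25c} --- yields \eref{r25b}. Theorem \ref{r21a} already supplies $N(k)=N^W(k)+\lim_{\epsilon\to0^+}\frac{1}{\pi}\im\ln Z(\ui k(1-\ui\epsilon))+\frac{d(k)}{2}$; subtracting the Weyl term \eref{q12} gives the first equality in \eref{r25aar}, and the second equality is the imaginary part of \eref{r25b}, which one checks term by term to coincide with the explicit expansion \eref{11d} of $N^{\mathrm{Osc.}}$ (the limit $\epsilon\to0^+$ is needed since $\ui n\notin\mathfrak{S}$ for $n\in\nz$: for non--integer $k$ it equals the boundary value, at integer $k$ the still--convergent series of \eref{r25b}). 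The main obstacle throughout is this boundary step: on the critical line the $K_1$ Bessel functions lose their exponential decay and turn into the oscillatory $J_1,Y_1$, so the series in \eref{r25}, \eref{r25aa1} and \eref{r25b} are only conditionally convergent, and one genuinely needs Hardy's non--trivial result on \eref{r25d} together with Riesz's continuity theorem \ref{sr1} to carry the identities from $\rz_{>0}$ up to $\ui\rz\setminus\ui\gz$; the reorganization into the periodic--orbit product and the extraction of real and imaginary parts are then routine.
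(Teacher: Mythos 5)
Your proposal is correct and follows essentially the same route as the paper: it takes the $K_1$-representation \eref{r7} valid on $\rz_{>0}$, uses absolute and locally uniform convergence on $\mathcal{S}$ plus the analyticity of $\ln Z$ on $\mathfrak{S}$ and the identity theorem to get \eref{r25} on $\mathcal{S}$, then extends to $\overline{\mathcal{S}}\setminus\ui\gz$ via the split into a normally convergent part and the Dirichlet series \eref{r25d} handled by Hardy's convergence result and Riesz's theorem \ref{sr1}, and finally obtains (i) by regrouping the divisor sum over primitive orbits and repetitions and (iii) by the critical-line reduction \eref{r24}--\eref{r25a} combined with theorem \ref{r21a} and \eref{11d}. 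This is precisely the paper's argument, including the correct identification of the boundary passage as the only delicate step.
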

Thus, the identity \eref{r25} justifies that the limit process $\epsilon\rightarrow 0$ in \eref{r22}, and the corresponding limit process on the r.h.s. in \eref{r25} can be interchanged with the summation. Note that the series in \eref{r25} is not absolutely convergent on the critical line $\ui\rz$, in contrast to the case $\re s>0$, where the series \eref{r25} is due to the asymptotics \eref{rr3}, in particular the exponential  damping term therein, absolutely  and locally uniformly convergent. Therefore, \eref{r25} can be considered as an adiabatic and analytic regularization of $N^{\mathrm{Osc.}}(x)$ in \eref{11d}.
\section{The asymptotics of $\ln \det\left(-\Delta+s^2\right)$ for $s\rightarrow 0$ and $s\rightarrow\infty$}
\label{500aaa}
In order to derive the asymptotics of $\ln \det\left(-\Delta+s^2\right)\equiv\ln \Df\left(s^2\right)$ for $s\rightarrow0$, we use, due to theorem \ref{63}, the representation \eref{64}. 
%
%
%
%
We then obtain for $\ln \Df\left(s^2\right)$, $|s|<1$,
\be
\label{1101}
\eq{
\ln \Df\left(s^2\right) &=-\ln(2\pi)+\sum\limits_{n=1}^{\infty}d(n)\ln\left(1+\frac{s^2}{n^2}\right)\\
         &=-\ln(2\pi)+\sum\limits_{m=1}^{\infty}\frac{(-1)^{m+1}}{m}\left[\sum\limits_{n=1}^{\infty}\frac{d(n)}{n^{2m}}\right]s^{2m}\\
         &=-\ln(2\pi)+\sum\limits_{m=1}^{\infty}\frac{(-1)^{m+1}\zeta^2(2m)}{m}s^{2m}\\
         &=-\ln(2\pi)+\left(\frac{\pi^2}{6}\right)^2s^2+\Or\left(s^4\right)\\
         &=\ln \Df(0)+T(0)s^2+\Or\left(s^4\right), \quad s\rightarrow0,
}
\ee
where we have used the absolute convergence of the double series in the second line in order to interchange the order of summation (in the third line we have used \eref{101} and in the last line \eref{300}).

By \eref{r10} we deduce the relation
\be
\label{w1}
\ln\Df\left(s^2\right)=\pi s\ln s+(2\gamma-1)\pi s+\frac{1}{2}\ln s+\ln Z(s),\quad s>0.
\ee
Using \eref{r22} and \eref{rr3} we can estimate $\ln Z(s)$ similarly as $T^{\po}(k)$ in section \ref{32aaa} as, $s\rightarrow\infty$,
\be
\label{aq1}
\fl\eq{
&\ln Z(s)=\\
&\hspace{0.5cm}-\sqrt{2}s^{\frac{1}{4}}\sum\limits_{n=0}^{\infty}\frac{d(n)}{n^{\frac{3}{4}}}\exp\left(-2\pi\sqrt{2ns}\right)\cos\left(2\pi\sqrt{2ns}+\frac{3\pi}{8}\right)\left(1+\Or\left(\frac{1}{\sqrt{ns}}\right)\right).
}
\ee
Again, the very first summand in the series \eref{aq1} determines the leading term of the asymptotics of $\ln Z(s)$ for $s\rightarrow\infty$, thus we can infer the following theorem:
\begin{theorem}
\label{508}
The logarithm of the functional determinant $\eref{62}$ (see theorem $\ref{63}$) possesses the asymptotics $\eref{1101}$ for $s\rightarrow0$ and
\be
\label{509}
\hspace{-2.4cm}\eq{
\ln \det\left(-\Delta+s^2\right) &=\pi s\ln s+\pi(2\gamma-1)s+\frac{1}{2}\ln s\\
&\hspace{-2.5cm} -\sqrt{2}s^{\frac{1}{4}}\exp\left(-2\pi\sqrt{2s}\right)\cos\left(2\pi\sqrt{2s}+\frac{3\pi}{8}\right)\left(1+\Or\left(\frac{1}{\sqrt{s}}\right)\right)\quad \mbox{for}\quad  s\rightarrow\infty.
}
\ee
\end{theorem}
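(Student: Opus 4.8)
The plan is to establish the two limits by separate arguments, each resting on a representation of $\ln\Df\left(s^2\right)$ already derived above. \textbf{For $s\to0$,} I would start from the Weierstra\ss\ product of theorem~$\ref{63}$, i.e.\ $\eref{64}$, take its principal-branch logarithm, and expand each factor as $\ln\bigl(1+\tfrac{s^2}{n^2}\bigr)=\sum_{m\ge1}\tfrac{(-1)^{m+1}}{m}\tfrac{s^{2m}}{n^{2m}}$, which is valid for $\left|s\right|<1$. The resulting double series over $(m,n)$ is absolutely convergent in this range, because already for $m=1$ one has $\sum_n d(n)n^{-2}=\zeta^2(2)<\infty$ by $\eref{9a}$, so Fubini permits interchanging the order of summation; collecting the inner sums via $\eref{101}$ produces the power series $\ln\Df\left(s^2\right)=-\ln(2\pi)+\sum_{m\ge1}\tfrac{(-1)^{m+1}\zeta^2(2m)}{m}s^{2m}$. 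Isolating the first two terms and using $\Df(0)=\tfrac{1}{2\pi}$ together with $T(0)=\zeta^2(2)=\left(\tfrac{\pi^2}{6}\right)^2$ from $\eref{300}$ gives precisely $\eref{1101}$. (This part is essentially the chain of identities already displayed; it only needs to be organised.)

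\textbf{For $s\to\infty$,} I would use $\eref{w1}$, i.e.\ theorem~$\ref{r9a}$ restricted to $s>0$, which reads $\ln\Df\left(s^2\right)=\pi s\ln s+(2\gamma-1)\pi s+\tfrac12\ln s+\ln Z(s)$; since its first three terms are exactly the non-oscillatory part of $\eref{509}$, the problem reduces to the large-$s$ behaviour of $\ln Z(s)$. For that I would insert the Kelvin-function series representation $\eref{r7}$ of $\ln Z(s)$ on $s>0$ (see also theorem~$\ref{r25aa}$) and apply the large-argument asymptotics $\eref{rr3}$ of $K_1$ at the arguments $2(2\pi)^{3/2}\sqrt{s/\lfp}\,\ue^{\pm\ui\pi/4}=4\pi\sqrt{ns}\,\ue^{\pm\ui\pi/4}$. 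Because $\re\bigl(\ue^{\ui\pi/4}\bigr)=1/\sqrt2$, the factor $\ue^{-z}$ in $\eref{rr3}$ yields the exponential damping $\ue^{-2\pi\sqrt{2ns}}$, an algebraic prefactor of order $s^{1/4}n^{-3/4}$, and an oscillatory cosine; bookkeeping the constants reproduces the series $\eref{aq1}$.

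The only step demanding genuine care is showing that the $n=1$ summand of $\eref{aq1}$ dominates, i.e.\ that the tail $\sum_{n\ge2}$ is absorbed into the stated relative error $\Or\bigl(1/\sqrt s\bigr)$. This follows from $d(n)=\Or(n^{\epsilon})$ in $\eref{9a}$ and the bound $\ue^{-2\pi\sqrt{2ns}}\le\ue^{-4\pi\sqrt{s}}$ for $n\ge2$: since $4>2\sqrt2$, the tail is $\Or\bigl(\ue^{-(4-2\sqrt2)\pi\sqrt s}\bigr)$ times a power of $s$, which decays faster than any power times $s^{-1/4}\ue^{-2\pi\sqrt{2s}}$; the termwise interchange of summation with the improper Riemann integral in $\eref{r8}$ underlying $\eref{r7}$ was already justified in section~$\ref{ar1}$ by the exponential damping in $\eref{rr3}$ on $\re s>0$. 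Retaining the $n=1$ term of $\eref{aq1}$, equal to $-\sqrt2\,s^{1/4}\ue^{-2\pi\sqrt{2s}}\cos\bigl(2\pi\sqrt{2s}+\tfrac{3\pi}{8}\bigr)\bigl(1+\Or(1/\sqrt s)\bigr)$, and substituting into $\eref{w1}$ yields $\eref{509}$, completing the proof.
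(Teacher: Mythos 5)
Your proposal is correct and follows essentially the same route as the paper: the $s\to0$ part is exactly the expansion of the Weierstra\ss\ product \eref{64} leading to \eref{1101}, and the $s\to\infty$ part uses \eref{w1} together with the $K_1$-representation \eref{r7} of $\ln Z(s)$ and the asymptotics \eref{rr3} to obtain \eref{aq1}, with the $n=1$ summand dominating. Your explicit bound on the tail ($n\geq2$) is merely a more detailed justification of the dominance step the paper asserts, so there is nothing substantively different to compare.
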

We remark that the ``Weyl term'' of the asymptotic estimate \eref{509} agrees, of course, with the three leading terms in \eref{509a} derived by using the method of \cite{Sarnak:1987} and the trace formula \eref{13} for the quantum graph respectively by an evaluation of ${\Zl}(s,k)$ at $s=1$. 
\section{Conclusion}
\label{g1}
In this paper we studied in detail an infinite quantum graph $\mathfrak{G}$ which is the limit in the strong resolvent sense of a one-parameter family of Laplacians $(\Delta,\Dz_\kappa(\Delta))$, $\kappa>0$, acting on an infinite chain of edges with edge lengths $l_n=\frac{\pi}{n}$. This one-parameter family of Laplacians interpolates between the Kirchhoff Laplacian corresponding to $\kappa=0$ having a purely continuous spectrum and the Dirichlet Laplacian corresponding to $\kappa=\infty$ possessing a purely discrete spectrum. We proved that for each parameter $\kappa>0$ the operator $(\Delta,\Dz_\kappa(\Delta))$ possesses a purely discrete spectrum. 

The infinite Dirichlet quantum graph $\mathfrak{G}$ is characterized by pure Dirichlet boundary conditions at the vertices. The corresponding classical system describes a single particle moving freely on the edges and experiencing a hard wall reflection at the vertices. It describes therefore a regular classical system. Since the spectral multiplicities are given by the divisor function $d(n)$, this quantum graph is directly related to the well-known and still unsolved Dirichlet divisor problem, and thus the spectral counting function $N(x)$ coincides with the divisor summatory function. The Vorono\"i summation formula \eref{13} plays the role of an exact trace formula for this system. For the Dirichlet quantum graph we derived explicit formulae for the trace of the wave group $\Th(t)$, the trace of the generalized heat kernel ${\T}(t,k)$, the trace of the resolvent $T(k)$ and investigated their asymptotic behaviour for the arguments tending to zero or infinity. Furthermore, we examined various spectral zeta functions, $\Zh(s,k)$, $\tZh(s,k)$ and ${\Zl}(s,k)$, which can be used to define the spectral determinant $\det(-\Delta+k^2)$ in the sense of \cite{Hawking:1977}. For the spectral determinant $\det(-\Delta+k^2)$ we derived the asymptotics for $k\rightarrow\infty$ and calculated the Taylor series for $|k|<1$. These calculations are based on the investigation of a Selberg-like zeta function $Z(s)$ which is defined in terms of the periodic orbits of the system. We derived explicit expressions for $Z(s)$ valid on various domains in $\kz$ and established a connection with the oscillatory part $N^{\mathrm{Osc.}}(x)$ (remainder of the Dirichlet summatory function) of $N(x)$. Furthermore, we obtained a functional equation for $Z(s)$ which resembles that of the famous Selberg zeta function for the hyperbolic Laplacian acting on compact Riemann surfaces. Most interestingly $Z(s)$ fulfills the analogue of the Riemann hypothesis.
\subsection*{Acknowledgments}
We are very grateful to  Robin Nittka and the anonymous referees for very useful hints. S. E. would like to thank the graduate school ``Analysis of complexity, information
and evolution'' of the Land Baden-Württemberg for the stipend which has enabled this paper.
\\
\appendix
{\small
\bibliographystyle{unsrt}
\bibliography{litver1}}
\parindent0em
\end{document}